\DeclareMathOperator*{\argmin}{arg\,min}
\newtheorem{claim}{Claim}
\newtheorem{fact}{Fact}
 \newtheorem{example}{Example} 
\newtheorem{definition}{Definition}
\newcommand{\brackets}[1]{{\left\langle #1 \right\rangle}}
\newcommand{\bbR}{{\mathbb{R}}}
\newcommand{\mst}{\mathsf{mst}}
\newcommand{\cost}{\mathsf{cost}}
\newcommand{\walk}{\mathsf{walk}}
\newcommand{\bfw}{\mathbf{w}}
\newcommand{\calA}{\mathcal{A}}
\newcommand{\calF}{\mathcal{F}}
\newcommand{\calP}{\mathcal{P}}
\newcommand{\calT}{\mathcal{T}}
\newcommand{\bigO}{{\mathcal O}}
\newcommand{\calK}{{ K }}
\newcommand{\OPT}{\mathsf{OPT}}
\newcommand{\FESI}{\textsf{FESI}\xspace}
\newcommand{\PGP}{\textsf{pruneGDP}\xspace}
\newcommand{\HRA}{\textsf{HGR}\xspace}
\newcommand{\HRAwone}{\textsf{HGR}-$w_1$\xspace}
\newcommand{\HRAapprox}{\textsf{HGR-approx}\xspace}
\title{A Hierarchical Grouping Algorithm for the Multi-Vehicle Dial-a-Ride Problem}
\author{
Kelin Luo\\ University of Bonn \\
\texttt{kluo@uni-bonn.de}
        \and 
        Alexandre M. Florio\\ Polytechnique Montreal\\
        \texttt{aflorio@gmail.com} 
        \and Syamantak Das\\  Indraprastha Institute of Information Technology Delhi\\
\texttt{syamantak@iiitd.ac.in} 
 \and Xiangyu Guo\\  University at Buffalo\\
\texttt{xiangyug@buffalo.edu} 
}
\begin{document}
    \maketitle

\maketitle
 \begin{abstract}
  Ride-sharing is an essential aspect of modern urban mobility. In this paper, we consider a classical problem in ride-sharing -- the Multi-Vehicle Dial-a-Ride Problem (Multi-Vehicle DaRP). Given a fleet of vehicles with a fixed capacity stationed at various locations and a set of ride requests specified by origins and destinations, the goal is to serve all requests such that no vehicle is assigned more passengers than its capacity at any point along its trip.    
  We propose an algorithm \HRA, which is the \emph{first non-trivial approximation algorithm} for the Multi-Vehicle DaRP. The main technical contribution is to reduce the Multi-Vehicle DaRP to a certain capacitated partitioning problem, which we solve using a novel hierarchical grouping algorithm. 
  Experimental results show that the vehicle routes produced by our algorithm not only exhibit less total travel distance compared to state-of-the-art baselines, but also enjoy a small in-transit latency, which crucially relates to riders' traveling times. This suggests that \HRA enhances rider experience while being energy-efficient.
\end{abstract}

    \thispagestyle{empty}

\section{Introduction}\label{sec:intro}

Over the last decade, ride-sharing has emerged as one of the most prominent aspects of shared economy~\cite{clewlow2017disruptive}. In a typical ride-sharing scenario, riders with similar routes use a common vehicle for their commutes. The popularity of this framework has soared in recent years owing to the fact that all major urban taxi providers like Uber, Lyft and Didi Chuxing have introduced a `carpooling' option. Economic benefits of ride-sharing are enjoyed by both the riders and providers: riders pay less for the same commute compared to hiring an individual taxi whereas the provider earns more profit in a single ride. Perhaps even more importantly, there is a potentially huge positive impact of ride-sharing on the environment~\cite{cai2019environmental}. Ride-sharing results in overall less fuel consumption and reduces air pollution by decreasing the number of vehicles on the road.

In order to reap the most benefit out of ride-sharing, it is essential to determine an efficient policy of assigning riders to vehicles. Owing to the large scale of the problem and the various constraints it might pose, there has been an increasing body of work in Computer Science and Operations Research that targets to design efficient algorithms to carry out such a task. For instance, there has been significant work focusing on maximizing revenue of the shared mobility provider~\cite{TongZZCYX18-GDP,ZhengCC19,ZhengCY18,ZengTSC21}, minimizing commute distance/time~\cite{BeiZ18-carsharing,luo2020approximation,ZengTC19-LMD} and even optimizing complex social utilities of both servers and requests~\cite{PengHL17}.

In this paper, we consider a classical problem in this area called the Dial-a-Ride Problem (DaRP)~\cite{PaepeLKSSL04}. Informally, the mobility provider has a fleet of vehicles at their disposal, each with a certain capacity. There is a set of ride requests specified by origins and destinations. The algorithmic task is to assign every rider to exactly one vehicle and determine routes for the vehicles under the constraint that at any point during the trip, the vehicle must not accommodate more riders than its capacity. Finally, the goal is to minimize the total travel distance of all the vehicles. 

\vspace{0.5\baselineskip}
\noindent
\textbf{Heuristics for DaRP}. Several heuristic approaches have been proposed for the DaRP over the years (see, for example, the survey~\cite{DaRPSurvey18}). We highlight two recent algorithms which are state-of-the-art and have been experimentally established to be more effective than all the popular heuristics designed previously.  The first one called \PGP was introduced by ~\cite{TongZZCYX18-GDP}.  
This is a fast algorithm that exploits a popular approach called insertion which has been utilized in solving dial-a-ride and its variants~\cite{jaw1984solving,jaw1986heuristic,ma2013tshare,huang2014large,cheng2017utility}. Roughly speaking, the algorithm maintains a partial assignment of requests (and hence routes) for each vehicle. At every iteration, the algorithm determines the assignment of one unassigned request to a vehicle in a way that causes the minimal increment in total travel distance. The authors give an elegant $\bigO(n)$-time implementation of this subroutine and experimentally demonstrate the effectiveness of this heuristic over several previous heuristics like~\cite{huang2014large, ma2013tshare}.

The second algorithm, \FESI~\cite{ZengTC19-LMD}, is an approximation algorithm for the somewhat complementary objective of minimizing the makespan, that is, the maximum travel distance of any vehicle. In fact, the authors claim through empirical evidence that \FESI is comparable to \PGP even for the total travel distance objective although it does not explicitly aims to minimize this. 

Although these algorithms have been experimentally demonstrated to be effective and scalable, none of these works provide a formal worst case performance guarantee on the objective function value of total travel distance. In fact, for both the algorithms, one can easily construct instances where their performance could be arbitrarily bad compared to an optimal solution. 

\vspace{0.5\baselineskip}
\noindent
\textbf{Approximation Algorithms.} There has been significant interest in the theoretical computer science community regarding DaRP. The problem is easily seen to be NP-hard even in the special case when every request has its origin and destination co-located -- this is the classical Travelling Salesman Problem. For the special case of a \emph{single vehicle} with capacity $\lambda$ and $n$ riders, two independent algorithms were given by Charikar and Raghavachari~\cite{charikar1998finite} and later on by Gupta et al.~\cite{gupta2010dial} with approximation guarantees of $\bigO(\sqrt{\lambda}\log n)$ and $\bigO(\sqrt{\lambda}\log^2 n)$, respectively. These are the best known theoretical guarantees so far. However, there is no approximation algorithm reported in the literature for the case of \emph{multiple vehicle} DaRP that we consider.

The above discussion motivates the following question: \emph{Is there an algorithm for multiple vehicle DaRP which is provably good compared to the optimal solution 
in the worst case ?}
%and also offers practical efficiency?} 
In this paper, we give the first non-trivial approximation algorithm for the multiple vehicle DaRP with an approximation ratio of $\bigO(\sqrt{\lambda}\log n)$. Our approximation guarantee, perhaps surprisingly, does not depend on the number of vehicles and exactly matches the guarantee for the single vehicle case stated above. Our technique at a high level resembles the approach used in~\cite{gupta2010dial}. However, we need several non-trivial modifications and novel ideas to handle the multi-vehicle scenario. At the core, our algorithm uses a novel \emph{hierarchical partitioning} of the rider set into groups which can be routed at a `small cost'. These groups are then carefully assigned to vehicles followed by a routing phase for each vehicle. Whereas, for the single vehicle case, such a partitioning can be found by a relatively simple greedy approach, our algorithm needs to heavily utilize bipartite matching and ideas from routing literature which help us to bound from above the total travel distance of our algorithm. Our \textbf{key contributions} are as follows:
\begin{itemize}
    \item We give the \emph{first non-trivial approximation algorithm} for the mutiple vehicle DaRP. Our approximation factor is $\bigO(\sqrt{\lambda}\log n)$, where $\lambda$ is the capacity of the vehicles and $n$ is the number of riders.
    \item Extensive experiments have been carried out to establish the practical efficacy of our algorithm. We compare our algorithm with state-of-the-art heuristics for DaRP. Our method outperforms all these algorithms on total travel distance by a significant margin of up to 30\% on synthetic and real-world datasets.
    \item Our theoretical guarantees are valid only for the objective of minimizing the total travel time of the vehicles. However, in our experiments, we also consider the \emph{in-transit latency of the riders}. This measures the amount of time a rider spends in the vehicle and can be thought of as a metric of rider experience. Empirical evidence shows that our proposed algorithm leads to an average in-transit latency up to 50\% less than other DaRP algorithms.
\end{itemize}

\section{Preliminaries}\label{sec:prelim}

\paragraph{Problem definition.} Let $(V, d)$ be a given metric space, $R$ be a set of $n$ requests, where each request $r_i=(s_i,t_i)\in V \times V$ consists of a pickup location $s_i$ and a drop-off location $t_i$. We also have a set of $m$ vehicles $K$, where each vehicle $k\in K$ has a depot $p_k$ and a capacity $\lambda$. Let $V_K$ denote the multiset of all vehicle depot locations.   
The goal is to find an \emph{assignment} $\calA$ from vehicles to requests. An assignment is a collection of \emph{walk}s\footnote{A walk is a finite-length sequence of vertices $v_1,v_2,...,v_N\in V$ for some $N$, and the cost (length) of the walk is defined as $\sum_{i=1}^{N-1}d(v_i,v_{i+1})$.} in $V$, each of which starts from a distinct vehicle depot, and delivers a subset of requests from their pickup locations to drop-off locations.  

\begin{definition}[Vehicle Walk] Given a set of requests $R_k$ assigned to a vehicle $k\in \calK$, a \textbf{vehicle walk} is a sequence $$\walk_k = \brackets{\ell_0= p_k, \ell_1, \ell_2, \cdots, \ell_t},$$ starting at the origin location of vehicle $k$, where $\ell_i\in \{s_r : r\in R_k\} \cup \{t_r : r\in R_k\}, 1\leq i\leq t$. A vehicle walk $\walk_k$ is \textbf{feasible} if \textbf{(i)} $\forall r\in R_k$, $s_r$ appears before $t_r$ in $\walk_k$ and \textbf{(ii)} at any time point of the vehicle walk the corresponding vehicle carries at most $\lambda$ requests. Further, the \textbf{cost} of a walk $\walk_k$ is defined as $\cost(\walk_k) = \sum_{i=0}^{t-1} d(\ell_{i}, \ell_{i+1})$.
\end{definition}
 
A \emph{feasible assignment} should deliver all requests, while ensuring that all vehicle walks are feasible.
The objective is to minimize the total travel distance of all vehicle walks in a feasible assignment $\calA$, denoted as $\cost(\calA)$, i.e. $\min \sum_{\walk_k\in \calA} \cost(\walk_k)$.

\begin{definition}[Multi-Vehicle DaRP]
\label{def:dialarideproblem}
Given a metric space $(V, d)$, 
a set of $n$ requests $R:=\{s_i, t_i\}_{i=1}^{n}\in V^2$, and a set of $m$ vehicles with locations $V_K:=\{p_k\}_{k=1}^{m} \in V$ and a capacity $\lambda$, find a set of \textbf{minimum length vehicle walks} of the vehicles starting at $\{p_k\}_{k=1}^{m} \in V$ that moves each  request $r_i$ from its origin $s_i$
to its destination $t_i$ such that each vehicle
carries at most $\lambda$ requests at any point along the walk.
\end{definition}

We say that a request is preempted if, after being picked up from its origin, it is left temporarily at some vertex before being picked-up again and delivered to its destination. In our setting this is not allowed, as we study the \textbf{non-preemptive} DaRP. Finally, when referring to set of requests, we view it both as a set of pairs in $V\times V$ and as a subset of $V$, where in the latter case it contains all pickup and drop-off locations appearing in the requests. Which viewpoint is being used should be clear from the context.

\begin{example} 
\label{exa:instance}
We use Figure~\ref{fig:example_instance} as a running example. Given $2$ vehicles originally located at $p_1$ and $p_2$, and $8$ customer requests $r_i$ ($i\in [8]$) where customer $i$ aims to travel from $s_i$ to $t_i$; The vehicle capacity is $4$ and the distance metric is denoted as $d$. We would like to find two vehicle walks starting at $\{p_k\}_{k=1}^{2} $ that moves each $r_i$ from its origin $s_i$ to its destination $t_i$ such that each vehicle carries at most $4$ requests at any point along the walk. 
There are  $2^8$ possible ways to assign the $8$ requests to the two vehicles. And a vehicle can have many different orders to serve the assigned requests. For example, if $r_1, r_2$ are assigned to vehicle $1$, then there are $6$ feasible walks: $$\brackets{p_1, s_1, t_1,  s_2, t_2},\brackets{p_1, s_1,   s_2,t_1,t_2}, \brackets{p_1, s_1,  s_2, t_2, t_1},$$
 $$\brackets{p_1, s_2, t_2,  s_1, t_1},\brackets{p_1, s_2,   s_1,t_2,t_1}, \brackets{p_1, s_2,  s_1, t_1, t_2}.$$ 
The cost of serving requests is equal to the total length of the vehicle walks. For example, the cost of a walk $\brackets{p_1, s_1, t_1,  s_2, t_2}$ is equal to  $d(p_1, s_1)+d(s_1, t_1)+d(t_1, s_2) +d(s_2, t_2)$. Then, the optimal solution is the minimum length vehicle walks that serve all requests.
\end{example}

\begin{figure}[htbp]
	\vspace*{-1ex}
	\centerline{\includegraphics[width=0.5\linewidth]{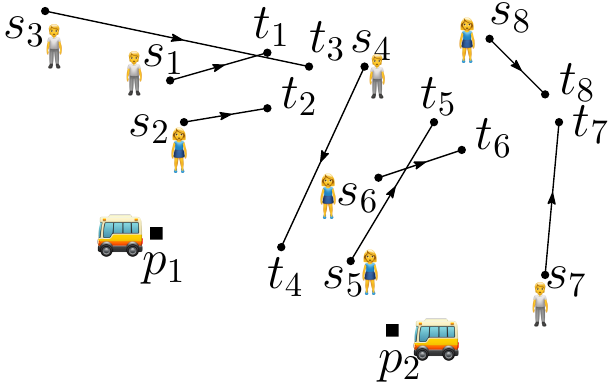}} 
	\caption{DaRP example}
	\label{fig:example_instance}
		 
\end{figure}

\section{\HRA : A New Algorithm for DaRP}\label{sec:algos} 

In this section, we introduce a novel $O(\sqrt{\lambda}\log n)$-approximation algorithm, which we call \emph{\underline{H}ierarchical \ \underline{G}rouping and \underline{R}outing} (HGR), for the Multi-Vehicle DaRP. 

Before going in to the details of the algorithm and proof of approximation ratio, we give some high level ideas about our main techniques.  
The first idea towards designing this algorithm is to partition the requests into disjoint \emph{groups} of size at most $\lambda$. The intent is - A vehicle starts empty, serves one group entirely before moving on to the next one. Note that an optimal solution to this problem does not necessarily follow this strategy. However, the authors in 
~\cite{gupta2010dial} shows that there always exists a \emph{near-optimal} solution which follows such a strategy. The following Fact~\ref{fact:construction} makes this formal.% (See Fact~\ref{fact:construction} in Appendix~\ref{app:solutionstructure}). 
\begin{fact}[Solution Structure]{\rm \cite{gupta2010dial}}
\label{fact:construction}
Given any DaRP instance, there exists a feasible walk $\tau$ satisfying the following conditions:
\begin{itemize} 
\item $\tau$ can be split into a set of segments $\{S_1,..., S_t\}$ 
%(i.e., $\tau = S_1 \cdot S_2 ... S_t$) 
where each segment $S_i$
services a set $O_i$ of at most $\lambda$ requests such that $S_i$
is a path that first picks up each request in $O_i \subseteq R$
and then drops each of them.
\item  The length of $\tau$ is at most $O(\log n)$ times the length of an optimal walk.
\end{itemize}
\end{fact}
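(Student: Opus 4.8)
The plan to establish Fact~\ref{fact:construction} is to start from an \emph{optimal} single-vehicle walk $\tau^*$ (of length $\OPT$, starting at the depot $p$), and surgically chop it into the required segment structure while losing only a logarithmic factor. First I would parametrize $\tau^*$ by arc length, viewing it as a curve on $[0,\OPT]$ with $\tau^*(0)=p$. For each request $r_i$, let $[a_i,b_i]\subseteq[0,\OPT]$ be the time interval during which $r_i$ is on board, i.e.\ from the moment $\tau^*$ visits $s_i$ to the moment it visits $t_i$. The key structural fact is that the capacity constraint is exactly the statement that at every time $\theta\in[0,\OPT]$ at most $\lambda$ of the intervals $[a_i,b_i]$ contain $\theta$.

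Next I would build a balanced binary \emph{time tree}: recursively split the multiset of the $2n$ pickup/drop-off events, so that each node $v$ owns a time window $[\ell_v,r_v]$ and a split point $m_v$ (a median event) cutting it into the windows of its two children; the depth is $O(\log n)$, and building down to single events makes every request's two events distinct leaves. Assign each request $r_i$ to the node $v$ that is the least common ancestor of its pickup event and its drop-off event; then $s_i$ is visited in $[\ell_v,m_v]$ and $t_i$ in $(m_v,r_v]$. Call the requests assigned to $v$ its \emph{crossers} $O_v$; since every crosser of $v$ is on board at time $m_v$, the capacity observation gives $|O_v|\le\lambda$ for free, and each request is assigned to exactly one node. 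For each non-empty node $v$, define the segment $S_v$ to be the sub-walk of $\tau^*$ over $[\ell_v,r_v]$, re-interpreted so that it services precisely the requests $O_v$ — their pickups (all in the first half $[\ell_v,m_v]$) are collected as they are passed, and their drop-offs (all in the second half $(m_v,r_v]$) delivered afterwards; after metric shortcutting, $S_v$ is a path that first picks up all of $O_v$ and then drops all of $O_v$, of cost at most the arc length of $\tau^*$ on $[\ell_v,r_v]$. Because the windows at any fixed depth partition $[0,\OPT]$, the segments at one depth have total cost at most $\OPT$, so $\sum_v\cost(S_v)\le O(\log n)\cdot\OPT$.

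Finally I would concatenate the segments into one feasible walk $\tau$, ordering them by depth and, within a depth, left to right, with a direct connector inserted between consecutive segments (and one from $p$ to the first). Within a fixed depth, consecutive non-empty segments have disjoint, left-to-right-ordered windows, so the connector between them has length at most the arc length of $\tau^*$ on the gap between those windows; the gaps at a depth are disjoint, so they contribute at most $\OPT$ per depth. The $O(\log n)$ connectors that jump between consecutive depths — and the initial one from $p$ — each cost at most $\OPT$, since $d(\tau^*(x),\tau^*(y))\le|x-y|\le\OPT$. Adding up, $\cost(\tau)\le O(\log n)\cdot\OPT$; and $\tau$ is feasible because each segment carries only its $\le\lambda$ crossers with all pickups before all drop-offs, connectors carry nothing, and every request is serviced in exactly one segment.

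I expect the main obstacle to be precisely this last step — bounding the connector cost — because the natural-looking orderings of the segments fail: a DFS of the time tree, or grouping each node's segment with its subtree, produces individual connectors as long as a whole window and destroys the $O(\log n)$ bound, so the order must be chosen so that connector lengths telescope against pairwise-disjoint portions of $\tau^*$. A secondary point needing care is the event-boundary bookkeeping in the time tree (ties among event times, the degenerate case $s_i=t_i$, a request landing exactly on a split point), which must be handled to make "$|O_v|\le\lambda$" and "each request serviced exactly once" fully rigorous; these are routine but not vacuous.
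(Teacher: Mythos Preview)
The paper does not give its own proof of Fact~\ref{fact:construction}; it is imported from \cite{gupta2010dial} as a black box (with only the remark that the single-vehicle statement extends to multiple vehicles). So there is no in-paper argument to compare your proposal against.

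That said, your plan is essentially the argument behind the cited result: encode each request as the time interval during which it is on board in $\tau^*$, observe that the capacity constraint says every point of $[0,\OPT]$ is covered by at most $\lambda$ such intervals, build a balanced binary tree on the $2n$ pickup/drop-off events, and assign each request to the node whose split point its interval straddles. The bound $|O_v|\le\lambda$ and the depth bound $O(\log n)$ follow exactly as you say, and the per-depth accounting of segment cost against $\OPT$ is the right idea.

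One small slip worth flagging: once you shortcut $S_v$ to the locations of $O_v$ only, its endpoints move strictly inside the window $[\ell_v,r_v]$, so the within-depth connector is \emph{not} literally ``at most the arc length on the gap between those windows''. The clean fix is to note that, at any fixed depth, laying out the shortcut segments and the connectors in left-to-right order yields a walk that is itself a metric shortcutting of a single full pass over $\tau^*$; hence segments \emph{plus} within-depth connectors together cost at most $\OPT$ per depth. Combined with the $O(\log n)$ between-depth connectors of cost $\le\OPT$ each, your final $O(\log n)\cdot\OPT$ bound stands. The boundary/tie issues you flag are indeed routine.
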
 

 Although the above fact has been only proven for the single-vehicle DaRP in~\cite{gupta2010dial}, it is not difficult to generalize this to the case of   
 Multi-Vehicle DaRP. The authors in ~\cite{gupta2010dial} exploits the above fact effectively to design a \emph{greedy} algorithm for the single vehicle case which works roughly as follows. The algorithm is iterative, where, at each iteration, a group of $\lambda$ requests (except possibly at the last iteration which might have less than $\lambda$ requests) are formed. The criteria to form a new group is that among the remaining requests, they can be served by travelling the \emph{minimum total distance}. A significant challenge in~\cite{gupta2010dial} was to design an algorithm which makes the above greedy choice in each iteration. This requires them to solve a highly non-trivial problem which they call $\lambda$-forest and which has connections with a notoriously hard problem called the $\lambda$-densest sub-graph problem. Indeed, the main contribution of the above paper was to give a $O(\sqrt{\lambda})$-approximation algorithm for $\lambda$-forest. This was combined with a standard argument from approximation algorithms literature to show that the overall approximation guarantee is $O(\lambda\log^2 n)$. However, this approach poses the following challenges in being effective as a practical algorithm for the multiple-vehicle case.
 
 \begin{enumerate}
     \item The algorithm they use to solve $\lambda$-forest,  although giving reasonable approximation guarantees, is highly complicated and not practical. In fact, we had implemented this approach for a single vehicle and found the running time scaling prohibitively with the number of requests (For 200 requests and capacity 16, it takes 1900 seconds, and for 500 requests and capacity 4, it takes 1500 seconds. We contrast this with our algorithm can handle say 10k requests with capacity 32 in about 100 seconds).
     
     \item It is not immediately clear how to adopt the above approach to the multiple vehicle case. Although Fact~\ref{fact:construction} still continues to be true, a major challenge here is to determine which group to assign to which vehicle (note that this issue does not exist in the single vehicle case). So, any algorithm which aims to provide a theoretical guarantee needs to combine the grouping phase and the assignment phase while not incurring a lot of cost.
 \end{enumerate}
 In order to overcome these two issues, we avoid the \emph{local greedy} approach of~\cite{gupta2010dial} and develop a novel algorithm that exploits a more \emph{global} viewpoint. 
 Our \emph{core technical contribution} is to design a \emph{hierarchical grouping technique} which avoids solving complicated problems like $\lambda$-forest. Instead we use relatively simpler sub-routines like bipartite matching and minimum spanning trees and still manage to obtain the same approximation ratio as~\cite{gupta2010dial} for the multiple vehicle case. This not only makes our algorithm more efficient and relatively easier to implement, we are also able to avoid the additional $\log n$ factor incurred by~\cite{gupta2010dial} due to the iterative greedy approach.
 Now we present the technical details of our algorithm. 
 As mentioned above, our main idea is to develop a hierarchical clustering algorithm to solve the grouping problem. Indeed, we partition requests into groups of size $\le \lambda$ by considering the following capacitated grouping problem (see Definition~\ref{def:groupingproblem}), and give an approximation guarantee of $O(\sqrt{\lambda})$ (see Theorem~\ref{thm:partition} in Section~\ref{sec:alg_grouping}).

\begin{definition}[Capacitated Grouping Problem]
\label{def:groupingproblem}
Given an $n$-vertex metric space $(V, d)$ and requests $R:=\{s_i, t_i\}_{i=1}^{m}\in V^2$, find a set of \textbf{minimum length walks} that serves all requests and such that each walk covers at most $\lambda$ requests.
\end{definition} 

When talking about a walk in the capacitated grouping problem, it is always associated with the request group covered by it. 
A \emph{feasible partition} $\calP$ of $R$ partitions $R$ into groups of size at most $\lambda$.
A \textbf{walk} covering request group $P \in \calP $ is a sequence $\bfw_P = \brackets{\ell_1, \ell_2, \cdots, \ell_h}$ that traverses $P$, where $\ell_i\in \{s_r : r\in P\} \cup \{t_r : r\in P\}, 1\leq i\leq h$. A walk $\bfw_P$ is \emph{feasible} if $\forall r\in P$, $s_r$ appears before $t_r$ in $\bfw_P$. The cost of  $\bfw_P$ is denoted as $\cost(\bfw_P)=\sum_{i=1}^{h-1} d(\ell_i, \ell_{i+1})$.  The cost of partition $\calP$ is denoted as $\cost(\mathcal{P})= \sum_{P\in\calP}\min_{\text{feasible }\bfw_P}\cost(\bfw_P)$.

Our main algorithm (Algorithm~\ref{alg:main-algorithm}) for DaRP first treats the input as an instance of Capacitated Grouping Problem and solves it using Algorithm~\ref{alg:partition_notexact} to get a partition, then builds an actual route based on the partition. We now describe the main ideas in each step  (See an example in  Figure~\ref{fig:example_algorithm}).
  
 \vspace{-1mm}
\begin{algorithm}%[H] 
\begin{algorithmic}[1]
\small
\REQUIRE Request set $R$, Vehicle locations $V_K$ and capacity $\lambda$
\ENSURE 
A feasible assignment $\mathcal{A}$ 

\STATE $\mathcal{P} \gets$ \textsc{Hierarchical Grouping}$(R, \lambda)$ \COMMENT{Alg.~\ref{alg:partition_notexact}} 
    
\STATE $\mathcal{A}\gets$ \textsc{Routing}($R, \calP, V_K$) \COMMENT{Alg.~\ref{alg:assignandroute}} 
   
   \RETURN   $\mathcal{A}=\{\walk_k: k\in \calK\}$
 \caption{\small{\textsc{Hierarchical Grouping and Routing} (HGR)}}
 	\label{alg:main-algorithm}
\end{algorithmic}
\end{algorithm}

 \vspace{-1mm}

In Step 1, the \textsc{Hierarchical Grouping} algorithm partitions requests into groups such that the total length of walks covering the partition is not too large compared with the optimal solution. To achieve this goal, we develop a non-trivial two-layer hierarchical grouping technique: in the outer layer, iteratively combine two clusters of requests into one cluster; inside each cluster, we form groups to ensure that closer requests are grouped together and far-apart requests are divided into separate groups. 
 
%To achieve this goal, we develop a non-trivial two-layer hierarchical grouping technique as follows: In the outer layer, we form sub-clusters hierarchically such that by combining $2$ clusters of the $(i-1)$-th iteration, the number of requests in each cluster is $2^i$ in the $i$-th iteration, until each cluster contains $\lambda$ requests\footnote{It also works for $\lambda$ which is (possibly) not exact powers of 2: for an arbitrary $\lambda$ with $2^i<\lambda< 2^{i+1}$, the process ends when each cluster contains $2^i$ requests. It does not affect the analysis.}; In the inner layer, we form groups inside each cluster to ensure that closer requests are grouped together and far-apart requests are divided into separate groups.  

%The key idea is to define a suitable edge cost on a graph whose vertices represent the clusters formed in each iteration, and each iteration is carried out by computing a minimum-\emph{cost} matching on this graph. Notice that each iteration doubles the number of requests in each cluster and initially each cluster contains exactly one request. Since at the end each cluster contains no more than $\lambda$ requests, the hierarchical grouping process will continue for $\lfloor \log \lambda \rfloor$ iterations. The design of the edge cost function of the graph needs to be very careful, such that we can bound the cost of covering the groups with respect to the optimal cost.  

In Step 2, we use the partition obtained in Step 1 to design actual routes for all vehicles. The idea is to view each group as a single vertex, and compute a cheap spanning forest to assign vehicles to groups. The forest is computed such that each tree in it is rooted as some vehicle location of $V_K$, and this vehicle will traverse the tree to serve its requests in a group-by-group manner. We obtain the following result. 

\begin{restatable}{theorem}{ThmApproxDaRP}
\label{thm:approx-darp}
 Given a Multi-Vehicle DaRP with set of requests $R$ and set of vehicles $\calK$, each with a capacity $\lambda$, the HGR algorithm runs in time $\bigO(|R|^3\log \lambda+ |R|^2 \lambda^2 \log \lambda)$ and returns a set of $|\calK|$ feasible walks serving all requests in $R$ such that the total travel distance is at most $\bigO(\sqrt\lambda \cdot 
 \log |R|)$ times that of an optimal solution to the Muil-Vehicle DaRP. 
\end{restatable}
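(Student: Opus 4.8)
The plan is to analyze the two phases of Algorithm~\ref{alg:main-algorithm} separately --- the grouping phase produces a partition whose (grouping) cost is $O(\sqrt\lambda\log|R|)$ times the DaRP optimum, and the routing phase turns such a partition into feasible vehicle walks at the expense of only a constant factor --- and then add the two bounds. Throughout, let $\OPT$ denote the optimal DaRP cost on the input instance and $\OPT_{\mathrm{grp}}$ the optimal cost of the Capacitated Grouping instance on the same $R$ and $\lambda$.

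\emph{Grouping phase.} I would first show $\OPT_{\mathrm{grp}}\le O(\log|R|)\cdot\OPT$. Take an optimal DaRP solution and apply Fact~\ref{fact:construction} in its multi-vehicle form (obtained either by concatenating the optimal vehicle walks into a single walk and invoking the single-vehicle statement, or by applying that statement to each vehicle walk): there is a collection of walks of total length $\le O(\log|R|)\cdot\OPT$ that decomposes into segments $S_1,\dots,S_t$, where $S_i$ is a "pick up all of $O_i$, then drop all of $O_i$" path with $|O_i|\le\lambda$, and $\{O_i\}_i$ partitions $R$. Since the $S_i$ are obtained by partitioning these walks, $\sum_i\cost(S_i)\le O(\log|R|)\cdot\OPT$; and as each $S_i$ is itself a feasible walk covering $O_i$, the feasible partition $\{O_i\}_i$ has grouping cost at most $\sum_i\cost(S_i)$, which proves the claim. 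Now Step~1 is exactly Algorithm~\ref{alg:partition_notexact}, for which Theorem~\ref{thm:partition} guarantees a partition $\calP$ with $\cost(\calP)\le O(\sqrt\lambda)\cdot\OPT_{\mathrm{grp}}\le O(\sqrt\lambda\log|R|)\cdot\OPT$.

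\emph{Routing phase.} For each group $P\in\calP$ fix a representative vertex $\rho_P\in P$ (say some pickup location of a request in $P$). Step~2 builds a spanning forest on $V_K\cup\{\rho_P:P\in\calP\}$ with every tree rooted at a depot; concretely one may contract all depots to a single super-node, take a minimum spanning tree of the contracted (still metric) distances, and un-contract, obtaining a forest $F$ in which each component is rooted at a distinct depot. I claim $\cost(F)\le\OPT$: the union of the walks of an optimal DaRP solution is a subgraph of total length $\OPT$ in which every request location --- in particular every $\rho_P$ --- lies in the connected component of some depot, so after contracting the depots and shortcutting all non-required vertices via the triangle inequality we get a connected subgraph on $\{\text{super-node}\}\cup\{\rho_P:P\in\calP\}$ of cost $\le\OPT$, whence the MST costs $\le\OPT$. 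The vehicle associated with tree $T_k$ then performs a DFS traversal of $T_k$ at cost $2\cost(T_k)$, and whenever it first reaches a representative $\rho_P$ it is empty and detours to serve $P$: it goes to the start of a cheapest feasible walk $\bfw_P$ for $P$, traverses $\bfw_P$, and returns to $\rho_P$. Since $\bfw_P$ visits all locations of $P$, the diameter of those locations is $\le\cost(\bfw_P)$, so this detour costs $\le 3\cost(\bfw_P)$, and summing over all groups gives $\le 3\cost(\calP)$. Each resulting walk is feasible: a group is picked up and dropped off in one contiguous detour during which the vehicle holds at most $\lambda$ requests with every $s_r$ preceding $t_r$, and the vehicle is empty between detours. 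Summing, the total travel distance is at most $2\sum_k\cost(T_k)+3\cost(\calP)\le 2\OPT+O(\sqrt\lambda\log|R|)\cdot\OPT=O(\sqrt\lambda\log|R|)\cdot\OPT$, as required.

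\emph{Running time and main obstacle.} The time bound follows by accounting for Algorithm~\ref{alg:partition_notexact} and Algorithm~\ref{alg:assignandroute} separately: the dominant cost is the minimum-cost bipartite matchings invoked by the hierarchical grouping across its $O(\log\lambda)$ grouping levels and $O(|R|)$ clusters, which yields the $|R|^3\log\lambda$ and $|R|^2\lambda^2\log\lambda$ terms, while the routing step --- an MST computation plus tree traversals --- is of lower order. The substantive difficulty of the whole result is Theorem~\ref{thm:partition}, i.e.\ the $O(\sqrt\lambda)$-approximation for Capacitated Grouping via hierarchical grouping, which is proved in Section~\ref{sec:alg_grouping}; conditioned on it, the only points needing care here are the multi-vehicle version of Fact~\ref{fact:construction} and the claim that a depot-rooted spanning forest reaching just one representative per group still costs only $O(\OPT)$ --- the latter working precisely because the intra-group walk $\bfw_P$ is reused to collect the remaining members of $P$.
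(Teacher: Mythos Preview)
Your argument is correct and follows essentially the same two–phase decomposition as the paper: Fact~\ref{fact:construction} supplies the $O(\log|R|)$ factor, Theorem~\ref{thm:partition} supplies the $O(\sqrt\lambda)$ factor, and the routing cost splits into a rooted spanning forest bounded by the DaRP optimum plus $O(1)\cdot\cost(\bfw_P)$ per group. Two small discrepancies are worth noting. First, the routing you describe (one fixed representative $\rho_P$ per group, DFS on those representatives, and a single out--and--back detour along $\bfw_P$) is a simplified stand--in for the actual Procedure~\ref{proc:dfs}, which uses the min--distance edge cost of Eq.~\eqref{eq:rsf-cost} and re--traverses $S(P)$ to reach child groups; this is harmless because the paper's forest has cost at most yours (min over all pickup pairs is $\le$ the distance between any fixed representatives) and the per--group movement is still $O(\cost(\bfw_P))$, which is why the paper obtains the constant $5$ rather than your $3$. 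Second, in the running--time breakdown the matchings in Algorithm~\ref{alg:partition_notexact} are general (non--bipartite) perfect matchings, and the $|R|^2\lambda^2\log\lambda$ term comes from evaluating the edge weights $w_1$ via MST computations on clusters of size up to $\lambda$, not from the matching step itself.
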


\begin{example} 
\label{exa:instance_algorithm}
Figure~\ref{fig:example_algorithm} shows running HGR on the Multi-DaRP instance in Example~\ref{exa:instance} ( Figure~\ref{fig:example_instance}). HGR first invokes \textsc{Hierarchical Grouping} (Alg.~\ref{alg:partition_notexact}) to partition requests into groups. In the outer layer of HG, we form clusters hierarchically. Here, there are $4$ clusters in the 1-st iteration: $\{ \{r_1, r_2\} \}$,  $\{ \{ r_3\}, \{ r_4\}\}$, $\{ \{r_5, r_6\} \}$ and $\{ \{ r_7, r_8\}\}$, and $2$ clusters in the $2$-nd iteration: $\{ \{r_1, r_2, r_3\}, \{ r_4\}\}$, $\{ \{r_5, r_6, r_7, r_8\}\}$.
In the inner layer, we form groups inside each cluster based on the requests. 
If the requests of two clusters in the previous iteration are far-apart, although they are combined to one cluster, they are divided into separate groups. See example of the two clusters $\{\{r_3\}\}, \{\{r_4\}\}$ in $0$-th iteration and the combined cluster $\{\{r_3\}, \{r_4\}\}$ in $1$-th iteration in Figure~\ref{fig:example_algorithm}. If some requests of two clusters in the previous iteration are close and they are combined to one cluster, then they are grouped together. See example of  the two clusters $\{\{r_5, r_6\}\}, \{\{r_7, r_8\}\}$ in $1$-th iteration and the combined cluster $\{\{r_5, r_6, r_7, r_8\}\}$ in $2$-nd iteration in Figure~\ref{fig:example_algorithm}. After $\log 4=2$ iterations, we obtain the partition of the requests: $\{r_1, r_2, r_3\}, \{r_4\}, \{r_5, r_6, r_7, r_8\} $.

In the second step, HGR invokes \textsc{Routing} (Alg.~\ref{alg:assignandroute}) to build the final route based on the obtained partition. Each request group (contains no more than $\lambda$ requests) is viewed as a point, and we compute a minimum spanning forest over $7$ points $ p_1, p_2, \{r_1, r_2, r_3\}, \{r_4\}, \{r_5, r_6, r_7, r_8\} $ such that each tree is rooted at $p_1, p_2$. Then by traversing the tree to serve requests in a group-by-group manner, we obtain the two vehicle walks $\brackets{p_1, s_3, s_1, s_2, t_1, t_2, t_3, s_4, t_4},\brackets{p_2, s_5, s_6, s_7, t_6, t_5, s_8, t_8, t_7}.$
\end{example}

\begin{figure}[!htbp]
	\centerline{\includegraphics[width=0.8\linewidth]{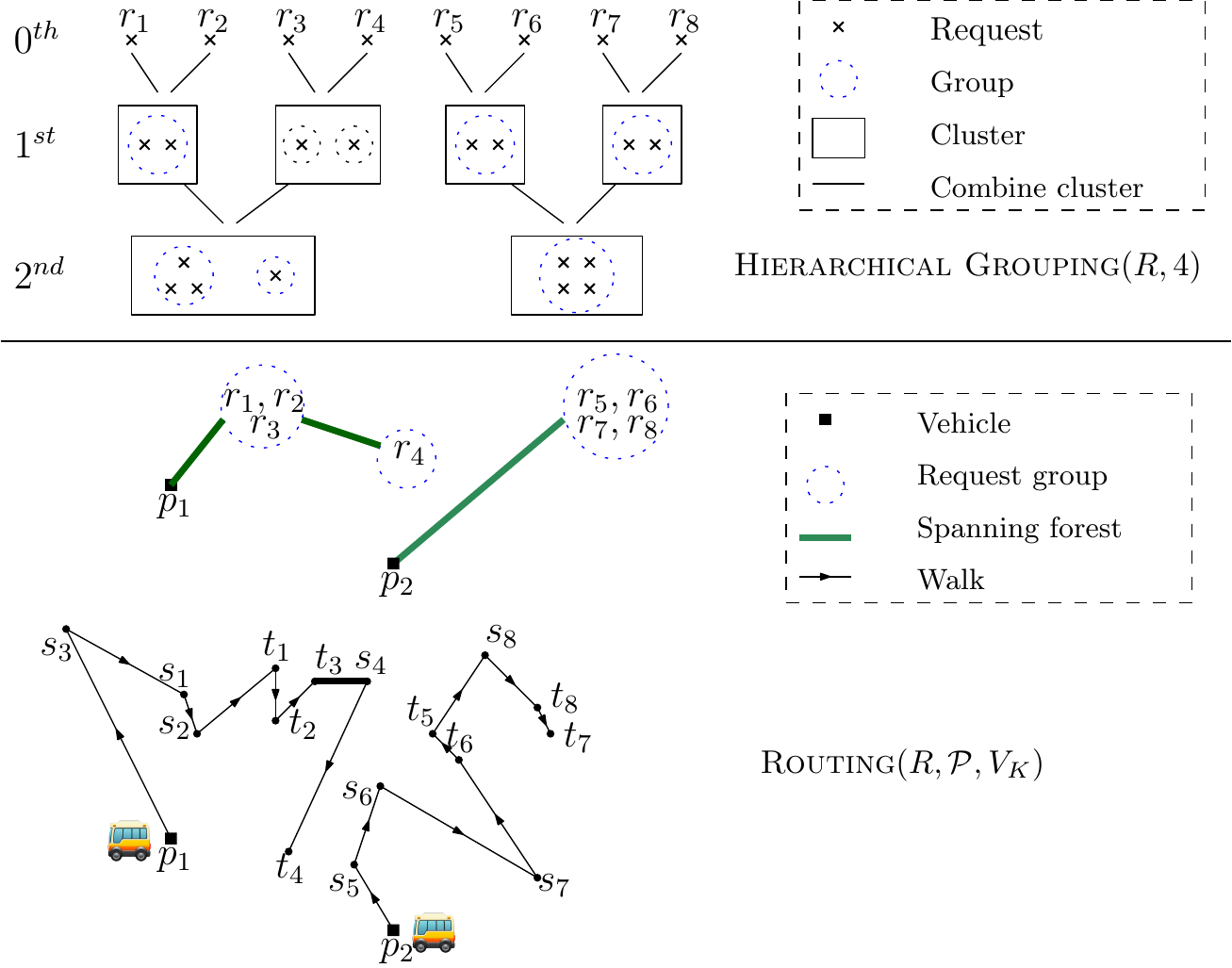}} 
	\caption{HGR for the DaRP example}
	\label{fig:example_algorithm}
			
\end{figure}
 
We describe all steps of the HGR algorithm in more detail 
in Section~\ref{sec:alg_grouping} and Section~\ref{sec:assignandroute}. 

\section{Part (I): Grouping}
\label{sec:alg_grouping}

In this section, we present the \textsc{Hierarchical Grouping} (HG) algorithm, and use it to give an approximation guarantee
of $O(\sqrt{\lambda})$ for the capacitated grouping problem.   The intuition behind HG is to cluster requests iteratively such that in each iteration, the number of requests clustered together is doubled, and after $\log \lambda$ iterations, every cluster contains $\lambda$ requests. However, treating each cluster as a single group does not necessarily give small cost (i.e., travel length): In fact, this can be far from optimal, which may use much smaller groups.  Therefore,  
we develop a non-trivial two-layer hierarchical grouping technique as follows:  In the outer layer, we form sub-clusters hierarchically such that by combining $2$ clusters of the $(i-1)$-th iteration, the number of requests in each cluster is $2^i$ in the $i$-th iteration, until each cluster contains $\lambda$ requests\footnote{It also works for $\lambda$ which is (possibly) not exact powers of 2: for an arbitrary $\lambda$ with $2^i<\lambda< 2^{i+1}$, the process ends when each cluster contains $2^i$ requests. It does not affect the analysis.};  In the inner layer, we form groups inside each cluster to ensure that closer requests are grouped together and far-apart requests are divided into separate groups.  

The key idea is to define a suitable edge cost on a graph whose vertices represent the clusters formed in each iteration, and each iteration is carried out by computing a minimum-\emph{cost} matching on this graph. Notice that each iteration doubles the number of requests in each cluster and initially each cluster contains exactly one request. Since at the end each cluster contains no more than $\lambda$ requests, the hierarchical grouping process will continue for $\lfloor \log \lambda \rfloor$ iterations. The design of the edge cost function of the graph needs to be very careful, such that we can bound the cost of covering the groups with respect to the optimal cost.

Before formally presenting the algorithm,  we define the following notations:  
 
\begin{itemize}
\item (Request-) group $P, X\subseteq R$: a set of requests

\item (Group-) cluster $\mathcal{P,Q,X}$: a set of  groups

\item (Cluster-) collection $\mathfrak{M}$: a set of clusters

\end{itemize}
 
Specifically, we use notation $w(\mathcal{X}, \mathcal{X}')$  to represent \textbf{cost function} on the edges (connecting two clusters $\mathcal{X}$ and $\mathcal{X}'$) of a graph. 
We first define the following notations for (request-) groups $X, X'\subseteq R$:    
\begin{enumerate}[(1)]

\item Minimum Spannig Tree (MST) cost  $\mst_{s,t}(X)$:  
$\mst_s(X)$ (resp. $\mst_t(X)$) is defined to be the cost of a MST over the origins (resp. destinations) of all requests in $X$. Further, define $\mst_{s,t}(X)=\mst_s(X)+\mst_{t}(X)$.

\item  (Incremental) Cost of serving groups together: $w_1(X, X')=\mst_{s,t}(X\cup X')-\mst_{s,t}(X)-\mst_{s,t}(X')$.

\item  Cost of serving groups separately: $w_2(X,X')=\min_{r_i\in X}  d(s_i,t_i)+\min_{r_i\in X'} d(s_i,t_i) $.
\end{enumerate}

During the execution of HG, we build a larger cluster by merging smaller clusters, and we can choose to either merge the groups \emph{within} the smaller clusters or  leave them separated. To guide the choice, we define the \textbf{cost function} of merging two clusters as follows: for every group pair $X\in \mathcal{X}, X'\in \mathcal{X}'$, we compare the (incremental) cost $w_1(X, X')$ of serving two groups together and the cost $w_2(X, X')$ of serving groups separately; then, we take the minimum as the cost of combining the two clusters (See Definition~\ref{def:costfunction}).

\begin{definition}[Cost Function]
\label{def:costfunction}
Given two clusters $\mathcal{X}$ and $\mathcal{X}'$, define 
$$w(\mathcal{X},\mathcal{X}')=\min_{X\in \mathcal{X}, X'\in \mathcal{X}'}  \min \{w_1(X,X'),w_2(X,X')\}$$ 
where $w_1$ and $w_2$ are defined as above. 
\end{definition}

We now describe the algorithm formally. 
In order to simplify the description, we assume for now that $\lambda$ is a power of 2. It is straightforward to adapt the procedures for arbitrary $\lambda$ with the same approximation guarantee.

In Algorithm~\ref{alg:partition_notexact}, we iteratively build larger clusters in a hierarchical way, until every cluster is of size $\lambda$. Note that each cluster may still contain multiple request groups.   
We first initialize a trivial collection $\mathfrak{M}_0$ containing $|R|$ clusters, where each cluster contains one group and each group contains a distinct request (Line 1).  Then, we repeat for $\log\lambda$ iterations, where in each iteration we compute a minimum weight perfect matching \footnote{ The perfect matching is a well-studied problem. In our computational experiments, we use a modern implementation of Edmonds’s algorithm, available at~\cite{kolmogorov2009blossom}.}  on the current cluster collection (Line 7-8), and merge the matched clusters to form a collection for the next iteration (Line 9-13). 
Therefore, after each iteration, the number of clusters is halved, while the number of requests in each cluster is doubled.  
The weight function $w$ defined in Definition~\ref{def:costfunction} is crucial for guiding the merge step (Line 9-13). When merging two clusters, we may also merge two groups from them (Line 11), or leave the groups untouched (Line 13), depending on whether $w$ achieves its value via $w_1$ or $w_2$. Finally, we ``unbox'' the clusters and return all the groups formed.
   
\begin{algorithm}%[H] 
\begin{algorithmic}[1]
\small
 \caption{\textsc{Hierarchical Grouping}($R$,$\lambda$)}
 	\label{alg:partition_notexact}
\REQUIRE $R$ and $\lambda$
\ENSURE a partition $\mathcal{P}$, each group $P\in\mathcal{P}$ contains not more than $\lambda$ requests
\STATE  $\mathfrak{M}_0=\bigcup_{r_i\in R} \{ \{\{r_i\}\}\}$ 
   
\STATE  $\ell=0$
  
\WHILE{$\ell< \log \lambda$}
        
    \STATE $\ell=\ell+1$
        
    \STATE $\mathfrak{M}_\ell=\emptyset$
        
    \STATE $\mathcal{P}_\ell=\emptyset$
        
    \STATE Let $G_\ell \equiv (\mathfrak{M}_{\ell-1}, e)$ be a complete graph with edge weights $w(\mathcal{Q} ,\mathcal{Q}')$ for any $ \mathcal{Q}, \mathcal{Q}'\in \mathfrak{M}_{\ell-1}$
         
    \STATE Find a minimum weight matching $M_\ell$ in  $G_\ell \equiv (\mathfrak{M}_{\ell-1}, e)$ with total weight $w(M_\ell)=\sum_{(\mathcal{Q}, \mathcal{Q}')\in M_\ell} w(\mathcal{Q}, \mathcal{Q}')$ \label{step:cluster-matching}

    \FOR{$(\mathcal{Q}, \mathcal{Q}')\in M_\ell$}
          
        \IF{$w(\mathcal{Q}, \mathcal{Q}')= w_1(P, P'), P\in \mathcal{Q}, P'\in \mathcal{Q}'$ } 
            \STATE   add cluster $ \mathcal{Q} \cup  \mathcal{Q}' \cup \{P \cup P'\} \setminus \{P, P'\} $ to collection $\mathfrak{M}_i $  
        \ELSIF{ $w(\mathcal{Q}, \mathcal{Q}')= w_2(P,P'), P\in \mathcal{Q}, P'\in \mathcal{Q}'$ }
            \STATE  add cluster $ \mathcal{Q} \cup  \mathcal{Q}' $ to collection $\mathfrak{M}_i $  
        \ENDIF
    \ENDFOR
 \ENDWHILE
  % $l=i$ 
   \STATE $\mathcal{P}_l\gets\{Q\in \mathcal{Q}: \mathcal{Q}\in \mathfrak{M}_{l}\}$ for $l=1,\ldots,\log\lambda$
   
    \RETURN $w(M_\ell)$ for $1\le \ell \le \log \lambda$ and $\mathcal{P}= \mathcal{P}_{\log \lambda}$
    \end{algorithmic}
\end{algorithm}

To prove our main result Theorem~\ref{thm:partition},   We first prove some important properties of the \textsc{Hierarchical Grouping} (HG) algorithm that will help us to bound total cost of the capacity-bounded groups. 

Let $\mathcal{P}_l$ denote the $l$-th partition obtained by the HG Algorithm~\ref{alg:partition_notexact} and let $ \mathcal{P}_{l,=i}\subseteq \mathcal{P}_{l}$ denote the request groups of size $2^i$, i.e.,  $\mathcal{P}_{l,=i}=\{P\in \mathcal{P}_l: |P|=2^i\}$, and let $\mathcal{P}_{l,<i}$ (resp. $\mathcal{P}_{l,>i} $) denote the request groups of size smaller than  (resp. more than) $2^i$, i.e,  $\mathcal{P}_{l, < i}=\{P\in \mathcal{P}_l: |P|< 2^i\}$,  $\mathcal{P}_{l,>i}=\{P\in \mathcal{P}_l: |P|>2^i\} $.  
Based on the definition of the cost function $w$, we have the following lemma by a telescope sum:  
 
\begin{restatable}{lemma}{telescopesum}
\label{lemma:partitionalg_each} 
For partition $\mathcal{P}_l$, $ l\in [\log \lambda]$,  obtained by the algorithm~\ref{alg:partition_notexact}, we have
\begin{align*}
\sum_{P\in \mathcal{P}_l} \left(\mst_{s,t}(P)+ \min_{r_i\in P} d(s_i,t_i) \right)\le  
%\\ \quad\quad\quad
\sum_{i=1}^{l} w(M_i) + \sum_{P\in \mathcal{P}_{l, =l}} \min_{r_i\in P} d(s_i,t_i).
\end{align*}
\end{restatable}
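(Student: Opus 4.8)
The plan is to track how the quantity $\sum_{P} \bigl(\mst_{s,t}(P) + \min_{r_i\in P} d(s_i,t_i)\bigr)$ evolves across the $\log\lambda$ iterations of Algorithm~\ref{alg:partition_notexact}, and charge each increase to the matching weight $w(M_i)$ of that iteration. For $1\le i \le l$ let $\Phi_i := \sum_{P\in\mathcal{P}_i}\bigl(\mst_{s,t}(P) + \min_{r\in P} d(s_r,t_r)\bigr)$, and set $\Phi_0 := \sum_{r\in R}\bigl(\mst_{s,t}(\{r\}) + d(s_r,t_r)\bigr) = 2\sum_{r}d(s_r,t_r)$ (since $\mst_{s,t}$ of a singleton is $0$, wait — actually $\mst_s(\{r\}) = 0$, so $\mst_{s,t}(\{r\}) = 0$ and $\Phi_0 = \sum_r d(s_r,t_r)$). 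First I would establish the one-step inequality
\[
\Phi_i \le \Phi_{i-1} + w(M_i) - \sum_{P\in\mathcal{P}_{i,=i}}\min_{r\in P}d(s_r,t_r) + \sum_{P\in\mathcal{P}_{i-1,=i-1}}\min_{r\in P}d(s_r,t_r),
\]
or some clean variant thereof; telescoping this from $i=1$ to $l$ then collapses the $\mst_{s,t}$ and the $\min d(s,t)$ bookkeeping terms, and the only surviving boundary term on the right is $\sum_{P\in\mathcal{P}_{l,=l}}\min_{r\in P}d(s_r,t_r)$, which is exactly the claimed form (the $\Phi_0$-type term should vanish or fold into the $w(M_i)$'s — I expect $\Phi_0$ is itself accounted for because the level-$0$ groups all have size $2^0$, matching the pattern).

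The core of the argument is the one-step inequality, which I would prove by summing over each matched pair $(\mathcal{Q},\mathcal{Q}')\in M_i$. Consider such a pair and the groups $P\in\mathcal{Q}$, $P'\in\mathcal{Q}'$ realizing $w(\mathcal{Q},\mathcal{Q}')$. In the first case, $w(\mathcal{Q},\mathcal{Q}') = w_1(P,P') = \mst_{s,t}(P\cup P') - \mst_{s,t}(P) - \mst_{s,t}(P')$: the algorithm replaces $P,P'$ by $P\cup P'$, so the change in the $\mst_{s,t}$ part of $\Phi$ contributed by this pair is exactly $w_1(P,P')$; the change in the $\min d(s,t)$ part is $\min_{r\in P\cup P'}d(s_r,t_r) - \min_{r\in P}d(s_r,t_r) - \min_{r\in P'}d(s_r,t_r) \le -\min_{\text{(the smaller of the two)}}$, which is $\le 0$, so it only helps — but I must track it to get the boundary terms right, using that $\min_{r\in P\cup P'} = \min(\min_P,\min_{P'})$. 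In the second case, $w(\mathcal{Q},\mathcal{Q}') = w_2(P,P') = \min_{r\in P}d(s_r,t_r) + \min_{r\in P'}d(s_r,t_r)$, and the algorithm leaves all groups untouched, so the $\mst_{s,t}$ part is unchanged and the $\min d(s,t)$ part is unchanged; here we do not need $w_2$ to pay for anything at this pair, but $w_2$ will be needed to cover the level-$(i-1)$-sized group's $\min d(s,t)$ term appearing on the right. For all groups in $\mathcal{Q}\cup\mathcal{Q}'$ other than $P,P'$, nothing changes. Summing over all matched pairs and all untouched clusters gives the one-step bound.

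The main obstacle I anticipate is the careful bookkeeping of the $\min_{r\in P}d(s_r,t_r)$ terms — in particular, making the boundary terms telescope correctly so that exactly $\sum_{P\in\mathcal{P}_{l,=l}}\min_r d(s_r,t_r)$ survives and nothing else. The subtlety is that a group of a given size $2^j$ can be created at iteration $j$ and then carried unchanged through later iterations (whenever $w_2$ wins on the cluster containing it), so at level $l$ the partition $\mathcal{P}_l$ contains groups of many sizes; I need the inequality to be robust to which iteration a group was "frozen" at. I expect this is handled by phrasing the per-iteration inequality in terms of the difference of "small-group $\min d(s,t)$ mass" between consecutive levels, so that the sum over iterations of $w_2$-type contributions telescopes against this mass and leaves only the top-level size-$2^l$ groups uncharged — which is why they appear explicitly on the right-hand side. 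A secondary check is the base case / $\Phi_0$ term: since every level-$0$ group is a singleton of size $2^0$ with $\mst_{s,t} = 0$, the pattern "$\sum_{P\in\mathcal{P}_{0,=0}}\min_r d(s_r,t_r) = \sum_r d(s_r,t_r) = \Phi_0$" should make the telescoping start cleanly with no leftover constant.
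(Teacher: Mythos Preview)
Your overall strategy---define a potential, prove a one-step bound per matched pair, and telescope---is exactly the paper's approach. However, the specific one-step inequality you write has the boundary terms with the wrong sign. With $A_j := \sum_{P\in\mathcal{P}_{j,=j}}\min_{r\in P} d(s_r,t_r)$, telescoping $\Phi_i \le \Phi_{i-1} + w(M_i) - A_i + A_{i-1}$ gives $\Phi_l \le \Phi_0 + \sum_i w(M_i) - A_l + A_0$, and since $\Phi_0 = A_0$ this is $\Phi_l + A_l \le 2\Phi_0 + \sum_i w(M_i)$, not the desired bound. The correct one-step is $\Phi_i \le \Phi_{i-1} + w(M_i) + A_i - A_{i-1}$; equivalently, setting
\[
\Psi_i := \Phi_i - A_i = \sum_{P\in\mathcal{P}_i}\mst_{s,t}(P) + \sum_{P\in\mathcal{P}_{i,<i}}\min_{r\in P} d(s_r,t_r),
\]
one needs $\Psi_i - \Psi_{i-1} \le w(M_i)$ together with $\Psi_0 = 0$, which telescopes cleanly to $\Psi_l\le\sum_i w(M_i)$ and hence $\Phi_l\le\sum_i w(M_i)+A_l$. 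This $\Psi_i$ is precisely the ``small-group $\min d(s,t)$ mass'' you allude to, and it is what the paper works with directly. Your own consistency check (that $\Phi_0=A_0$ should make the base term vanish) is in fact the diagnostic that reveals the sign error.

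To verify $\Psi_i - \Psi_{i-1} \le w(M_i)$ pair by pair, the two-way split on $w_1$ versus $w_2$ is not sufficient: you must also split on whether each of $\mathcal{Q},\mathcal{Q}'$ contains a single group (then of size exactly $2^{i-1}$, hence in $\mathcal{P}_{i-1,=i-1}$ and \emph{not} contributing to the second sum in $\Psi_{i-1}$) or several groups (all of size $<2^{i-1}$, hence already contributing). The paper carries out the resulting six cases explicitly. For instance, in the $w_2$ case with $|\mathcal{Q}|=|\mathcal{Q}'|=1$, the groups $P,P'$ pass from $\mathcal{P}_{i-1,=i-1}$ into $\mathcal{P}_{i,<i}$, so $\Psi$ gains $\min_{r\in P} d(s_r,t_r)+\min_{r\in P'} d(s_r,t_r)=w_2(P,P')$; whereas if $|\mathcal{Q}|,|\mathcal{Q}'|\ge 2$ the groups were already counted and $\Psi$ is unchanged. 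Your remark that ``$w_2$ will be needed to cover the level-$(i-1)$-sized group's $\min d(s,t)$ term'' is pointing at exactly this, but the argument only closes once that case split is made.
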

 
 \begin{proof}
We first claim that 
\[\sum_{i=1}^{l} w(M_i) \ge \sum_{P\in \mathcal{P}_l} \mst_{s,t}(P)+  \sum_{P\in\mathcal{P}_{l, <l}} \min_{r_i\in P} d(s_i,t_i) \]
If this claim holds, the lemma follows directly since $|P|\le 2^l$ for all $P\in \mathcal{P}_l$.

We then prove the above claim for every $l\ge 1$. 

When $l=1$, according to the definition of $w(\cdot ,\cdot)$, for any two clusters  $\mathcal{Q}=\{\{r_{q}\}\},\mathcal{Q}'=\{\{r_{q'}\}\}\in \mathfrak{M}_0$,  $$w(\mathcal{Q},\mathcal{Q}')=\min\{d(s_q,s_{q'})+d(t_q,t_{q'}), d(s_q,t_{q})+d(s_{q'},t_{q'})\}.$$  

There are two kinds of  $(\mathcal{Q},\mathcal{Q}')\in M_1$: 
\begin{itemize}
    \item   
$w(\mathcal{Q},\mathcal{Q}')=w_2(\{r_q\},\{r_{q'}\})$. Observe that $ \mst_{s,t}(\{r_q\})=\mst_{s,t}(\{r_{q'}\})=0$. 
For such pair $(\mathcal{Q},\mathcal{Q}')$, the edge weight is  
$$  \mst_{s,t}(\{r_q\})+ \min_{r_i\in \{r_{q}\}} d(s_i,t_i)  + \mst_{s,t}(\{r_{q'}\})+ \min_{r_i\in \{r_{q'}\}} d(s_i,t_i)= w(\mathcal{Q},\mathcal{Q}').$$

\item  
$w(\mathcal{Q},\mathcal{Q}')=w_1(\{r_q\},\{r_{q'}\})$. For such pair $(\mathcal{Q},\mathcal{Q}')$, the edge weight is  $ \mst_{s,t}(\{r_q,r_{q'}\})=d(s_q,s_{q'})+d(t_q,t_{q'}) = w(\mathcal{Q},\mathcal{Q}').$

\end{itemize}
To sum up all items $(\mathcal{Q},\mathcal{Q}')\in M_1$, we have
\begin{align} w(M_1) =& \sum_{(\mathcal{Q},\mathcal{Q}')\in M_1} w(\mathcal{Q},\mathcal{Q}') \nonumber  \\
=& \sum_{P\in \mathcal{P}_1} \mst_{s,t}(P)+  \sum_{P\in\mathcal{P}_{1,<1}} \min_{r_i\in P} d(s_i,t_i)   
\label{ieq:sum_1_partition}
\end{align}

Considering $\mathcal{P}_{l+1}$ with $l\ge 1$, we only need to prove 
 \begin{align}
 w({M}_{l+1}) \ge &\sum_{P\in \mathcal{P}_{l+1}} \mst_{s,t}(P)+  \sum_{P\in\mathcal{P}_{l, <l+1}} \min_{r_i\in P} d(s_i,t_i) \nonumber \\
 &-\left(\sum_{P\in \mathcal{P}_l} \mst_{s,t}(P)+  \sum_{P\in\mathcal{P}_{l,<l}} \min_{r_i\in P} d(s_i,t_i)\right).   \label{ieq:diff_partition}
\end{align}

There are six cases about $(\mathcal{Q},\mathcal{Q}') \in M_{l+1}$ in the $(l+1)$-th iteration: (a) 
For the first three cases such that $w(\mathcal{Q},\mathcal{Q}')=w_1(P,P')$ where $P\in \mathcal{Q}$ and $P'\in \mathcal{Q}'$: Comparing to $\mathcal{P}_l$, the algorithm adds request group $P\cup P'$ to  $\mathcal{P}_{l+1}$ and $P,P' \notin \mathcal{P}_{l+1}$, while $P,P' \in \mathcal{P}_{l}$ and $P\cup P' \notin \mathcal{P}_{l}$; Note that $$w(\mathcal{Q},\mathcal{Q}')=\mst_{s,t}(P\cup P')-\mst_{s,t}(P)-\mst_{s,t}(P'). $$ (b) For the remaining three cases such that $w(\mathcal{Q},\mathcal{Q}')=w_2(P,P')$ where $P\in \mathcal{Q}$ and $P'\in \mathcal{Q}'$:  Comparing to $\mathcal{P}_l$, the algorithm does not make changes for items of $\mathcal{Q},\mathcal{Q}'$, i.e., request group $P, P' \in \mathcal{P}_{l+1}$ and also $P,P'\in \mathcal{P}_{l}$; Note that $$w(\mathcal{Q},\mathcal{Q}')=\min_{r_i\in P} d(s_i,t_i)+\min_{r_i\in P'} d(s_i,t_i). $$   
   
\begin{enumerate}[(1)]

    \item $|\mathcal{Q}|=|\mathcal{Q}'|=1$ and $w(\mathcal{Q},\mathcal{Q}')=w_1(P,P')$ where $P\in \mathcal{Q}$ and $P'\in \mathcal{Q}'$ .
   
   %The value about $\mathcal{Q}$ and $\mathcal{Q}'$ in t
   The right side of Inequality~(\ref{ieq:diff_partition}) is equal to  $\mst_{s,t}(P\cup P')-\mst_{s,t}(P)-\mst_{s,t}(P') $, which is equal to $w(\mathcal{Q},\mathcal{Q}')$.

        \item $|\mathcal{Q}|=1$, $|\mathcal{Q}'|\ge 2$ and $w(\mathcal{Q},\mathcal{Q}')=w_1(P,P')$ where $P\in \mathcal{Q}$ and $P'\in \mathcal{Q}'$.
          
   The right side of Inequality~(\ref{ieq:diff_partition}) is equal to $\mst_{s,t}(P\cup P')-\mst_{s,t}(P)-\mst_{s,t}(P')-\min_{r_i\in P'} d(s_i,t_i) +\min_{r_i\in P\cup P'} d(s_i,t_i)\le \mst_{s,t}(P\cup P')-\mst_{s,t}(P)-\mst_{s,t}(P')$, which is equal to $w(\mathcal{Q},\mathcal{Q}')$. 
         
          \item $|\mathcal{Q}|\ge 2$, $|\mathcal{Q}'|\ge 2$ and $w(\mathcal{Q},\mathcal{Q}')=w_1(P,P')$ where $P\in \mathcal{Q}$ and $P'\in \mathcal{Q}'$.
       
      The value about $\mathcal{Q}$ and $\mathcal{Q}'$ in the right side of Inequality~(\ref{ieq:diff_partition}) is equal to $\mst_{s,t}(P\cup P')-\mst_{s,t}(P)-\mst_{s,t}(P')-\min_{r_i\in P} d(s_i,t_i) -\min_{r_i\in P'} d(s_i,t_i) +\min_{r_i\in P\cup P'} d(s_i,t_i)\le  \mst_{s,t}(P\cup P')-\mst_{s,t}(P)-\mst_{s,t}(P')$, which is equal to $w(\mathcal{Q},\mathcal{Q}')$.

   \item  $|\mathcal{Q}|=|\mathcal{Q}'|=1$ and $w(\mathcal{Q},\mathcal{Q}')=w_2(P,P')$ where $P\in \mathcal{Q}$ and $P'\in \mathcal{Q}'$.
   
   The right side of Inequality~(\ref{ieq:diff_partition}) is equal to $ \min_{r_i\in P} d(s_i,t_i)+\min_{r_i\in P'} d(s_i,t_i)$, which is equal to $w(\mathcal{Q},\mathcal{Q}')$. 
    
          \item $|\mathcal{Q}|=1$, $|\mathcal{Q}'|\ge 2$ $w(\mathcal{Q},\mathcal{Q}')=w_2(P,P')$ where $P\in \mathcal{Q}$ and $P'\in \mathcal{Q}'$.

     The right side of Inequality~(\ref{ieq:diff_partition}) is equal to $ \min_{r_i\in P} d(s_i,t_i)\le \min_{r_i\in P} d(s_i,t_i)+\min_{r_i\in P'} d(s_i,t_i)$, which is equal to $w(\mathcal{Q},\mathcal{Q}')$.
      
          \item $|\mathcal{Q}|\ge 2$, $|\mathcal{Q}'|\ge 2$ $w(\mathcal{Q},\mathcal{Q}')=w_2(P,P')$ where $P\in \mathcal{Q}$ and $P'\in \mathcal{Q}'$. 
    
    % In the above three cases, t
    The right side of Inequality~(\ref{ieq:diff_partition}) is equal to $0\le \min_{r_i\in P} d(s_i,t_i)+\min_{r_i\in P'} d(s_i,t_i)$, which is equal to $w(\mathcal{Q},\mathcal{Q}')$. 
    
\end{enumerate}

To sum up all items $(\mathcal{Q},\mathcal{Q}')\in M_{l+1}$, Inequality~(\ref{ieq:diff_partition}) holds. 

Combing both Inequality~(\ref{ieq:sum_1_partition}) and~(\ref{ieq:diff_partition}), sum up $w(M_i)$  over $i\in [l]$, the claim holds, and thus the lemma is proved.  
\end{proof}

Next, we will bound  the separate serving cost $ \sum_{P\in \mathcal{P}} \min_{r_i\in P} d(s_i, t_i)$ (see Lemma~\ref{lemma:partitionalg_lastweight}) and $\sum_{i} w(M_i)$, respectively. 
 We further introduce the following notations. Fix an optimal partition  $\mathcal{P}^*$.  
 
\begin{restatable}[The Separate Serving Cost]{lemma}{TheSeparateServingCost} 
%\begin{lemma}[The Separate Serving Cost]
\label{lemma:partitionalg_lastweight}
%For any partition $\mathcal{P}^{\star}$, we have:
The costs of serving a request separately in each group of $\mathcal{P}_{l,=\log \lambda}$
\[\sum_{P\in \mathcal{P}_{l, =\log \lambda}} \min_{r_i\in P} d(s_i,t_i)  \le      \sum_{P\in \mathcal{P}^{\star}} \max_{r_i\in P} d(s_i,t_i). \]
\end{restatable}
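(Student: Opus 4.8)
The goal is to show that the total "separate serving cost" over the full-size groups produced by HG (those of size exactly $\lambda = 2^{\log\lambda}$) is at most $\sum_{P\in\mathcal P^\star}\max_{r_i\in P} d(s_i,t_i)$, where $\mathcal P^\star$ is a fixed optimal partition. The natural approach is a charging/counting argument: assign each full-size group $P$ in $\mathcal P_{l,=\log\lambda}$ to some optimal group $P^\star\in\mathcal P^\star$ in such a way that no $P^\star$ is charged more than once, and so that the quantity $\min_{r_i\in P} d(s_i,t_i)$ charged to $P^\star$ is at most $\max_{r_i\in P^\star} d(s_i,t_i)$.

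The key structural observation I would use is a pigeonhole fact: a group $P$ of size $\lambda$ in $\mathcal P_{l,=\log\lambda}$ contains $\lambda$ requests, while each optimal group has size $\le\lambda$; but more usefully, I expect the intended argument hinges on the fact that $w_2(X,X')$ — the "serve separately" cost — is defined as $\min_{r_i\in X}d(s_i,t_i)+\min_{r_i\in X'}d(s_i,t_i)$, and that the only way a request $r$ gets to "pay" $\min d(s_i,t_i)$ in the final partition is through a $w_2$-type merge. Concretely, I would try to set up an injection from full-size groups $P$ to optimal groups $P^\star$: since $|P| = \lambda \ge |P^\star|$ for every optimal group, and the optimal groups partition $R$, the group $P$ must contain at least one request from... no — rather, I would argue the reverse: pick for each full-size $P$ the request $r_P := \argmin_{r_i\in P} d(s_i,t_i)$ achieving the minimum, and let $P^\star(P)$ be the optimal group containing $r_P$. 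Then $\min_{r_i\in P}d(s_i,t_i) = d(s_{r_P},t_{r_P}) \le \max_{r_i\in P^\star(P)}d(s_i,t_i)$ trivially since $r_P\in P^\star(P)$. The crux is then to show this map $P\mapsto P^\star(P)$ is injective over $\mathcal P_{l,=\log\lambda}$, i.e. two distinct full-size groups cannot select witnesses lying in the same optimal group.

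The hard part will be establishing injectivity. I would prove it by contradiction: suppose two distinct full-size groups $P_1\ne P_2$ in $\mathcal P_{l,=\log\lambda}$ have $r_{P_1},r_{P_2}$ in the same optimal group $P^\star$, with $|P^\star|\le\lambda$. The idea is that since $r_{P_1}$ and $r_{P_2}$ lie in a common optimal group of size at most $\lambda$, there was "room" to merge the clusters containing them within the $\log\lambda$ iterations of HG; I want to derive that HG would in fact have preferred to place them in the same group, or at least that at some iteration the matching would have benefited from merging them via a $w_1$-move, contradicting the minimality of the matchings $M_i$ — or alternatively that the sizes don't add up. A cleaner route: observe that both $P_1$ and $P_2$ have size exactly $\lambda$, while HG doubles cluster sizes each round; the fact that $r_{P_1}$ is the witness of $P_1$ means that in the iteration where $P_1$ reached size $\lambda$, the last merge into $P_1$ was a $w_2$-merge (a $w_1$-merge would have strictly enlarged a group, and the $\min d$ over the enlarged group could only decrease, so the witness request is "introduced" by a $w_2$ leaf-side). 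Tracing back, $r_{P_1}$ is a request that was never absorbed into a $w_1$-merge, i.e. $\{r_{P_1}\}$ remained a singleton group inside ever-growing clusters, and similarly for $r_{P_2}$. Then I would argue that if both $r_{P_1},r_{P_2}$ stayed singletons and belong to a common optimal group of size $\le\lambda$, the sum-of-$w_2$ accounting forces them into the same final cluster/group — contradiction with $P_1\ne P_2$. I should double-check whether the statement is really about a clean injection or whether a weaker many-to-one charging with a factor suffices; since the inequality has no constant loss, it must be an exact injection, so I will need the structural lemma about $w_2$-merges to be tight.

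\begin{proof}[Proof sketch]
\textbf{Plan.} For each full-size group $P\in\mathcal P_{l,=\log\lambda}$ let $r_P\in\argmin_{r_i\in P}d(s_i,t_i)$, so that $\min_{r_i\in P}d(s_i,t_i)=d(s_{r_P},t_{r_P})$, and let $P^\star(P)\in\mathcal P^\star$ be the (unique) optimal group containing $r_P$. Since $r_P\in P^\star(P)$ we have
\[
\min_{r_i\in P}d(s_i,t_i)=d(s_{r_P},t_{r_P})\le \max_{r_i\in P^\star(P)}d(s_i,t_i).
\]
Summing over all $P\in\mathcal P_{l,=\log\lambda}$, the lemma follows once we show the map $P\mapsto P^\star(P)$ is injective.

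\textbf{Structure of the witness request.} First I would show that $r_P$ arises from a $w_2$-merge in the construction of $P$: trace the sequence of merges (Lines~9--13 of Algorithm~\ref{alg:partition_notexact}) that produced $P$. Whenever a group is enlarged via a $w_1$-move ($P_0\cup P_0'$ replaces $P_0,P_0'$), the quantity $\min_{r}d(s_r,t_r)$ over the new group is $\le$ that over either old group; hence, following $P$ backwards, the request attaining the minimum can always be chosen to be one that was never consumed by a $w_1$-merge, i.e. $\{r_P\}$ persisted as a singleton group inside the growing cluster chain $\mathcal Q_0=\{\{r_P\}\}\subseteq\mathcal Q_1\subseteq\cdots$ that ends in the cluster containing $P$. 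Similarly $r_{P'}$ for any other full-size $P'$.

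\textbf{Injectivity.} Suppose for contradiction $P_1\ne P_2$ are full-size groups with $P^\star(P_1)=P^\star(P_2)=:P^\star$, so $|P^\star|\le\lambda$ and $r_{P_1},r_{P_2}\in P^\star$ are distinct, each persisting as a singleton group throughout HG. Since $P_1\ne P_2$ and at the end every cluster has size exactly $\lambda$, the singletons $\{r_{P_1}\}$ and $\{r_{P_2}\}$ end in different clusters, hence were separated at some iteration $\ell_0$: the matching $M_{\ell_0}$ paired the cluster of $\{r_{P_1}\}$ with some cluster other than that of $\{r_{P_2}\}$. I would then exhibit an alternative matching obtained by rerouting so that these two clusters are matched to each other via the $w_2$-cost $d(s_{r_{P_1}},t_{r_{P_1}})+d(s_{r_{P_2}},t_{r_{P_2}})$, and argue — using $|P^\star|\le\lambda$ so that feasibility/size constraints are not violated, and the fact that $r_{P_1},r_{P_2}$ lie in a common optimal group so their "separate" contribution is already accounted for in $\sum_P\max_{r_i\in P}d(s_i,t_i)$ — that this does not increase total weight while forcing $r_{P_1},r_{P_2}$ into the same group; iterating, the minimum-weight matchings of HG can be taken to keep them together, contradicting $P_1\ne P_2$. (The precise exchange argument is where the real work lies; it uses that $w$ is the pointwise minimum of $w_1,w_2$ and that $w_2$ depends only on the cheapest request in each group.)

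\textbf{Main obstacle.} The delicate point is the exchange/injectivity step: one must rule out that two full-size groups both "want" the same optimal group as the home of their cheapest request, and do so with no constant loss. This requires the tight structural claim that the witness requests behave like persistent singletons and that the minimum-weight matchings never gratuitously split a pair of requests that an optimal group keeps together at $w_2$-cost. Everything after that is the one-line charging displayed above.
\end{proof}
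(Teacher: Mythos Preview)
Your charging idea is right in spirit---you want an injection from full-size groups to optimal groups, with the charged quantity bounded by $\max_{r_i\in P^\star}d(s_i,t_i)$---but the specific injection you propose does not work, and the repair you sketch is both unnecessary and unsound.

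\textbf{The gap.} You fix $r_P=\argmin_{r_i\in P}d(s_i,t_i)$ and define $P^\star(P)$ to be the optimal group containing $r_P$. This map is \emph{not} injective in general. Take $\lambda=2$, four requests with $d(s_1,t_1),d(s_3,t_3)$ tiny, an optimal partition $\{r_1,r_3\},\{r_2,r_4\}$, and suppose HG outputs full-size groups $\{r_1,r_2\},\{r_3,r_4\}$. Then both argmin witnesses $r_1,r_3$ land in the same optimal group. Your attempted rescue---that each witness ``persisted as a singleton'' through the HG merges and that a minimum-weight matching would have kept two such singletons together---is not valid: nothing in Algorithm~\ref{alg:partition_notexact} forces the argmin request to survive as a singleton (it may well sit inside a $w_1$-merged group), and even if it did, min-weight matchings need not be unique, so the exchange argument cannot yield a contradiction with what HG actually produced.

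\textbf{What the paper does instead.} The paper avoids all of this by not insisting on the argmin witness. It builds the bipartite graph with vertex sets $\mathcal P_{l,=\log\lambda}$ and $\mathcal P^\star$, putting an edge $(P,P^\star)$ for \emph{every} request $r\in P\cap P^\star$. Hall's condition holds by pure counting: any $W\subseteq\mathcal P_{l,=\log\lambda}$ contains $|W|\cdot\lambda$ requests, which must be spread over $N(W)$ optimal groups each of size at most $\lambda$, so $|W|\le|N(W)|$. Hall's theorem then gives a matching $M$ covering $\mathcal P_{l,=\log\lambda}$; if $P$ is matched to $P^\star$ via some shared request $r$, then
\[
\min_{r_i\in P}d(s_i,t_i)\ \le\ d(s_r,t_r)\ \le\ \max_{r_i\in P^\star}d(s_i,t_i),
\]
and summing over $M$ finishes. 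The point you missed is that \emph{any} request in $P\cap P^\star$ upper-bounds $\min_{r_i\in P}d(s_i,t_i)$, so you are free to let Hall choose the witness rather than fixing it to be the argmin. No structural property of HG is needed at all.
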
 
\begin{proof}
Recall that $\mathcal{P}_{l,=\log \lambda} = \bigcup_{P\in \mathcal{P}, |P|=\lambda} P $. Consider a bipartite graph, $(\mathcal{P}_{l,=\log \lambda}, \mathcal{P}^{\star}, E)$ where, for each request $r\in \bigcup_{P\in \mathcal{P}_{l,=\log \lambda}} P$, there is an edge $e(r_i)$ between vertex $P\in \mathcal{P}_{l,=\log \lambda}$ and $  \mathcal{P}^{\star}$. According to Hall's marriage theorem, we can find a  matching $M$ among $\bigcup_{P\in \mathcal{P}_{l,=\log \lambda}} P$ and $\mathcal{P}^{\star}$ that covers all $ \mathcal{P}_{l,=\log \lambda}$. Let $E(M)$ be the edges in $M$. We have  $\sum_{P\in \mathcal{P}_{l,=\log \lambda}} \min_{r_i\in P} d(s_i,t_i) \le  \sum_{e(r_i)\in E(M)} d(s_i,t_i)  \le      \sum_{P\in \mathcal{P}^{\star}} \max_{r_i\in P} d(s_i,t_i).$
\end{proof} 

\begin{restatable}[The $\ell$-th Grouping Cost]{lemma}{PartitionAlgOneWeight}
\label{lemma:partitionalg_oneweight}  
The grouping cost in $\ell$-th iteration is 
\[w(M_l)  \le  \sum_{P\in \mathcal{P}^{\star}} 2^{\frac{\log \lambda-l}{2}} \mst_{s,t}(P)+  \sum_{P\in \mathcal{P}^{\star}}  \max_{r_i\in P} d(s_i,t_i)\] 
if $ \log \lambda-l$  is even; otherwise,  
\[w(M_l)  \le \frac{3}{2}\cdot  \sum_{P\in \mathcal{P}^{\star}} 2^{\frac{\log \lambda-l-1}{2}} \mst_{s,t}(P)+  \sum_{P\in \mathcal{P}^{\star}}  \max_{r_i\in P} d(s_i,t_i).\] 
\end{restatable}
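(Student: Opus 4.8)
The plan is to upper-bound $w(M_\ell)$ by exhibiting an explicit perfect matching on the graph $G_\ell = (\mathfrak{M}_{\ell-1}, e)$ whose weight is at most the claimed quantity; since $M_\ell$ is a \emph{minimum}-weight perfect matching, this suffices. The candidate matching will be built from the optimal partition $\mathcal{P}^\star$. First I would fix $\mathcal{P}^\star$ and, for each optimal group $P^\star \in \mathcal{P}^\star$, look at how its requests are scattered among the clusters of $\mathfrak{M}_{\ell-1}$. At iteration $\ell-1$ each cluster holds (essentially) $2^{\ell-1}$ requests, so restricting attention to one optimal group $P^\star$ (of size $\le \lambda$) we get a collection of ``fragments'' of $P^\star$ living in various clusters. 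The idea is to recursively pair up clusters (or groups inside clusters) so that fragments of the same $P^\star$ get merged together whenever possible, charging the cost of each merge either to $w_1$ of a sub-MST of $P^\star$ (when we merge group fragments) or to $w_2 \le 2\max_{r_i\in P^\star} d(s_i,t_i)$ (when fragments are separated, or when a fragment has no partner within its own $P^\star$ and must be matched arbitrarily).

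The key quantitative step is the $w_1$ accounting. When two group-fragments $X, X' \subseteq P^\star$ are merged, $w_1(X,X') = \mst_{s,t}(X\cup X') - \mst_{s,t}(X) - \mst_{s,t}(X')$, and summing these incremental MST costs along a binary merge tree for the fragments of $P^\star$ telescopes. What remains after telescoping is $\mst_{s,t}$ of the largest merged fragment minus the sum over the leaves (which are $0$ or small), so each optimal group contributes at most $\mst_{s,t}(P^\star)$ \emph{per level} it participates in merging. Since a fragment of size $2^j$ needs about $(\log\lambda - \ell)$-many further merges to reach size $\lambda$, and at iteration $\ell$ the fragments have size $\le 2^\ell$ but there can be up to $\lambda/2^\ell$ of them per optimal group, a Cauchy--Schwarz-type argument bounds the total $w_1$ contribution of one $P^\star$ by roughly $\sqrt{\lambda / 2^{\ell}} \cdot \mst_{s,t}(P^\star) = 2^{(\log\lambda - \ell)/2}\mst_{s,t}(P^\star)$. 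The two cases ($\log\lambda - \ell$ even vs.\ odd) arise from the parity of how the square root interacts with the power-of-two fragment sizes: in the odd case one cannot split evenly and pays an extra factor of $3/2$ (pairing $2^j$ with $2^{j+1}$ instead of $2^j$ with $2^j$). Separately, the $w_2$ terms and the ``unpaired fragment'' terms are each bounded by $\sum_{P^\star}\max_{r_i\in P^\star} d(s_i,t_i)$ via a counting/matching argument similar to Lemma~\ref{lemma:partitionalg_lastweight}: each optimal group is charged its diameter-like quantity $\max_{r_i} d(s_i,t_i)$ only a bounded number of times.

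I would organize the proof as: (i) set up, for a fixed $\mathcal{P}^\star$, the decomposition of $\mathfrak{M}_{\ell-1}$'s clusters into fragments of optimal groups; (ii) describe the candidate perfect matching on $G_\ell$ that prefers same-$P^\star$ pairings, making matched groups inside a cluster telescope their MSTs; (iii) bound the $w_1$-contribution per optimal group by the counting/Cauchy--Schwarz argument giving the $2^{(\log\lambda-\ell)/2}$ (even) or $\tfrac32 \cdot 2^{(\log\lambda-\ell-1)/2}$ (odd) factor; (iv) bound the residual $w_2$ and unpaired-fragment contributions by $\sum_{P^\star}\max_{r_i\in P^\star}d(s_i,t_i)$; (v) conclude $w(M_\ell) \le$ (candidate matching weight) $\le$ claimed bound. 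The main obstacle I anticipate is step (ii)--(iii): constructing a \emph{globally consistent} perfect matching on the clusters that simultaneously realizes the intended fragment-pairings for \emph{all} optimal groups at once — the fragments of different $P^\star$'s share clusters, so one cannot independently choose a binary merge tree per optimal group. Handling this likely requires either an averaging argument over a random recursive pairing, or a careful induction on $\ell$ that maintains an invariant about how the current clusters refine $\mathcal{P}^\star$; getting the exact constant $3/2$ and the exponents right in the odd case is where the bookkeeping will be most delicate.
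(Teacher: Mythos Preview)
Your high-level plan---exhibit a candidate perfect matching on $G_\ell$ built from $\mathcal{P}^\star$ and invoke minimality of $M_\ell$---is exactly right, and you have correctly flagged the main obstacle (building a globally consistent pairing of clusters). But the mechanism you propose for extracting the factor $2^{(\log\lambda-\ell)/2}$ is where the argument breaks down. At a single level $\ell$ there is no ``binary merge tree'' to telescope along: you perform \emph{one} round of pairings, and for each pair $(X,X')$ of fragments of some $P^\star$ the contribution is $w_1(X,X')\le \mst_{s,t}(Y)$ for some subset $Y\subseteq P^\star$ containing a request from each side. Since $P^\star$ may have up to $\lambda/2^{\ell}$ such pairs, you need to bound $\sum_{Y}\mst_{s,t}(Y)$ over a family of $\lambda/2^{\ell}$ subsets of $P^\star$. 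There is no Cauchy--Schwarz-type inequality that turns this into $\sqrt{\lambda/2^{\ell}}\cdot\mst_{s,t}(P^\star)$ in general; the naive bound is linear in the number of pieces, which is too weak.

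The paper obtains the square-root not from the algorithm's fragments but by \emph{deliberately} splitting each optimal group $P^\star$ into roughly $|P^\star|/2^\ell$ pieces via a specific \emph{Bi-partition rule}: recursively halve $P^\star$, alternating between cutting an edge of the source-MST (odd rounds) and of the destination-MST (even rounds). One round can at worst double $\sum\mst_{s,t}$, but two consecutive rounds (one $s$-cut and one $t$-cut) together also at most double it, so after $h$ rounds the total is $\le 2^{\lceil h/2\rceil}\mst_{s,t}(P^\star)$; this is precisely where the exponent $(\log\lambda-\ell)/2$ and the odd-case factor $3/2$ come from. With these controlled pieces $\mathcal{M}^*$ in hand (each of size $\le 2^\ell$), the global-consistency issue is resolved by a Hall's-theorem matching between $\mathfrak{M}_{\ell-1}$ and $\mathcal{M}^*$ (the paper's Lemma~\ref{lemma:existmatching}): clusters matched to the same piece $Y$ are paired together and charged $\mst_{s,t}(Y)$ via $w_1$, while clusters matched alone are paired arbitrarily and charged $\max_{r_i\in P^\star}d(s_i,t_i)$ via $w_2$. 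So the direction is reversed from your proposal: rather than analyzing how the algorithm's clusters fragment $\mathcal{P}^\star$, one first fragments $\mathcal{P}^\star$ in a cost-controlled way and then maps clusters onto those fragments.
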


To prove this lemma, first we give a \emph{Bi-partition rule} for a request group, which is used to partition the request groups in the optimal solution $\mathcal{P}^{\star}$.

%%%%%%%%%begin partition, bipartition all %%%%%%

\emph{Bi-partition rule:} For a request group $P$ and $l\ge 1$ with $|P|> 2^{l-1}$, suppose  $h(P,l)$ is the unique integer satisfying $h(P,l)-1< \log \lceil \frac{|P|}{2^{l}}  \rceil \le h(P,l) $, we bi-partition request group $P$ into $\lceil \frac{|P|}{2^{l}}  \rceil  $ groups where $\lfloor \frac{|P|}{2^{l}}  \rfloor  $ groups each include exactly $2^{l}$ requests and $  \lceil \frac{|P|}{2^{l}}  \rceil-\lfloor \frac{|P|}{2^{l}}  \rfloor $ group include not more than $2^{l}$ requests. 

Specifically, Let  $\mathcal{Q}_0=\{P\}$. The bi-partition follows: In the $j$-th ($1\le j\le h(P,l)$) partition, we bi-partition each item with size $ 2^{l}\cdot 2^{h(P,l)-j+1}$ of $\mathcal{Q}_{j-1}$ into two subgroups and add them into $\mathcal{Q}_{j}$, each contains $ 2^{l}\cdot 2^{h(P,l)-j} $ requests; we bi-partition the item $X \in \mathcal{Q}_{j-1}$ with size $ 2^{l}\cdot 2^{h(P,l)-j} < |X| <  2^{l}\cdot 2^{h(P,l)-j+1} $ into two subgroups and add them into $\mathcal{Q}_j$, one subgroup with size $ 2^{l}\cdot 2^{h(P,l)-j} $ and the other subgroup with size $  |X|- 2^{l}\cdot 2^{h(P,l)-j}  $ ($< 2^{l}\cdot 2^{h(P,l)-j} $). 
If $j$ is odd, we partition request group $X$ into two subgroups $X_l$ and $X_r$ such that $ \mst_s(X_l)+\mst_s(X_r)\le \mst_s(X)$; if $j$ is even, we partition request group $X$ into two subgroups $X_l$ and $X_r$ such that $ \mst_t(X_l)+\mst_t(X_r)\le \mst_t(X)$.

Note that there always exists a partition satisfying the above rules since there exists a feasible partition such that the edge weight of original tree is equal to the sum of the edge weight of two sub-trees each containing a portion of nodes. If the number of nodes in the tree  $\mst_s(X)$ (or $\mst_t(X)$) is even, thus we can always find an edge in which each vertex connects a portion of nodes; By cutting this edge, we obtain a feasible partition. %\textbf{The existence proof see }

Note that according to \emph{Bi-partition rule},  we partition $P$ into subgroups, each including not more than $ 2^{l}$ requests. 
We claim that after every two partitions, the sum of the edge weight of trees is less than doubled, i.e, for $j\ge 0$ \begin{align}\sum_{X\in \mathcal{Q}_{j+2}} \mst_{s,t}(X)\le 2 \sum_{X\in \mathcal{Q}_j} \mst_{s,t}(X). \label{ieq:OPT_bipar}
\end{align}

To prove the claim: 
According to the partition rule, for each odd $i$ we know \[ \sum_{X\in \mathcal{Q}_{j+1}} \mst_{s}(X)\le \sum_{X\in \mathcal{Q}_j} \mst_{s}(X) \] and \[ \sum_{X\in \mathcal{Q}_{j+1}} \mst_{t}(X)\le 2 \sum_{X\in \mathcal{Q}_j} \mst_{t}(X) \] since the minimum spanning tree of the drop-off locations of requests in a subgroup of $X\in  \mathcal{Q}_j$ have weight less than $\mst_{t}(X)$; 

Similarly, we have  \[ \sum_{X\in \mathcal{Q}_{j+2}} \mst_{s}(X)\le 2 \sum_{X\in \mathcal{Q}_{j+1}} \mst_{s}(X)\] and \[ \sum_{X\in \mathcal{Q}_{j+2}} \mst_{t}(X)\le \sum_{X\in \mathcal{Q}_{j+1}} \mst_{t}(X).\] 

Thus the Inequality~(\ref{ieq:OPT_bipar}) holds. It implies the following claim: 

\begin{claim}
\label{lemma:partitionalg_OPT_bipartition} 
For any group group $P$ and $l\ge 1$ with $|P|> 2^{l-1}$, suppose $h(P,l)-1<  \log \lceil \frac{|P|}{2^{l}}  \rceil \le h(P,l) $, following the \emph{Bi-partition rule}, for any even $ j\le h(P,l)$ we have:
\[ \sum_{X\in  \mathcal{Q}_{j}} \mst_{s,t}(X) \le 2^{\frac{j}{2}} \mst_{s,t} (P)  \] 
 for any odd $ j\le h(P,l)$ we have:
\[ \sum_{X\in  \mathcal{Q}_{j}} \mst_{s,t}(X) \le \frac{3}{2} \cdot  2^{\frac{j-1}{2}} \mst_{s,t} (P)  \] 
\end{claim}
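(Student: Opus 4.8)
The plan is to read the Claim off Inequality~(\ref{ieq:OPT_bipar}) (which has just been established), doing the induction in steps of two and handling the first bi-partition separately. Start with the even case, by induction on $j/2$. The base case $j=0$ is immediate since $\mathcal{Q}_0=\{P\}$, so $\sum_{X\in\mathcal{Q}_0}\mst_{s,t}(X)=\mst_{s,t}(P)=2^{0}\mst_{s,t}(P)$; and the inductive step is a direct application of~(\ref{ieq:OPT_bipar}), $\sum_{X\in\mathcal{Q}_{j+2}}\mst_{s,t}(X)\le 2\sum_{X\in\mathcal{Q}_{j}}\mst_{s,t}(X)\le 2\cdot 2^{j/2}\mst_{s,t}(P)=2^{(j+2)/2}\mst_{s,t}(P)$. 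So the even case needs nothing beyond~(\ref{ieq:OPT_bipar}).

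For the odd case the same two-at-a-time induction reduces everything to the base case $j=1$: once $\sum_{X\in\mathcal{Q}_1}\mst_{s,t}(X)\le\tfrac32\,\mst_{s,t}(P)$ is available, iterating~(\ref{ieq:OPT_bipar}) $(j-1)/2$ times gives $\sum_{X\in\mathcal{Q}_j}\mst_{s,t}(X)\le 2^{(j-1)/2}\sum_{X\in\mathcal{Q}_1}\mst_{s,t}(X)\le\tfrac32\cdot 2^{(j-1)/2}\mst_{s,t}(P)$. (If $h(P,l)=0$ there is no odd $j$ to consider and nothing to prove; otherwise $h(P,l)\ge 1$, so $|P|>2^l$ and step $1$ really happens.) For $j=1$, $\mathcal{Q}_1$ is obtained from $\mathcal{Q}_0=\{P\}$ by a single bi-partition of $P$ into two groups $P_l,P_r$, which, since $j=1$ is odd, is chosen with $\mst_s(P_l)+\mst_s(P_r)\le\mst_s(P)$; and since $P_l,P_r\subseteq P$ we have $\mst_t(P_l),\mst_t(P_r)\le\mst_t(P)$, hence $\mst_t(P_l)+\mst_t(P_r)\le 2\,\mst_t(P)$. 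Adding, $\sum_{X\in\mathcal{Q}_1}\mst_{s,t}(X)\le\mst_s(P)+2\,\mst_t(P)$. Since only $\mst_{s,t}$ enters the statement, we are free to set up the Bi-partition rule so that its odd-indexed splits preserve the larger of $\mst_s(P),\mst_t(P)$; assuming without loss of generality $\mst_s(P)\ge\mst_t(P)$, we conclude $\mst_s(P)+2\,\mst_t(P)=\mst_{s,t}(P)+\mst_t(P)\le\tfrac32\,\mst_{s,t}(P)$, which is the base case.

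The step I expect to be the real obstacle is exactly this $\tfrac32$ factor in the odd case. The naive estimate $\mst_s(P)+2\,\mst_t(P)$ only buys $2\,\mst_{s,t}(P)$; to get $\tfrac32\,\mst_{s,t}(P)$ one must observe that the first split can be taken to preserve the larger of the two MST costs (so only the smaller one is doubled), and also be careful that $\mathcal{Q}_1$ contains exactly two groups, so the multiplier on $\mst_t(P)$ is genuinely $2$ rather than more. The remainder — the two clean inductions and the degenerate case $h(P,l)=0$ — is routine.
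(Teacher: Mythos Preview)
Your approach is correct and is exactly what the paper intends: the paper derives Inequality~(\ref{ieq:OPT_bipar}) and then simply asserts ``It implies the following claim,'' leaving the two-step induction implicit. Your explicit induction for the even case and the reduction of the odd case to $j=1$ via~(\ref{ieq:OPT_bipar}) are precisely the intended argument.

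You are also right that the $\tfrac32$ constant at $j=1$ is the only nontrivial point, and you have handled it more carefully than the paper does. With the Bi-partition rule exactly as written (odd steps always preserve $\mst_s$), one only gets $\sum_{X\in\mathcal{Q}_1}\mst_{s,t}(X)\le \mst_s(P)+2\,\mst_t(P)$, which is $\le\tfrac32\,\mst_{s,t}(P)$ \emph{only} when $\mst_s(P)\ge\mst_t(P)$. Your fix --- choosing, for each $P$, to have the odd-indexed splits preserve whichever of $\mst_s(P),\mst_t(P)$ is larger --- is legitimate: the Bi-partition rule is a purely internal construction used solely in the proof of Lemma~\ref{lemma:partitionalg_oneweight}, and nothing downstream (in particular the matching argument of Lemma~\ref{lemma:existmatching}) depends on which side is preserved, only on the subgroup sizes and on the $\mst_{s,t}$ bound. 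So your WLOG is valid and in fact patches a point the paper leaves unaddressed.
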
 

Then we prove Lemma~\ref{lemma:partitionalg_oneweight}: 
\begin{proof}  
Let  $\mathcal{P}^{*}_{>l}$ (resp, $ \mathcal{P}^{*}_{\le l}$) denote the union of group $P\in \mathcal{P}^{\star}$ with $|P|> 2^{l}$ (resp, $ |P|\le 2^{l}$).   We first partition all $P\in \mathcal{P}^{*}_{>l-1}$ following the \emph{Bi-partition rule} and obtain the $h(P,l)$-th partition $\mathcal{Q}_{h(P,l)}$ where $h(P,l)-1< \log \lfloor \frac{|P|}{2^l} \rfloor \le h(P,l)$.

Let $\mathcal{M}^*$ denote the union of $\mathcal{P}^{\star}_{\le l-1}$ and the partition of  item  $P\in \mathcal{P}^{\star}_{>l-1}$, i.e., $$\mathcal{M}^*= \bigcup_{P\in \mathcal{P}^{\star}_{\le l-1}} P \cup \bigcup_{P\in \mathcal{P}^{\star}_{>l-1}}  \mathcal{Q}_{h(P,l)}.$$ 
Notice that each item of  $P\in \mathcal{P}^{\star}_{\le l-1}$ contains not more than $ 2^{l-1}$ requests, each item of $\mathcal{Q}_{h(P,l)}$ with $P\in \mathcal{P}^{\star}_{>l-1}$ contains not more than $ 2^{l}$ requests and $\cup_{P\in \mathcal{M}^*} P=R $.

Consider a bipartite graph,  ($\mathfrak{M}_{l-1},\mathcal{M}^*, E$) where, for each request $r\in \bigcup_{\mathcal{Q}\in \mathfrak{M}_{l-1}} \bigcup_{X\in \mathcal{Q}} X$, there is an edge $e(r)$ between vertex $\mathcal{Q}\in \mathfrak{M}_{l-1}$ and $Y\in \mathcal{M}^*$ in which $r\in Y$.  
We claim that we can find a matching $M$ between $\mathfrak{M}_{l-1}$ and $\mathcal{M}^*$ that covers $\mathfrak{M}_{l-1}$ such that: for all items of each $ \mathcal{Q}_{h(P,l)} \subseteq \mathcal{M}^* $ where $P\in \mathcal{P}^{\star}_{>l-1}$, at most one item $Y\in \mathcal{Q}_{h(P,l)}$ has degree $1$ or $3$, all other items of $\mathcal{Q}_{h(P,l)}$ each has degree $0$ or $2$; for all $Y\in \mathcal{P}^{\star}_{\le l-1} \subseteq \mathcal{M}^*$, each has degree not more than $2$. The proof of this fact is summarized in Lemma~\ref{lemma:existmatching} below. 
 
We assign a value for each $\mathcal{Q}\in \mathfrak{M}_{l-1}$ based on the matching $M$: 
\begin{itemize}
    \item 
If only one $\mathcal{Q}\in \mathfrak{M}_{l-1}$ is matched with a $Y\in \mathcal{M}^*$,
$$c(\mathcal{Q})=\min_{r_i\in \bigcup_{X\in \mathcal{Q}} X} d(s_i,t_i) \le \max_{r_i\in Y} d(s_i, t_i);$$
\item If exactly two items $\mathcal{Q},\mathcal{Q}' \in \mathfrak{M}_{l-1}$ both are matched with a $Y\in \mathcal{M}^*$, let  $$c(\mathcal{Q})=c(\mathcal{Q}')=\frac{\mst_{s,t}(Y)}{2}.$$ 
\item Otherwise three items $\mathcal{Q},\mathcal{Q}',\mathcal{Q}'' \in \mathfrak{M}_{l-1}$  are matched with a $Y\in \mathcal{M}^*$, let  $$c(\mathcal{Q})=c(\mathcal{Q}')=\frac{\mst_{s,t}(Y)}{2},$$ 
$$c(\mathcal{Q}'')=\min_{r_i\in \bigcup_{X\in \mathcal{Q}''} X} d(s_i,t_i)  \le \max_{r_i\in Y} d(s_i, t_i).$$
\end{itemize}

%%assign value detail
\begin{comment}
If $P\in \mathcal{P}_{i-1}$ is matched with a $Y\in \mathcal{P}^{\star}_{\le i-1}$, then $$c(P)=\min_{r_i\in P} d(s_i,t_i);$$ For the rest items in $ \mathcal{P}_{i-1}$ matched with an item of $\mathcal{Q}_{h(P,l)}$ where $Q\in \mathcal{P}^{\star}_{>i-1}$: %let $M(Q)$ denote all set $P\in \mathcal{P}_{i-1}$ matched with an item of $\mathcal{Q}_{h(P,l)}$.
If two items $X,Y \in \mathcal{P}_{i-1}$ is matched with a $Q\in \mathcal{Q}_{h(P,l)}$, let  $$c(X)=c(Y)=\frac{\mst_{s,t}(Q)}{2}$$ Otherwise only one item $P \in \mathcal{P}_{i-1}$ is matched with a $Q\in \mathcal{Q}_{h(P,l)}$, let  $$c(P)=\min_{r_\in P} d(s_i,t_i).$$
\end{comment}

%If $M(P^{\star})$ is even, let $$c(P)= \frac{ \sum_{P\in  \mathcal{Q}_{h(P,l)}} \mst_{s,t}(P) }{|M(Q)|}$$ for each $P\in M(P^{\star})$; otherwise, let $$c(P)=\min_{r_i\in P} d(s_i,t_i)$$ for one group $P\in M(Q)$ and let $$c(P)= \frac{ \sum_{P\in  \mathcal{Q}_{h(P,l)}} \mst_{s,t}(P) }{|M(Q)|-1}$$ for all other group $P\in M(Q)$.  

Thus we have 
 \begin{align}  \sum_{\mathcal{Q}\in \mathfrak{M}_{l-1}} c(Q) &\le \sum_{Y\in \mathcal{P}^{\star}_{\le l-1}} \left( \mst_{s,t} (P)+  \max_{r_i\in Y} d(s_i,t_i) \right) \nonumber 
 \\ &+ 
\sum_{P\in \mathcal{P}^{\star}_{>l-1}} \left( \sum_{Y\in \mathcal{Q}_{h(P,l)}} \mst_{s,t} (Y) + \max_{r_i\in P} d(s_i,t_i)  \right) 
\label{ieq:OPT_sumvalue} 
\end{align}

Next we construct a matching $M^*_l$ for $\mathfrak{M}_{l-1}$ based on the matching $M$ and $\mathcal{P}^{\star}$:  any two set $\mathcal{Q} \in \mathfrak{M}_{l-1}$  with $c(\mathcal{Q})= \min_{r_i\in \bigcup_{X\in \mathcal{Q}}} d(s_i,t_i)$ are matched together, and we have
\begin{align*}
w(\mathcal{Q},\mathcal{Q}')=\min_{P\in \mathcal{Q}, P'\in \mathcal{Q}'}  \min \{w_1(P,P'),w_2(P,P')\}   \le \min_{P\in \mathcal{Q}, P'\in \mathcal{Q}'} w_2(P,P')  c(\mathcal{Q})+c(\mathcal{Q}').
\end{align*}
any two sets $\mathcal{Q},\mathcal{Q}'\in \mathcal{M}_{l-1}$ matched with  a same item of $\mathcal{Q}_{h(P,l)}$ are matched together, and we have 
\begin{align*}
w(\mathcal{Q},\mathcal{Q}')=\min_{P\in \mathcal{Q}, P'\in \mathcal{Q}'}  \min \{w_1(P,P'),w_2(P,P')\}   \le \min_{P\in \mathcal{Q}, P'\in \mathcal{Q}'} w_1(P,P')\le c(\mathcal{Q})+c(\mathcal{Q}').
\end{align*}

Then we have
\[w(M^*_l)\le  \sum_{\mathcal{Q}\in \mathfrak{M}_{l-1}} c(\mathcal{Q}).\]

According to the HG Algorithm~\ref{alg:partition_notexact} (Line 8), $M_l$ is the minimum weight matching, $$w(M_l)  \le w(M^*_l).$$

For $l$ with even $\log \lambda -l$, we have  
\begin{equation} \label{inq:ith_bound}
\begin{split}
 w(M_l) &\le w(M^*_l) \le \sum_{\mathcal{Q}\in \mathfrak{M}_{l-1}} c(\mathcal{Q})\\
 & \le \sum_{P\in \mathcal{P}^{\star}_{\le l-1}} \mst_{s,t} (P)+ \sum_{P\in \mathcal{P}^{\star}_{>l-1}}  \sum_{X\in \mathcal{Q}_{h(P,l)}} \mst_{s,t} (X)  +  \sum_{P\in \mathcal{P}^{\star}}  \max_{r_i\in P} d(s_i,t_i)\\
& \le  \sum_{P\in \mathcal{P}^{\star}} 2^{\frac{\log \lambda-l }{2}}\mst_{s,t}(P)+  \sum_{P\in \mathcal{P}^{\star}}  \max_{r_i\in P} d(s_i,t_i)
\end{split}
\end{equation}
The third inequality follows from Inequality~(\ref{ieq:OPT_sumvalue});  
The last inequality follows from Claim~\ref{lemma:partitionalg_OPT_bipartition} and the fact that $ h(P,l)\le \log \lambda -l$  since $|P|\le \lambda$.

Similarly, we get the proof for  $l$ with odd $\log \lambda -l$.
\end{proof}

There is one piece remaining to be filled. In the proof of Lemma~\ref{lemma:partitionalg_oneweight} we made a structural claim on the existence of certain matching $M$ between $\mathfrak{M}_{l-1}$ and $\mathcal{M}^*$. Now we formally prove it below.
\begin{restatable}{lemma}{LemmaExistMatching}
\label{lemma:existmatching} 
Consider a cluster $\mathcal{P}$, and a set of groups $\mathcal{M}^*= (\bigcup_{\mathcal{Q}} \mathcal{Q}) \cup \mathcal{M}^*_{<}$ which satisfies:
\begin{itemize}
   \item Each $P\in\mathcal{P}$ contains exactly $m$ requests and $P\cap P' =\emptyset$ for any $P,P'\in \mathcal{P}$; 
    \item Each $ P\in  \mathcal{M}^*_{<}$  contains no more than $m$ requests;
    \item For each $\mathcal{Q} $, at most one group $ P\in  \mathcal{Q}$  contains no more than $2m$ requests and all other $P\in \mathcal{Q}$ contains exactly $2m$ requests;
    \item $P\cap P' =\emptyset$ for any $P,P'\in \mathcal{M}^*$.  
\end{itemize}  
If $\bigcup_{P\in \mathcal{P}} P =(\bigcup_{\mathcal{Q}}  \bigcup_{P\in \mathcal{Q}} P) \cup (\bigcup_{P\in \mathcal{M}^*_{<}} P)$, then   
we can find a matching $M$ between $\mathcal{P}$ and $\mathcal{M}^*$ that covers all of $\mathcal{P}$ such that: 
\begin{enumerate}[(1)]
    \item For all items $P\in \mathcal{Q}  \subseteq \mathcal{M}^*$, at most one item has degree $1$ or $3$, all other items each has degree $0$ or $2$; 
    
    \item For each $P \in \mathcal{M}^*_{<} \subseteq \mathcal{M}^*$, $P$ has degree no more than $2$. 
\end{enumerate}
\end{restatable}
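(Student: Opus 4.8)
## Proof Plan for Lemma~\ref{lemma:existmatching}

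The plan is to construct the matching $M$ greedily by distributing the requests of each group $P\in\mathcal{P}$ (which has exactly $m$ requests) among the groups of $\mathcal{M}^*$ that contain those requests, and then reading off a bipartite multigraph whose edge set between $\mathcal{P}$ and $\mathcal{M}^*$ is determined by which $\mathcal{M}^*$-groups receive a positive share. More concretely, since $\bigcup_{P\in\mathcal{P}}P = \bigcup_{Y\in\mathcal{M}^*}Y$ as sets of requests, I would set up a bipartite graph $H$ on vertex sets $\mathcal{P}$ and $\mathcal{M}^*$ with an edge $(P,Y)$ whenever $P\cap Y\neq\emptyset$, together with the intersection size $|P\cap Y|$ as a capacity/weight on that edge. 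The goal is to select a sub-(multi)graph $M$ covering all of $\mathcal{P}$ so that each $Y$ lying in some $\mathcal{Q}$ has degree in $\{0,2\}$ with one exceptional $Y$ per $\mathcal{Q}$ of degree in $\{1,3\}$, and each $Y\in\mathcal{M}^*_{<}$ has degree at most $2$.

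The key observation driving the degree bounds is a counting argument: within a fixed $\mathcal{Q}$, the total number of requests is $\sum_{Y\in\mathcal{Q}}|Y|$, which is $2m$ times (number of full groups) plus the size of the at-most-one small group; since each $P\in\mathcal{P}$ contributes exactly $m$ requests, the number of $P$'s whose requests land (entirely or partially) inside $\mathcal{Q}$, counted with multiplicity $|P\cap(\bigcup_{Y\in\mathcal{Q}}Y)|/m$, is an integer-ish quantity that forces the degrees to pair up two-to-one onto the $2m$-sized groups. I would make this rigorous by first handling the ``clean'' case in which every $P$ is either fully inside one $\mathcal{Q}$, fully inside $\mathcal{M}^*_{<}$, or split across exactly two groups; here a $P$ of size $m$ either fills ``half'' of a $2m$-group (so two such $P$'s match one $Y$) or sits inside an $\mathcal{M}^*_{<}$-group of size $\le m$ (contributing degree $1$ to that $Y$, and at most two such $P$'s can share one small $Y$ since $2m$-ish... actually $\le m$ means at most... ) — more carefully, since each $Y\in\mathcal{M}^*_{<}$ has $\le m$ requests and each incident $P$ contributes at least... one request, one must argue the degree bound $\le 2$ comes from the bi-partition structure rather than pure counting, so I would instead argue via an Eulerian / parity argument on $H$.

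The cleanest route, and the one I would take, is a parity argument: orient the problem as finding an orientation or a $2$-factor-like structure. Consider the bipartite multigraph $H$ where $Y\in\mathcal{Q}$ of size $2m$ is \emph{split} into a pair of ``half-slots'' each of capacity $m$, and $Y$ of size $m$ (the exceptional one, or those in $\mathcal{M}^*_{<}$) is a single slot of capacity $\le m$. Then the requirement ``$\bigcup_{P\in\mathcal{P}}P = \bigcup_{Y}Y$ with each $P$ of size exactly $m$'' means we are asking for a perfect matching between the $|\mathcal{P}|$ groups on the left and the slots on the right, where a slot of a $2m$-group is matched to exactly one $P$ (giving that $Y$ degree $2$ if both its slots are used, which they must be since the slots partition $Y$'s requests and those requests are covered). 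This turns the statement into an application of Hall's theorem on the slot-expanded graph; Hall's condition follows because for any subset $S$ of left vertices, $\bigcup_{P\in S}P$ has exactly $m|S|$ requests, and the slots touching $S$ have total capacity at least $m|S|$ (each slot containing a request of $S$ has its full capacity available, and the capacities are $\ge$ the number of $S$-requests it holds, summing to $\ge m|S|$), so a fractional—and hence by integrality an integral—matching exists. Reading degrees back: a full $2m$-group gets both slots matched, hence degree exactly $2$; the at-most-one small group per $\mathcal{Q}$ gets its single slot matched, hence degree $1$ or, if it also absorbed an extra split, $3$; and groups in $\mathcal{M}^*_{<}$, being single slots of capacity $\le m$ but possibly receiving the spillover of two distinct $P$'s that are each only partially inside, get degree at most $2$.

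The \textbf{main obstacle} I anticipate is the bookkeeping around the exceptional (small) group inside each $\mathcal{Q}$ and the $\mathcal{M}^*_{<}$ groups: these break the clean ``everything is a multiple of $m$'' counting, so the degrees $1$ and $3$ and ``$\le 2$'' cannot come from naive counting alone and genuinely need the slot construction plus the Hall/integrality argument to control them. I would therefore spend most of the write-up carefully defining the slot-expanded bipartite graph, verifying Hall's condition on it (the one inequality $|N(S)\text{-capacity}| \ge m|S|$), invoking integrality of bipartite matching, and then doing the case analysis that translates matched slots back into the degree statements~(1) and~(2). The feasibility direction (that the bi-partition rule from Lemma~\ref{lemma:partitionalg_oneweight} indeed produces a $\mathcal{Q}$ with at most one non-full group) is already guaranteed by how $\mathcal{M}^*$ was constructed upstream, so I would simply cite that and not re-derive it here.
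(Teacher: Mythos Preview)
Your high-level plan (split large groups into size-$m$ ``slots'', apply Hall, then translate slot-matches back to degrees on $\mathcal{M}^*$) is exactly the spirit of the paper's proof, but your execution has a concrete gap around the small groups, and the paper closes that gap with an extra step you are missing.

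The paper's proof does three things: (I)~lay out \emph{all} groups of size $<2m$ (both the exceptional group of each $\mathcal{Q}$ and every group of $\mathcal{M}^*_{<}$) consecutively on a line and re-chunk this line into groups $\mathcal{J}$ of size exactly $2m$; (II)~split every size-$2m$ group (original or from step~I) into two size-$m$ groups, obtaining $\mathcal{I}$; (III)~observe that the bipartite multigraph $(\mathcal{P},\mathcal{I})$, with one edge per request, is $m$-regular, so Hall gives a perfect matching $M'$ immediately. The degree bounds then come from a simple interval argument on the line from step~I: the requests of any original group of size $<2m$ occupy a contiguous interval of length $<2m$, hence meet at most three of the size-$m$ chunks in $\mathcal{I}$; those of size $\le m$ meet at most two. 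Translating $M'$ back to $\mathcal{M}^*$ yields exactly conditions~(1) and~(2).

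Your proposal skips the merge step~(I), and this is where it breaks. You treat the exceptional group of a $\mathcal{Q}$ as ``a single slot of capacity $\le m$'', but that group may have size strictly between $m$ and $2m$, so it cannot be a single size-$\le m$ slot; your slot graph is then neither regular nor balanced, and your Hall verification (``slots touching $S$ have total capacity at least $m|S|$'') is not justified. Your final degree translation is also internally inconsistent: if each small group is one slot and a matching assigns one $P$ per slot, you would get degree $\le 1$ for $\mathcal{M}^*_{<}$, yet you then invoke ``spillover of two distinct $P$'s'' to reach $\le 2$, which contradicts the matching semantics you set up. The fix is precisely the paper's merge-then-split: after laying the small groups on a line, regularity is automatic and the degree bounds drop out of the interval-overlap count rather than from ad~hoc case analysis.
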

\begin{proof}
The matching $M$ is constructed in three steps, as shown in Fig.~\ref{fig:exist_matching_bigsmall}:

\begin{figure}[htbp] 
	\centerline{\includegraphics[width=0.8\linewidth]{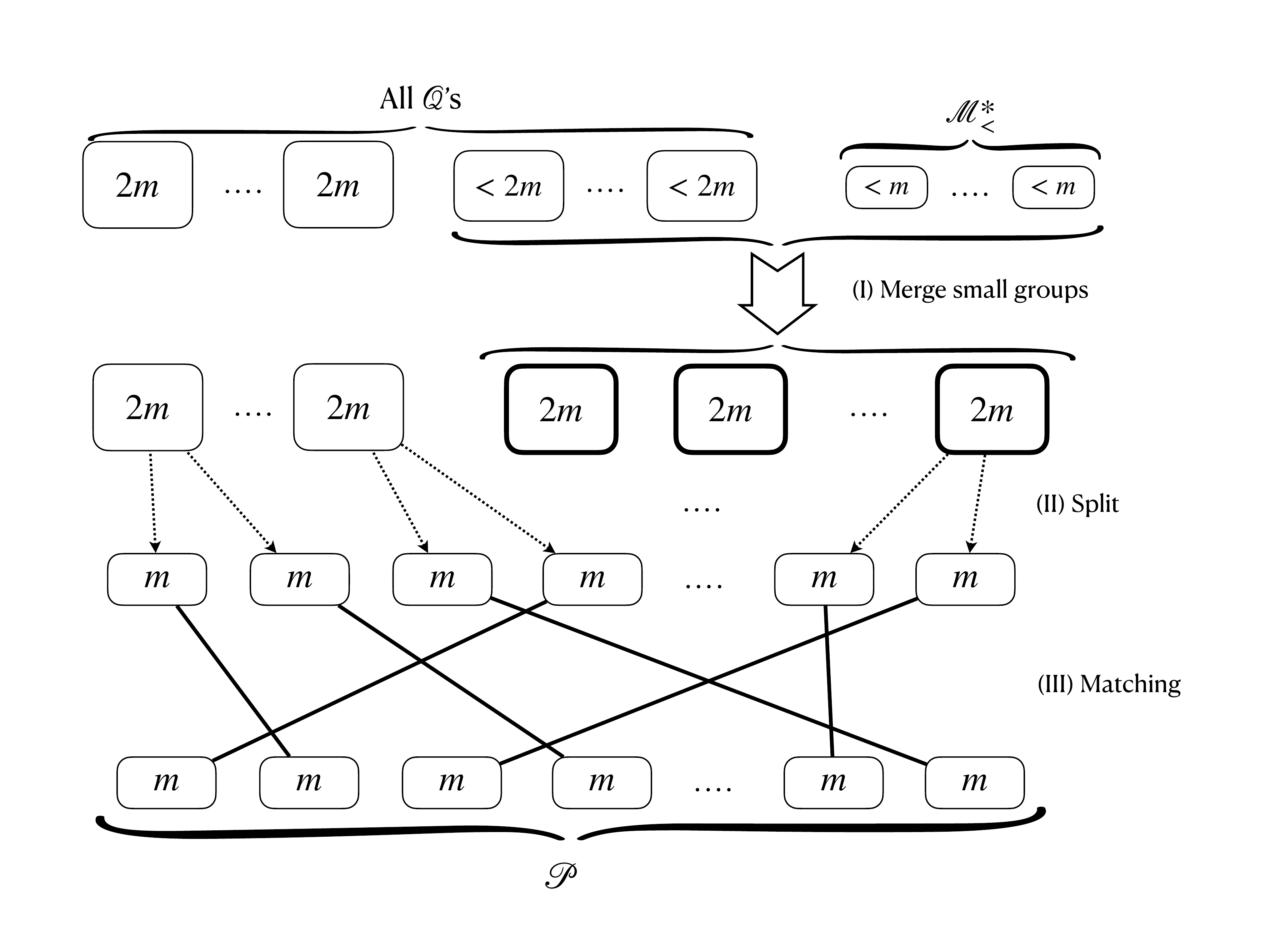}} 
	\caption{Matching between $\calP$ and $\mathcal{M^*}$}
	\label{fig:exist_matching_bigsmall} 
\end{figure}

\noindent\textbf{(I). Merge request groups $\bigcup_{P\in \mathcal{M}^*, |P|<2m} P$ into groups $\mathcal{J}$ such that $|P|=2m$ for each $P\in \mathcal{J}$:}   
In this step, we layout the requests in these groups consecutively in a row, and then split the merged requests into groups in a fixed order where each group contains exactly $2m$ requests.   

\noindent\textbf{(II). Split the request groups $\mathcal{J}$ into groups $\mathcal{I}$  such that $|P|=m$ for each $P\in \mathcal{I}$.}  
In this step, we split each request group $P\in \mathcal{J}$ into two groups such that each group contains exactly $m$ requests in a fixed order.  
Observe that for each group $P\in \mathcal{M}^*$ such that $m<|P|<2m$, the requests in $P$ are present in at most three groups of $\mathcal{I}$; for each group $P\in \mathcal{M}^*$ such that $|P|<m$, the requests in $P$ are present in at most two groups of $\mathcal{I}$.

\noindent\textbf{(III). Find a matching between $\mathcal{P}$ and $\mathcal{I}$.} 
After the above two steps, we know that for any $P\in \mathcal{P}$ and $Q\in \mathcal{I}$, $|P|=|Q|=m$ and $\bigcup_{P\in \mathcal{P}}=\bigcup_{Q\in \mathcal{I}} Q$. 
Consider a bipartite, $m$-regular, multi-graph $(\mathcal{P}, \mathcal{I}, E)$ where, for each request $r$, there is an edge $e(r)$ between vertex  $P\in \mathcal{P}$ and $Q\in  \mathcal{I}$, whenever $r\in  P\cap Q$.  
Hall's marriage theorem is known to give a necessary and sufficient condition for finding a perfect matching that covers one side of the graph $G=(X\cup Y, E)$: there is an $X$-perfect matching if and only if $ |W|\leq |N_G(W)| $ for every subset $W$ of $X$ where $N_G(W)$ denote the neighborhood of $W$ in $ G$.     
We can find a perfect matching $M'$ among  $\mathcal{P}$ and $\mathcal{I}$. Let $E(M')$ denote the edges (also, the corresponding requests) in matching $M'$.

Based on the matching $M'$ between $\mathcal{P}$ and $\mathcal{I}$, we can construct a matching $M$ between $\mathcal{P}$ and $\mathcal{M}^*$: for requests (or edges) in $E(M')$ which are present in a same group of $\mathcal{M}^*$, we match this group to the groups in $\mathcal{P}$ connected by these edges.  
The matching $M$ has the following properties: (a) For all items $P\in \mathcal{M}^*$ with $|P|=2m$, all items each has degree $2$;  
(b) For each $P\in \mathcal{M}^*$ with $m<|P|<2m$, $P$ has degree no more than $3$ because the requests in $P$ are present in at most three groups of $\mathcal{I}$; According to the definition, in each $\mathcal{Q}  \subseteq \mathcal{M}^* $, at most one group $ P\in  \mathcal{Q}$ contains no more than $2m$ requests, thus at most one group in each  $\mathcal{Q} $ are matched with no more than three groups of $\mathcal{I}$, implying that the degree these matched request' corresponding request group in $\mathcal{M}^*$ has degree not more than $3$; 
(c) For each $P \in \mathcal{M}^*$ with $|P|\le m$, $P$ has degree no more than $2$ because the requests in $P$ are present in at most two groups of $\mathcal{I}$.       
 That means $M$ satisfies the condition (1) and (2).    
\end{proof}

Based on Lemma~\ref{lemma:partitionalg_oneweight}, it is easy to obtain the following lemma:
\begin{restatable}[Total Grouping Cost]{lemma}{TotalGroupingCost}
\label{lemma:partitionalg_sumweight}  
 The total grouping 
cost 
\begin{align*}
\sum_{l=1}^{\log \lambda} w(M_l)  \le & O(\sqrt{\lambda})  \sum_{P\in \mathcal{P}^{\star}} \mst_{s,t}(P)+
%\\ &
 \log \lambda  \cdot  \sum_{P\in \mathcal{P}^{\star}}  \max_{r_i\in P} d(s_i,t_i). 
\end{align*}
\end{restatable}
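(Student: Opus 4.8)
The plan is to obtain Lemma~\ref{lemma:partitionalg_sumweight} directly by summing the per-iteration bound of Lemma~\ref{lemma:partitionalg_oneweight} over all $\log\lambda$ iterations and evaluating the resulting geometric series. First I would split $\sum_{l=1}^{\log\lambda} w(M_l)$ according to the parity of $\log\lambda-l$: apply the first inequality of Lemma~\ref{lemma:partitionalg_oneweight} when $\log\lambda-l$ is even and the second when it is odd. In both cases the quantity $\sum_{P\in\mathcal{P}^{\star}}\max_{r_i\in P} d(s_i,t_i)$ appears exactly once, so its total contribution across the $\log\lambda$ iterations is $\log\lambda\cdot\sum_{P\in\mathcal{P}^{\star}}\max_{r_i\in P} d(s_i,t_i)$, which is already the second term of the claimed bound.

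For the MST term, I would substitute $k=\log\lambda-l$, so that as $l$ ranges over $\{1,\dots,\log\lambda\}$ the index $k$ ranges over $\{0,1,\dots,\log\lambda-1\}$. The coefficient multiplying $\sum_{P\in\mathcal{P}^{\star}}\mst_{s,t}(P)$ is then $2^{k/2}$ for even $k$ and $\frac{3}{2}\cdot 2^{(k-1)/2}$ for odd $k$. Summing over the even values of $k$ gives $\sum_{j\ge 0,\,2j\le\log\lambda-1} 2^{j}\le 2^{(\log\lambda)/2+1}=2\sqrt\lambda$, and summing over the odd values gives $\frac{3}{2}\sum_{j\ge 0,\,2j+1\le\log\lambda-1} 2^{j}\le \frac{3}{2}\cdot 2^{(\log\lambda)/2}=\frac{3}{2}\sqrt\lambda$. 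Adding the two contributions shows the coefficient on $\sum_{P\in\mathcal{P}^{\star}}\mst_{s,t}(P)$ is at most $\frac{7}{2}\sqrt\lambda = O(\sqrt\lambda)$, which yields the stated bound.

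Since the argument is purely a matter of summing a geometric progression, there is no real obstacle. The only points that require care are keeping the parity bookkeeping straight (matching each $l$ to the correct case of Lemma~\ref{lemma:partitionalg_oneweight}) and ensuring the index $l$ runs precisely over $1,\dots,\log\lambda$, so that $k=\log\lambda-l$ never reaches $\log\lambda$ itself; this is exactly what keeps the geometric sum at $O(\sqrt\lambda)$ rather than inflating it by an extra $\log\lambda$ factor.
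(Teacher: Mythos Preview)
Your proposal is correct and is precisely the approach the paper intends: it states that Lemma~\ref{lemma:partitionalg_sumweight} follows easily from Lemma~\ref{lemma:partitionalg_oneweight}, and your argument---summing the per-iteration bound over $l=1,\dots,\log\lambda$, splitting by the parity of $\log\lambda-l$, and evaluating the resulting geometric series---is exactly that computation carried out in full.
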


Now we are ready to bound the cost of the capacity-bounded groups:

\begin{restatable}{theorem}{ThmPartition}
\label{thm:partition}
The \textsc{Hierarchical Grouping} algorithm  
runs in polynomial time and outputs a feasible partition $\calP$ of $R$, such that (1) $\forall\,P\in\calP, |P|\leq\lambda$; (2) $\cost(\mathcal{P})\leq O(\sqrt{\lambda})\cdot \cost(\calP^*)$, where $\calP^*$ is the partition of an optimal solution for the capacitated grouping problem; and (3) $\forall\,P\in\calP$ one can efficiently find a feasible walk $\bfw_P$ traversing $P$, 
such that $\sum_{P\in\calP}\cost(\bfw_P)\leq O(\cost(\calP))$.
\end{restatable}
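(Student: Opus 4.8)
The plan is to assemble Theorem~\ref{thm:partition} from the lemmas already developed, with the main work being a careful routing argument for part (3). For part (1), feasibility and the capacity bound $|P|\le\lambda$ are immediate from the structure of Algorithm~\ref{alg:partition_notexact}: each iteration doubles cluster sizes starting from singletons, and after $\lfloor\log\lambda\rfloor$ iterations every group has size at most $\lambda$ (groups formed via $w_2$ merges may be smaller, which only helps). Polynomial running time follows by noting that each of the $\log\lambda$ iterations computes a minimum-weight perfect matching on a graph with $O(|R|)$ vertices, and evaluating each edge weight $w(\mathcal X,\mathcal X')$ requires computing $\mst_{s,t}$ over groups of size $\le\lambda$ for each of at most $|R|^2$ group pairs; this matches the bound quoted in Theorem~\ref{thm:approx-darp}.

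For part (2), I would first observe that for any feasible partition $\calP$ and any group $P\in\calP$, the cheapest feasible walk $\bfw_P$ has cost $\Theta(\mst_{s,t}(P) + \min_{r_i\in P}d(s_i,t_i))$ up to constants: a walk that serves $P$ must connect all pickups and all drop-offs, giving the lower bound $\Omega(\mst_s(P)+\mst_t(P))$ and also $\Omega(\max_{r_i\in P}d(s_i,t_i))$, while the upper bound comes from traversing the pickup-MST, then crossing to the drop-off side once, then traversing the drop-off-MST (this is essentially part (3) restricted to a single group). Hence $\cost(\calP)$ is within constants of $\sum_{P\in\calP}(\mst_{s,t}(P)+\min_{r_i\in P}d(s_i,t_i))$, and likewise $\cost(\calP^*)=\Theta(\sum_{P\in\calP^*}(\mst_{s,t}(P)+\max_{r_i\in P}d(s_i,t_i)))$ — the max appears because in the lower bound for $\calP^*$ we only know each group needs $\ge\max d(s_i,t_i)$, but $\mst_{s,t}(P)+\max d(s_i,t_i) = O(\cost(\bfw^*_P))$ as well. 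Then I chain: by Lemma~\ref{lemma:partitionalg_each} applied with $l=\log\lambda$, the left side is bounded by $\sum_i w(M_i) + \sum_{P\in\calP_{\log\lambda,=\log\lambda}}\min_{r_i\in P}d(s_i,t_i)$; Lemma~\ref{lemma:partitionalg_sumweight} bounds $\sum_i w(M_i)$ by $O(\sqrt\lambda)\sum_{P\in\calP^*}\mst_{s,t}(P) + \log\lambda\cdot\sum_{P\in\calP^*}\max_{r_i\in P}d(s_i,t_i)$, and Lemma~\ref{lemma:partitionalg_lastweight} bounds the residual separate-serving term by $\sum_{P\in\calP^*}\max_{r_i\in P}d(s_i,t_i)$. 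Since $\log\lambda = O(\sqrt\lambda)$, everything collapses to $O(\sqrt\lambda)\cdot\cost(\calP^*)$.

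For part (3), given the final partition $\calP$ I would, for each group $P\in\calP$, build $\bfw_P$ explicitly: take a minimum spanning tree on the pickups $\{s_r:r\in P\}$, double its edges and shortcut to get a Hamiltonian-style tour of the pickups of cost $\le 2\,\mst_s(P)$; similarly a tour of the drop-offs of cost $\le 2\,\mst_t(P)$; then pick the request $r^\circ\in P$ minimizing $d(s_{r^\circ},t_{r^\circ})$ and splice the two tours together at $s_{r^\circ}$ and $t_{r^\circ}$, so that the walk picks up everyone, pays one ``crossing edge'' of length $d(s_{r^\circ},t_{r^\circ})$, then drops everyone — this is feasible (all pickups precede all drop-offs, capacity $\le\lambda$ respected) with $\cost(\bfw_P)\le 2\,\mst_{s,t}(P) + \min_{r_i\in P}d(s_i,t_i)$. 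Summing over $P\in\calP$ and invoking the cost characterization from part (2) gives $\sum_{P\in\calP}\cost(\bfw_P) = O(\cost(\calP))$. I expect the main obstacle to be part (2): the bookkeeping of which lower-bound term ($\min$ versus $\max$ of $d(s_i,t_i)$) is available on the algorithm's side versus the optimum's side must be tracked precisely, since Lemma~\ref{lemma:partitionalg_each} produces a $\min$ on the left but only a $\min$ over $\calP_{\log\lambda,=\log\lambda}$ on the right, and Lemma~\ref{lemma:partitionalg_lastweight} converts exactly that residual into a $\max$ over $\calP^*$ — so the argument hinges on these terms lining up, together with the absorption $\log\lambda = O(\sqrt\lambda)$ to keep a single clean approximation factor.
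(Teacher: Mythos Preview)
Your proposal is correct and follows essentially the same approach as the paper's own proof: you chain Lemmas~\ref{lemma:partitionalg_each}, \ref{lemma:partitionalg_lastweight}, and \ref{lemma:partitionalg_sumweight} exactly as the paper does, absorb $\log\lambda$ into $O(\sqrt{\lambda})$, and construct $\bfw_P$ by touring the pickup and drop-off MSTs joined through the shortest $(s_i,t_i)$ edge. The only cosmetic difference is that the paper phrases the tour construction via Christofides-style $s'$--$s$ and $t$--$t'$ TSP paths rather than doubled-MST tours, but both yield cost $O(\mst_{s,t}(P)+\min_{r_i\in P}d(s_i,t_i))$ and the argument is otherwise identical.
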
   
\begin{proof} 
First notice that, given any group of request $P$, we have the following lowerbounds for the cost of the optimal walk $$\bfw^*_P:=\argmin_{\text{feasible } \bfw_P}\cost(\bfw_P): \mst_{s,t}(P)\leq 2\cost(\bfw^*), $$  $$\min_{r_i\in P} d(s_i,t_i)\leq \max_{r_i\in P} d(s_i,t_i) \le  \cost(\bfw_P^*).$$

Then by Lemma~\ref{lemma:partitionalg_each}, Lemma~\ref{lemma:partitionalg_lastweight}, and Lemma~\ref{lemma:partitionalg_sumweight},
we have 
\begin{align*}
& \sum_{P\in\mathcal{P}} (\mst_{s,t}(P)+\min_{r_i\in P} d(s_i,t_i)) \\
\le& O(\sqrt{\lambda})\cdot \sum_{P\in \mathcal{P}^{\star}} \mst_{s,t}(P) + (1+\log \lambda) \cdot \max_{r_i\in P} d(s_i,t_i)\\
\le& O(\sqrt{\lambda})\cdot \sum_{P\in \mathcal{P}^{\star}} \cost(\bfw^*_P) \\
= & O(\sqrt{\lambda})\cdot \cost(\calP^*).
\end{align*}
In particular, the two \emph{minimum spanning trees} that obtains $\mst_{s,t}(P)$ gives a feasible walk for serving requests in $P$ as follows: Let $(s,t)=\argmin_{(s_i,t_i)\in P}d(s_i,t_i)$. Pick any $s'\in \{s_r:r\in P\}, s'\neq s$, and by the classical Christofide's algorithm we easily find a $s'$-$s$ TSP path on $\{s_r:r\in P\}$ with cost at most $O(\mst_s(P))$. Similarly, we can find a $t$-$t'$ TSP path on $\{t_r: r\in P\}$ with cost at most $O(\mst_t(P))$. By gluing the two TSP path together through $(s,t)$ we get a feasible walk, $\bfw_P$, on $P$ with cost $O(\mst_{s,t}(P)+\min_{r_i\in P} d(s_i,t_i))$. This is the $\bfw_P$ used in the theorem statement.  
Since $\cost(\calP)\leq \sum_{P\in\calP}\cost(\bfw_P)$, the existence of $\bfw_P$ also implies $\cost(\calP)\leq O(\sqrt{\lambda})\cdot\cost(\calP^*)$.
\end{proof}

\section{Part (II): Routing} 
\label{sec:assignandroute}

After invoking Algorithm~\ref{alg:partition_notexact} to get the partition $\calP$ of requests, we now describe how to find actual routes for the vehicles --- the \emph{assignment} $\calA$. 
The requests will be served group-by-group, meaning that each group is served \emph{exclusively} and \emph{non-preemptively}\footnote{By \emph{non-preemptive}, we mean a vehicle must finish serving all requests of a group before it can start serving other groups.} by some vehicle. Such route is of course unlikely to be optimal, but the previous hierarchical grouping phase provides a good structure, which allows us to prove a good approximation ratio.

Our routing phase consists of two steps: First we will generate a set of graphs, specifically we call a \emph{rooted spanning forest} (defined below) $\calF$ that connects each group to exactly one vehicle in $V_K$; Then we design a routing plan that schedules the vehicles to serve its connected groups along the edges of $\calF$.
We now describe the algorithm formally.

\vspace{0.5\baselineskip}
\noindent\textbf{(I). Finding the rooted spanning forest $\calF$.} 
First let us formally define the rooted spanning forest.
\begin{definition}[Rooted Spanning Forest (RSF)]\label{def:rooted-spanning-forest}
Given a weighted graph $G=(V,E)$ with edge cost $c:E\mapsto\bbR_{\geq0}$ and a root set $U\subseteq V$, we say a set $\calF=\{T_i\}_i$ of (disjoint) trees is a \emph{rooted spanning forest (RSF)}, if 
\begin{enumerate}
    \item Each $T_i\in\calF$ is a tree rooted at some vertex of $U$;

    \item $T_i\cap T_j=\emptyset$ for any $i\neq j$;

    \item For any non-root $v\in V\setminus U$, there is some $T_i\in\calF$ contains $v$.
\end{enumerate}

\noindent Lastly, define the cost of $\calF$ as $c(\calF)=\sum_{T\in\calF}c(T)=\sum_{T\in\calF}\sum_{e\in T}c(e)$. We say $\calF$ is a \emph{Minimum Rooted Spanning Forest (MRSF)} if $\calF$ achieves minimum cost among all rooted spanning forests.
\end{definition}

\begin{claim}\label{clm:mrsf}
Given $G, c, U$ as above in Definition~\ref{def:rooted-spanning-forest}, we can find a minimum rooted spanning forest (MRSF) $\calF$ in polynomial time.
\end{claim}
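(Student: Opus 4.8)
The plan is to reduce the problem of computing a minimum rooted spanning forest to a single minimum spanning tree computation via the standard ``super-source'' trick. First I would construct an auxiliary graph $G' = (V', E')$ as follows: add a new vertex $\rho$ (the super-source) to $V$, so $V' = V \cup \{\rho\}$; keep all original edges of $E$ with their costs $c$; and for every root $u \in U$, add an edge $(\rho, u)$ with cost $c(\rho, u) = 0$. Then I would compute a minimum spanning tree $T'$ of $G'$ (which exists and is found in polynomial time, e.g.\ by Kruskal's or Prim's algorithm, assuming $G'$ is connected --- and if some component of $G$ contains no vertex of $U$, no RSF exists, which can be detected in the same pass). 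Finally, remove $\rho$ from $T'$; the connected components of $T' - \rho$ are trees, each of which (by construction of the zero-cost edges) contains exactly the set of $U$-vertices that were attached to $\rho$ in that component, and I would designate one such vertex as the root of each component. This yields a rooted spanning forest $\calF$ whose cost equals $c(T')$ since the deleted edges incident to $\rho$ all have cost $0$.

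The two things to verify are (i) that $\calF$ is a valid RSF and (ii) that it has minimum cost. For (i): every non-root $v \in V \setminus U$ lies in some component of $T' - \rho$, hence in some $T_i$; the $T_i$ are disjoint because they are distinct components of a forest; and each $T_i$ contains at least one vertex of $U$ (otherwise that component would be disconnected from $\rho$ in $T'$, contradicting that $T'$ is a spanning tree), so we can root it at such a vertex. One subtlety: a component of $T' - \rho$ might contain more than one $U$-vertex; that is fine for Definition~\ref{def:rooted-spanning-forest}, which only requires each tree to be rooted at \emph{some} vertex of $U$ --- we simply pick one arbitrarily and note the other $U$-vertices are treated as ordinary (non-root) vertices of that tree, which is permitted.

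For (ii), the optimality argument goes both ways. Given any RSF $\calF = \{T_i\}$ of $G$, I can build a spanning tree of $G'$ by taking the union of all the $T_i$'s together with, for each $T_i$, one zero-cost edge $(\rho, u_i)$ where $u_i$ is the root of $T_i$; this is connected, spans $V'$, and has the same total cost as $\calF$, so $c(T') \le c(\calF)$. Conversely the $\calF$ we extracted from $T'$ has cost exactly $c(T')$. Hence $c(\calF) = c(T') \le c(\calF^{\mathrm{opt}})$ for any RSF $\calF^{\mathrm{opt}}$, so $\calF$ is minimum. Since MST is computable in polynomial time and all the transformations are polynomial, the claim follows. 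I do not anticipate a genuine obstacle here --- the only care needed is the bookkeeping around components with multiple roots and around components of $G$ with no root at all (the infeasible case), both of which are handled cleanly by the reduction.
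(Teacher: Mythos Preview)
Your proof is correct and takes essentially the same approach as the paper: both reduce MRSF to a single MST computation by collapsing the root set $U$ --- you do it by adding a super-source $\rho$ with zero-cost edges to $U$, the paper does it by contracting $U$ to a single vertex $u_0$ and then un-contracting. The optimality arguments are the same bijection between rooted spanning forests of $G$ and spanning trees of the auxiliary graph, so there is no substantive difference.
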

\begin{proof}
We first contract $U$ to a single vertex $u_0$. Then for any non-root vertex $v$, we merge all parallel edges between $v$ and $u_0$, and re-define the cost $c(u_0, v):=\min_{x\in U}c(x, v)$. Then we find a minimum spanning tree $T$ in this contracted graph (using, e.g., Prim's algorithm). Now we un-contract $u_0$ back to $U$: if a non-root vertex $v$ was connected to $u_0$ by $T$, then after un-contraction it is connected to its closest neighbor in $U$. 

It is easy to see that after un-contraction $T$ becomes a rooted spanning forest (with the same cost) in the original graph $G$, which is our desired solution $\calF$. To see that $\calF$ is minimum: suppose for contradiction there is another rooted spanning forest $\calF'$ with smaller cost, then if we contract $U$, $\calF'$ gives a spanning tree cheaper than $T$, contradicting with $T$ being minimum.
\end{proof} 
We will build a minimum rooted spanning forest for the groups $\calP$ with root set $V_K$ (the vehicles). Formally, consider the complete graph over vertices $V_K\cup \calP$, with $V_K$ being the root set. We define the edge cost $c$ on this graph as follows: recall $d$ is the underlying metric, let $c$ be
\begin{equation}\label{eq:rsf-cost}
\left.
\begin{aligned}
  c(P, P') &:= \min_{r_i\in P, r_j\in P'}d(s_i, s_j), & P, P'\in\calP\\
  c(u, P) &:= \min_{r_i\in P} d(u, s_i), & u\in V_K, P\in\calP\\
  c(u, v) &:= d(u,v), & u, v\in V_K \\
\end{aligned}
\quad\right\}
\end{equation}
Now, using Claim~\ref{clm:mrsf} we find a minimum rooted spanning forest $\calF$ w.r.t. the cost $c$ above. Note that each tree $\calT\in\calF$ contains \emph{exactly} one vertex from $V_K$, which we will designate as the \emph{root} of $\calT$. 

The above process is summarized in Procedure~\ref{proc:mrsf}.  
 
{
  
\begin{algorithm}%[H] 
 \floatname{algorithm}{Procedure}
\begin{algorithmic}[1]
\small
 \caption{\small{\textsc{MRSF}($V_K, \calP$)} 	\label{proc:mrsf}}

\REQUIRE Vehicle locations $V_K$ and partition $\calP$
\ENSURE A rooted spanning forest $\calF$ on $V_K\cup\calP$.

\STATE Let $c$ be given as in Eq~\eqref{eq:rsf-cost}
\STATE Find a minimum rooted spanning forest $\calF$ on $V_K\cup S$ with cost function $d$, based on Claim~\ref{clm:mrsf}.

\RETURN $\calF$
\end{algorithmic}
\end{algorithm}
}

\vspace{0.5\baselineskip}
\noindent\textbf{(II). DFS on $\calF$ to serve all requests.} For each tree $\calT\in\calF$, all of its request will be served using only the unique vehicle that is located at the root of $\calT$. So, now we can focus on serving a single tree $\calT$. Roughly speaking, the vehicle leaves the root of $\calT$ and serves each group in a \emph{depth-first} manner along the edges of $\calT$. But since each group contains multiple locations, the apparent question is, how exactly does a vehicle move?

Let $S(P)$ denote all the pickup locations from group $P$. Recall that, by construction each edge $(P,P')$ (or $(p_k,P), p_k\in V_K$) in $\calT$ uniquely corresponds to an edge $(s,s')$ (resp. $(p_k,s)$) for some $s\in S(P)$ and $s'\in S(P')$, so we can think of $P, P'$ are connected via the ``portals'' $s, s'$. We also denote the portal via which a group $P$ connects to its parent as $s_0(P)$. For ease of presentation, we also define $s_0(p_k):=p_k$ for all $p_k\in V_K$. The vehicle will always enter $P$ at $s_0(P)$ from its parent.

Then, let $\bfw_P$ be the walk serving $P$ that is guaranteed by Theorem~\ref{thm:partition}, and let $s_1$ be the starting point of $\bfw_P$. The vehicle will first move to $s_1$ from $s_0(P)$ and serve all requests of $P$ by following $\bfw_P$, then traverse $S(P)$ again in the order determined by $\bfw_P$, serving its children groups recursively. Finally, the vehicle move back to $s_0(P)$ and return to $P$'s parent. The process is summarized in Procedure~\ref{proc:dfs}.
We also provide an example in Figure~\ref{fig:DFS}. 

{
\floatname{algorithm}{Procedure}
\begin{algorithm}%[H] 

\caption{\small{\textsc{Dfs}($P$, $\calT$)} 	\label{proc:dfs}}
\small
\begin{algorithmic}[1]
 
\REQUIRE $P\in V_K\cup \calP$, and $\calT$ is a tree on $V_K\cup\calP$ containing $P$
\ENSURE A feasible walk $\bfw$ covering the subtree of $\calT$ rooted at $P$
   
\STATE $\bfw\gets \langle s_0(P)\rangle$
\IF{$P\in\calP$} 
\STATE $\bfw_P \gets$the walk guaranteed by Theorem~\ref{thm:partition}
\ELSE 
\STATE $\bfw_P\gets\langle P \rangle$ \COMMENT{If $P\in V_K$ is a vehicle location}
\ENDIF
\STATE $s\gets$starting point of $\bfw_P$
\STATE Append $\bfw_P$ to $\bfw$ \COMMENT{Serve $P$}

\FOR{each $s\in S(P)$ in the order of $\bfw_P$}
  \FOR{each child group $P'$ of $P$ connected via $s$} 
    \STATE Append $s$ to $\bfw$
    \STATE $\bfw'\gets$\textsc{Dfs}($P'$, $\calT$)
    \STATE Append $\bfw'$ to $\bfw$
  \ENDFOR   
  \STATE Append $s$ to $\bfw$
\ENDFOR   
\STATE Append $s_0(P)$ to $\bfw$  
\RETURN $\bfw$
\end{algorithmic}
\end{algorithm}
} 
\begin{algorithm}%[H] 
\small
\begin{algorithmic}[1]
 \caption{\small{\textsc{Routing}}($R, \calP, V_K$)}
 	\label{alg:assignandroute}
\REQUIRE $R$, $\calP$ and $V_K$
\ENSURE An assignment $\mathcal{A}$ that serves $R$
   
\STATE $S\gets$ set of all pick-up locations of $R$.
\STATE $\calF\gets\textsc{Mrsf}(V_K, \calP)$ \COMMENT{Proc.~\ref{proc:mrsf}}
\STATE $\calA\gets\emptyset$
\FOR{each tree $\calT\in\calF$}
\STATE $p_k\gets$root of $\calT$
\STATE $\bfw_\calT\gets\textsc{Dfs}(p_k,\calT)$ \COMMENT{Proc.~\ref{proc:dfs}}
\ENDFOR 

\RETURN $\mathcal{A}=\{\bfw_\calT: \calT\in\calF\}$
\end{algorithmic}
\end{algorithm}

\begin{figure*}[!htbp]
    \centering
    \includegraphics[width=0.9\textwidth]{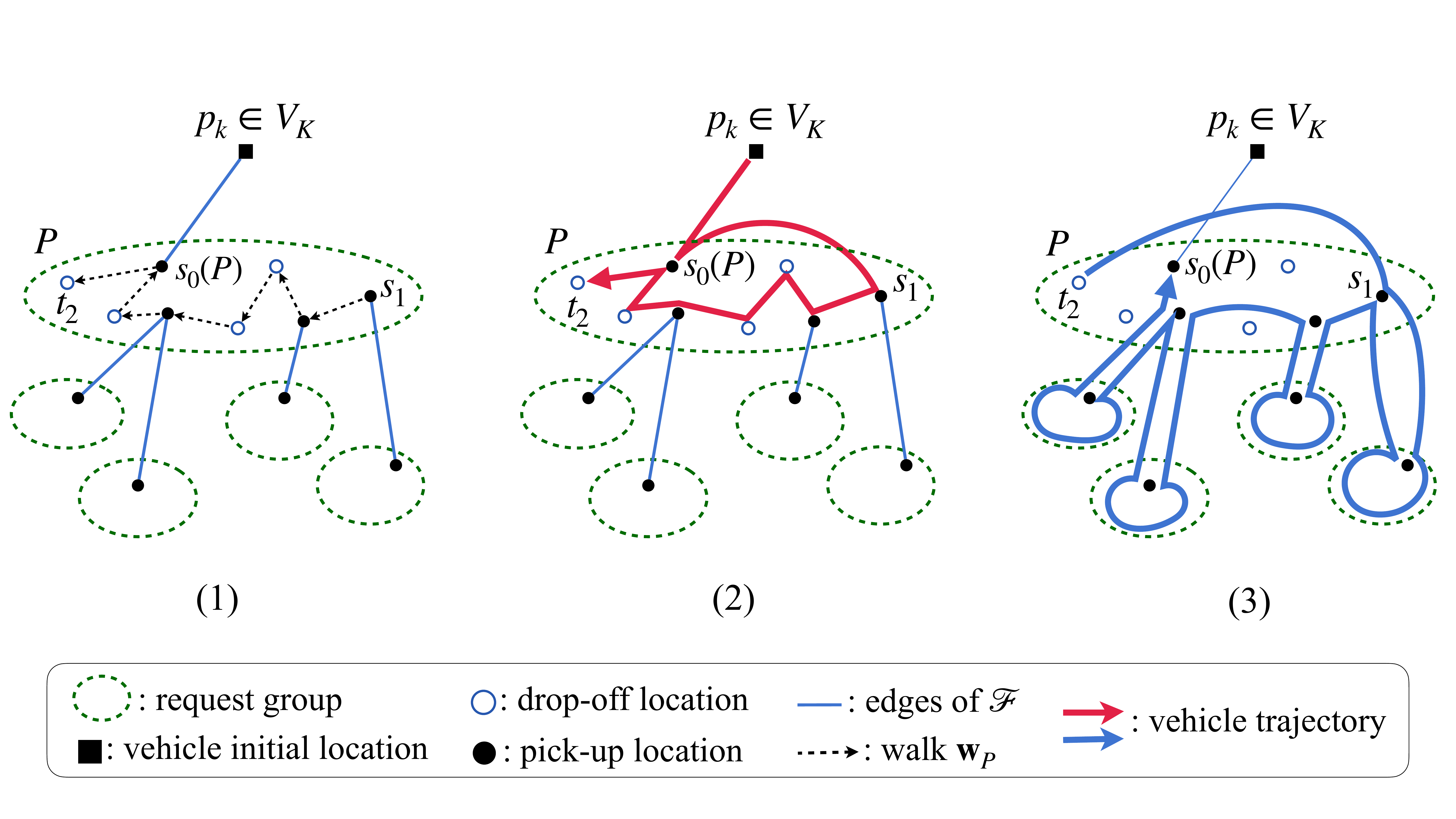}
    \caption{Example of serving the groups using DFS. Figure~ (1) shows the constructed spanning forest $\calF$, and $\bfw_P$ that starts at some $s_1$ and ends at $t_2$. Figure~(2) shows how the vehicle serves $P$: it enters $P$ at $s_0(P)$, then move to $s_1$ and serves $P$ by following $\bfw_P$. Figure~(3) shows how the vehicle recursively serve $P$'s children after serving $P$: it first moves back to $s_1$, and visit the children of $P$ in the order of $\bfw_P$ (breaking ties arbitrarily).  \label{fig:DFS}}
\end{figure*}

With Procedure~\ref{proc:mrsf} (\textsc{Mrsf}) and \ref{proc:dfs} (\textsc{Dfs}) at hand, the final \textsc{Routing} algorithm is quite straightforward: just apply the \textsc{Dfs} to each tree of the rooted spanning forest returned by \textsc{Mrsf}.

\section{Proof of Theorem~\ref{thm:approx-darp}}
Now we can prove our main theorem, see Theorem~\ref{thm:approx-darp}. restated below. \ThmApproxDaRP* 
Let $\calA$ be the solution returned by HGR (Algorithm~\ref{alg:main-algorithm}), $\calP$ be the partition output by \textsc{Hierarchical Grouping} (Algorithm~\ref{alg:partition_notexact}), and $\calF$ be the rooted spanning forest obtainted by \textsc{Mrsf}$(V_K, \calP)$ (Procedure~\ref{proc:mrsf}). 
First, we have the following simple claim on the cost of $\calF$.
\begin{restatable}{claim}{ClmCostBwGroups}
\label{clm:cost-between-groups} 
Let $S$ denote the set of all pick-up locations in $R$. For any request partition $\calP$, let $\calF=\textsc{Mrsf}(V_K,\calP)$ (Procedure~\ref{proc:mrsf}) and $\cost(\calF):=\sum_{e\in \calF}d(e)$, then we have:
$$\cost(\calF)\le \cost(\calA^*), $$ 
where $\calA^*$ is any feasible solution to the original Dial-a-Ride problem.
\end{restatable}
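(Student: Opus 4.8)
The plan is to exhibit, starting from an arbitrary feasible DaRP solution $\calA^*$, a rooted spanning forest on $V_K\cup\calP$ (with root set $V_K$ and the edge cost $c$ of Eq.~\eqref{eq:rsf-cost}) whose cost is at most $\cost(\calA^*)$; since $\calF$ is the \emph{minimum} such forest by Claim~\ref{clm:mrsf}, and $\cost(\calF)$ is exactly the $c$-cost $c(\calF)$ of $\calF$ (each edge of $\calF$ being realized by the pair of locations attaining the corresponding $\min$ in Eq.~\eqref{eq:rsf-cost}), the claim follows immediately. \textbf{Step 1 (shortcut each walk onto requests).} Work first in the ``singleton'' version of Eq.~\eqref{eq:rsf-cost}: the vertex set $V_K\cup R$, with cost $d(s_i,s_j)$ between request-vertices $r_i,r_j$ and $d(p_k,s_i)$ between $p_k$ and $r_i$. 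Fix a walk $\walk_k^*\in\calA^*$ from $p_k$, let $r_{i_1},\dots,r_{i_m}$ be the requests it serves ordered by the first visit to their pickup locations, and let $Q_k$ be the path $p_k,r_{i_1},\dots,r_{i_m}$ in this graph. The triangle inequality applied to the segments of $\walk_k^*$ between consecutive pickups (and from $p_k$ to the first pickup) gives $\cost(Q_k)\le\cost(\walk_k^*)$. Since in a feasible solution each request is served by exactly one vehicle and each depot starts exactly one walk, the $Q_k$ are pairwise vertex-disjoint and jointly span $V_K\cup R$, so $\hat\calF:=\bigcup_k Q_k$ is a rooted spanning forest on $V_K\cup R$ with $\cost(\hat\calF)\le\sum_k\cost(\walk_k^*)=\cost(\calA^*)$.

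\textbf{Step 2 (contract requests into groups).} Map each request-vertex $r_i$ to the unique $P\in\calP$ containing it, and push every edge of $\hat\calF$ through this contraction. An edge $(r_i,r_j)$ of $\hat\calF$ with $r_i,r_j$ in the same group becomes a self-loop, which we discard; otherwise it becomes an edge $(P,P')$ with $c(P,P')\le d(s_i,s_j)$ by definition of $c$, and likewise $(p_k,r_i)\mapsto(p_k,P)$ with $c(p_k,P)\le d(p_k,s_i)$. Let $H$ be the resulting multigraph on $V_K\cup\calP$ with self-loops removed; then $c(H)\le\cost(\hat\calF)\le\cost(\calA^*)$. Moreover $H$ touches every vertex of $V_K\cup\calP$ (each depot starts a walk, and every group is nonempty with all its requests served), and since edge contraction preserves connectivity, every connected component of $H$ contains at least one depot: any group $P$ is, in $\hat\calF$, joined to the depot of the walk that served one of its requests, and this connection survives contraction.

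\textbf{Step 3 (extract the forest).} In each connected component of $H$ choose a spanning tree and designate one of its depots as the root; let $\calF'$ be the union of these trees. This meets conditions (1)--(3) of Definition~\ref{def:rooted-spanning-forest}: the trees are disjoint, each is rooted in $V_K$, and every non-root vertex lies in some tree (surplus depots inside a tree are simply non-root vertices of $V_K$, which the definition allows). Hence $\calF'$ is a rooted spanning forest on $V_K\cup\calP$ with $c(\calF')\le c(H)\le\cost(\calA^*)$, and minimality of $\calF$ gives $\cost(\calF)=c(\calF)\le c(\calF')\le\cost(\calA^*)$, as required.

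\textbf{Where the difficulty lies.} The only genuinely delicate point is Step 2: a single group's pickup locations may be spread across many walks of $\calA^*$, so contracting that group to a single vertex glues together pieces originating from different walks, and one must verify that (i) no group becomes disconnected from all depots and (ii) the $c$-cost does not increase under contraction. Both are consequences of the facts that edge contraction preserves connectivity and that the edge costs in Eq.~\eqref{eq:rsf-cost} are defined as minima; everything else is routine triangle-inequality shortcutting.
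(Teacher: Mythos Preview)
Your proof is correct and follows essentially the same three-stage approach as the paper: shortcut each walk of $\calA^*$ onto its pickup locations via the triangle inequality, contract each group of $\calP$ to a single vertex (noting that the costs in Eq.~\eqref{eq:rsf-cost} are minima, so cost can only drop), and then compare with the minimum rooted spanning forest $\calF$. Your write-up is in fact more careful than the paper's on one point: after contraction you explicitly extract a spanning tree from each connected component and verify it satisfies Definition~\ref{def:rooted-spanning-forest}, whereas the paper simply calls the contracted object a ``rooted spanning subgraph'' and invokes minimality of $\calF$ directly.
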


\begin{proof}
The solution $\calA^*$, being a collection of walks, can also be thought as a rooted spanning forest over all the pick-up (as well as the drop-off) locations: View each walk as a sequence of weighted edges (with weight given by metric $d$), then every $s\in S$ is connected to some vehicle from $V_K$ as a root. Now we modify $\calA^*$ in 3 steps to make it also a RSF on $V_K\cup\calP$ using the same cost function $c$ (Eq~\eqref{eq:rsf-cost}): 
\begin{enumerate}
    \item Shortcut every drop-off locations to make $\calA^*$ a rooted spanning forest on $S$ only. Since the distance $d$ is a metric, this only reduces $\calA^*$'s cost; 
    \item Then we contract each group of $\calP$ to a single vertex, which preserves only between-group edges of $\calA^*$. This again only reduces its cost;
    \item Finally, since each remaining edge is either between two groups or between a group and a vertex in $V_K$, we can reassign its cost using $c$. By definition of $c$, this only reduces the total cost.
\end{enumerate}
Denote the resulting graph as $\calA'$. By construction, it is a rooted spanning subgraph over $V_K\cup\calP$ with at most the same cost of $\calA^*$. Then by definition we have $\cost(\calF)\leq\cost(\calA')\leq\cost(\calA^*)$. 
\end{proof} 

Now we give the proof for the main theorem.
  
\begin{proof}[Proof of Theorem~\ref{thm:approx-darp}]
Consider any optimal solution $\calA'$, which is also a collection of walks starting from locations in $V_K$. Applying Fact~\ref{fact:construction} to each walk of $\calA'$, we get a new solution $\calA^*$ such that (1) every walk of $\calA^*$ serves requests in a group-by-group manner, where each group is of size at most $\lambda$; and (2) $\cost(\calA^*)\leq\cost(\calA')\cdot O(\log n)=O(\log n)\OPT$.

Then we show $\cost(\calA)$ can be bounded by $O(\sqrt{\lambda})\cdot \cost(\calA^*)$, which will give $\cost(\calA)\leq O(\sqrt{\lambda}\log n)\cdot\OPT$.
$\cost(\calA)$ can be decomposed into two parts: the cost of traveling along edges in $\calF$, and the cost of moving within each group. By the nature of DFS, each edge of $\calF$ is traversed exactly twice, therefore the first part of the cost is at most $2\cost(\calF)\leq2\cost(\calA^*)$ by Claim~\ref{clm:cost-between-groups}.

For the second part of cost, we fix a request group $P$ and consider the total travel distance of the vehicle within $P$. Recall $s_0(P)$ is the ``portal'' connecting $P$ with its parent, and $\bfw_P$ is the walk given by Theorem~\ref{thm:partition} that serves all requests in $P$. Let $s'$ be the starting location of $\bfw_P$. The vehicle first moves from $s_0(P)$ to $s'$, which takes at most $\cost(\bfw_P)$. It then serves all of $P$ by following $\bfw_P$, which takes another $\cost(\bfw_P)$. The vehicle then moves back to $s'$ and traverse $\bfw_P$ again to recursively serve all children groups of $P$, and finally moves to $s_0(P)$. This process will cost at most another $3\cost(\bfw_P)$. So overall the travel distance within $P$ is at most $5\cost(\bfw_P)$. (Note this is apparently not the most efficient moving strategy, but it suffices to give the desired bound)

To summarize, 
$\cost(\calA)\leq 2\cost(\calA^*)+5\sum_{P\in\calP}\cost(\bfw_P)\leq O(\sqrt{\lambda})\cdot\cost(\calA^*)$,
where the last inequality is by Theorem~\ref{thm:partition}. This concludes our proof.
\end{proof} 

\section{Computational Experiments}\label{sec:experiments}

\subsection{DaRP Instances and Baselines}

\noindent\textbf{Synthetic datasets.}
We benchmark HGR on two synthetic datasets. In the first dataset (SY-U), locations (i.e., request pickups, drop-offs and drivers' initial locations) are randomly generated from a uniform distribution on a $[0,100]^2$ grid. In the second dataset (SY-G), locations are randomly generated from a mixed-Gaussian distribution with $Z$ clusters. Each cluster corresponds to a bivariate Gaussian distribution whose center is drawn from a uniform distribution on a $[0,1000]^2$ grid and with covariance matrix given by $\sigma^2\mathbf{I}$. For the SY-G distribution, several combinations of parameters $Z$ and $\sigma$ are tested, as detailed in Table~\ref{tbl:param-setting}.

%We use the following synthetic datasets to verify the efficacy of our algorithm. The data is generated from a XX distribution on $\mathbb{R}^X$. Formally, for given $n$, $\lambda$, and $m$, we first sample XX . Then we independently and uniformly sample $n$ pickup locations, $n$ dropoff locations, and $m$ initial vehicle locations, all from the XX distribution. We also test how the algorithm performs under different data distributions by varying the parameter XX and XX of the XX distribution. The parameter settings are summarized in Table~\ref{tbl:param-synthetic}.

\vspace{0.5\baselineskip}
\noindent\textbf{Realworld datasets.}
We also test the algorithms on two realworld datasets consisting of transportation data from New York City (NYC) and San Francisco (SFO). 
\begin{itemize}
    \item NYC: We use NYC Taxi \& Limousine Commission Trip Record Data \cite{NYC-TLC-trip-rec}. In particular, we randomly select 10,000 trip records from May/2016.
    \item SFO: We use the Cab Spotting Data \cite{SFO-cab-rec}, which records roughly 500 taxis' trace data in a period of 30 days. Again, we randomly select 10,000 trip records from the original dataset.
\end{itemize}

In these datasets a location is specified by its latitude and longitude coordinates. The distance between two locations is defined to be the graph (shortest-path) distance calculated via the actual road map of the two cities, which we obtain from the OpenStreetMap \cite{open-street-map}. All parameter settings are detailed in Table~\ref{tbl:param-setting}.
 
\begin{table}[!htbp]
\small
\centering
\subfloat{
\begin{tabular}{ccc} 
 \toprule
 & SY-U, NYC and SFO & SY-G\\
 \midrule
 $n$ & 2, 4, 6, \textbf{8}, 10 $(\times10^3)$& 2, 4, \textbf{6} $(\times10^3)$\\
% \hline
 $m$ &  30, 60, \textbf{90}, 120, 150 & 30, 60, \textbf{90}\\
% \hline
 $\lambda$ & 2, 4, 8, 16, \textbf{32}, 64 & 4, 8, \textbf{16}\\
%  \hline
 $Z$ && 5, \textbf{10}, 20, 50, 100 \\
% \hline
 $\sigma$ && 5, 10, 25, \textbf{50}, 100, 250 \\
 \bottomrule
\end{tabular}
}
\\
\caption{\label{tbl:param-setting} Parameter settings for the datasets. Bold values indicate fixed parameter values for the sensitivity analyses.}
\end{table}

% \vspace{-1mm}
%\vspace{0.5\baselineskip}
\noindent\textbf{Metrics and Baselines.} We measure the performance of the algorithms with respect to two objectives. The first is the \emph{total travel distance}, which is what \HRA is set to optimize. The second objective is the \emph{total in-transit latency}: the in-transit latency for a rider is the time length he/she is on board, i.e., the time between being picked-up and finally dropped-off at his/her destination. Total in-transit latency corresponds to the sum of in-transit latency over all rides. This objective is crucial for customer experience as most riders prefer to reach their destination as quickly as possible after being picked-up.

For baselines, since the Dial-a-Ride problem is a classical problem in operations research, there are too many proposed methods for an exhaustive comparison. However, many popular heuristic or exact algorithms are aiming at solving moderate-sized instances with at most 1000 requests (see e.g.\cite{ho2018survey} for a recent comprehensive benchmark\footnote{ The benchmark results are also available at \url{https://sites.google.com/site/darpsurvey/instances}}), while our motivation is to solve, quickly and with a \emph{worst-case theoretical guarantee}, large-scale dial-a-ride problems that appear in online ride-sharing applications. Therefore, we select the following two baselines in the experiment:
\begin{itemize}
    \item \PGP\cite{TongZZCYX18-GDP}: This is a heuristic algorithm that builds routes incrementally by greedily inserting new requests. It is designed to optimize the total travel distance and can be implemented very efficiently, though no approximation guarantee is known.
    \item \FESI\cite{ZengTC19-LMD}: This algorithm optimizes the \emph{makespan} of vehicles, i.e., the maximum distance traveled by the vehicles. But as claimed in their paper, \FESI also obtains small total travel distance on various datasets, often comparable with \PGP. We remark that \FESI has a $O(\sqrt{\lambda}\log n)$ approximation guarantee in terms of the makespan objective, but no guarantee for total travel distance is known.
    %\item \CR\cite{charikar1998finite}: This is the DaRP algorithm with the best theoretical guarantee so far.
\end{itemize}
The two baselines aim at a very similar or exactly the same application scenario. Both baselines (as well as our algorithm) are able to handle thousands of requests efficiently. Besides, both \PGP and \FESI report comparisons with many popular OR algorithms (including the ALNS~\cite{gschwind2019ALNS} algorithm which gives top performance in the benchmark \cite{ho2018survey} mentioned above), and generally exhibit superior performance in most large instances. Thus, we believe the two methods are suitable representatives of the state-of-the-art.

\vspace{0.5\baselineskip}
\noindent\textbf{Implementation.}
We use the publicly-available code provided by \cite{ZengTC19-LMD} for \FESI. Other algorithms (including ours) are implemented in C++. All experiments are conducted on a single core of an Intel\textsuperscript{\textregistered} Xeon\textsuperscript{\textregistered} Gold 6130 (2.1GHz) processor with 32GB of available RAM. As \FESI is a randomized algorithm, we run \FESI 10 times on each instance and report the average results. Our implementations, with which all results can be reproduced, is publicly available at {\emph{https://github.com/amflorio/hga-dial-a-ride}.

\subsection{Computational Results}

%\subsection{Results on the synthetic datasets}
%Due to space limit, we only present some representative results, and put the complete results in the appendix. 
Figure~\ref{fig:synthetic-Uniform} to \ref{fig:synthetic-GMM-2} depict the result on synthetic data sets. Results on realworld datasets are shown in Figure~\ref{fig:nyc-nonEuclidean} and \ref{fig:sfo-nonEuclidean}. Overall, our algorithm exhibits clear superiority on both objectives in almost all parameter regimes and datasets. Now, we discuss the effect of each parameter in more details.

\vspace{0.5\baselineskip}
\noindent\textbf{Synthetic datasets.} 
Figure~\ref{fig:synthetic-Uniform} and \ref{fig:synthetic-GMM-1} illustrates the effect of $n,m,\lambda$ on the algorithms' performance, for data generated from uniform and GMM distribution, respectively. We remark that the trend on the synthetic datasets are much similar to that on the realworld datasets (Figure~\ref{fig:nyc-nonEuclidean} and \ref{fig:sfo-nonEuclidean}), therefore we postpone a detailed discussion on the effect of $n,m,\lambda$ to the later part when reporting results on realworld datasets. Generally speaking, for the total travel distance objective, our algorithm \HRA consistently performs the best, while \FESI is the worst. When it comes to in-transit latency, our algorithm still achieves the best objective in most parameter regime, but \FESI is able to exploit more or larger vehicles, and is likely to provide better latency.

\begin{figure}[!htbp]
\captionsetup[subfigure]{labelformat=empty}
   \centering
   \subfloat%[Total travel dist]
   {
     \includegraphics[width=0.9\textwidth]{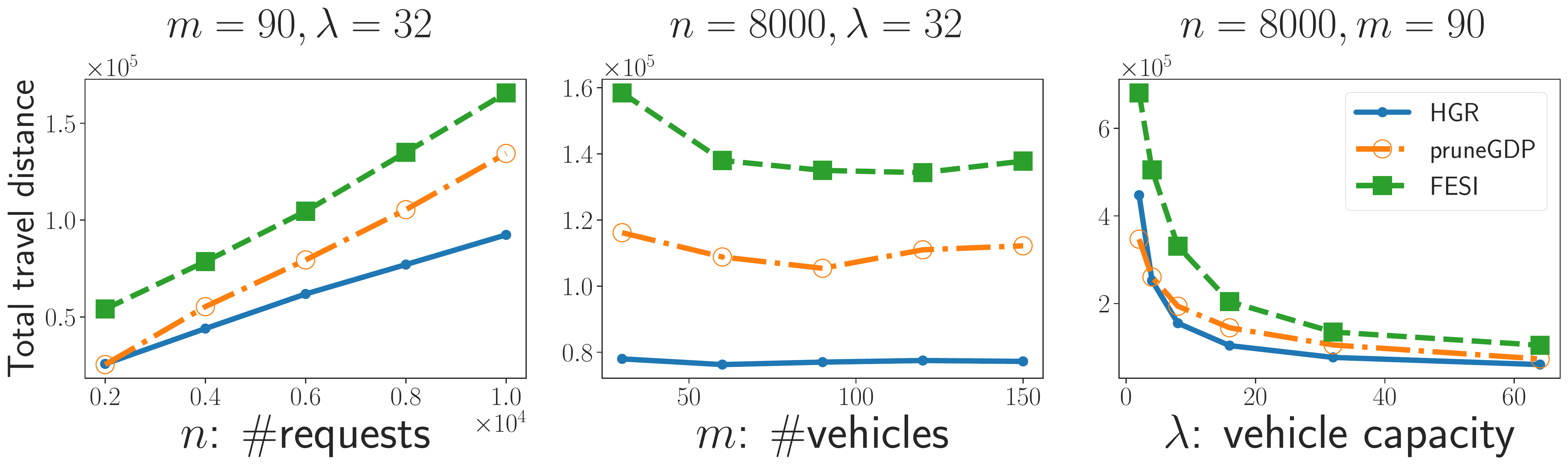}
   }
   \\
   \subfloat%[Makespan]
   {
     \includegraphics[width=0.9\textwidth]{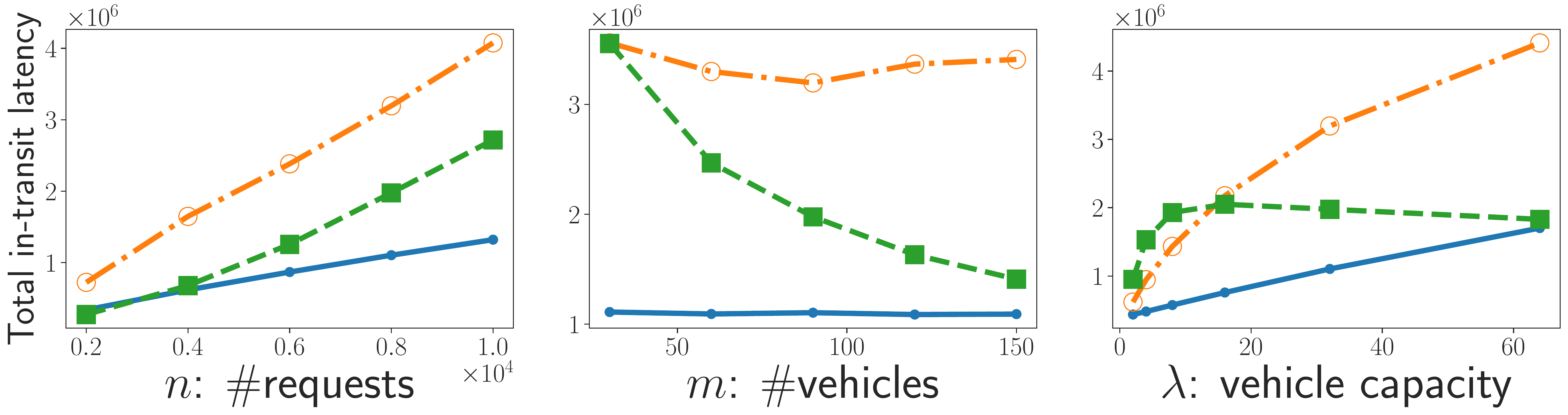}
   }
   \caption{Results on the Uniform synthetic dataset (SY-U) with varying $n, m, \lambda$.}
   \label{fig:synthetic-Uniform}
\end{figure}

\begin{figure}[!htbp]
\captionsetup[subfigure]{labelformat=empty}
   \centering
   \subfloat%[Total travel dist]
   {
     \includegraphics[width=0.9\textwidth]{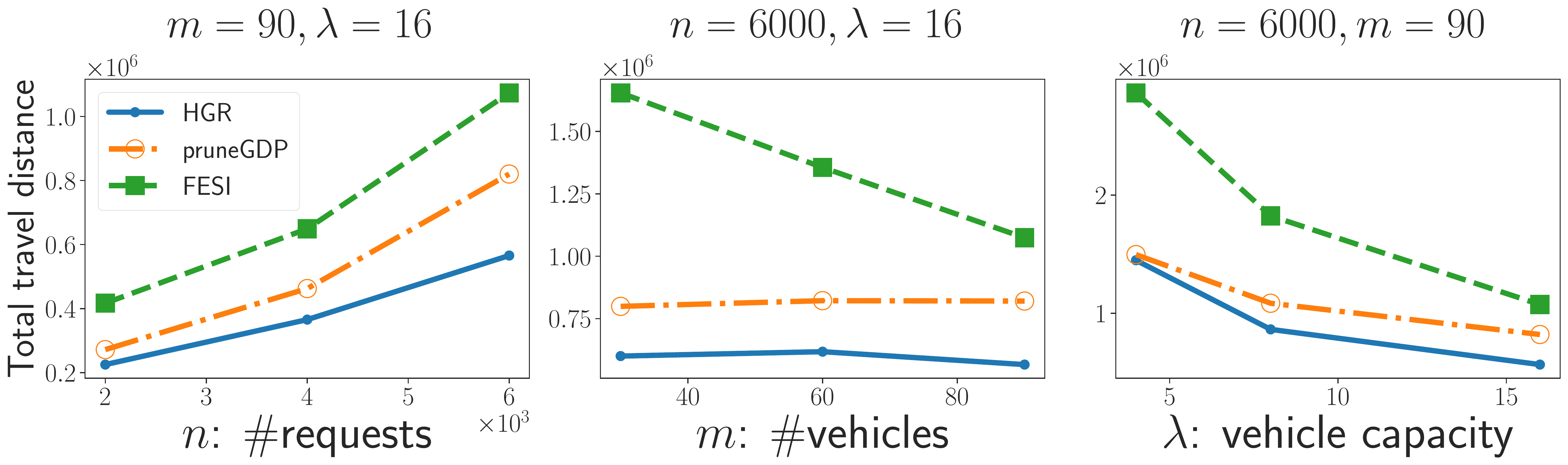}
   }
   \\
   \subfloat%[Makespan]
   {
     \includegraphics[width=0.9\textwidth]{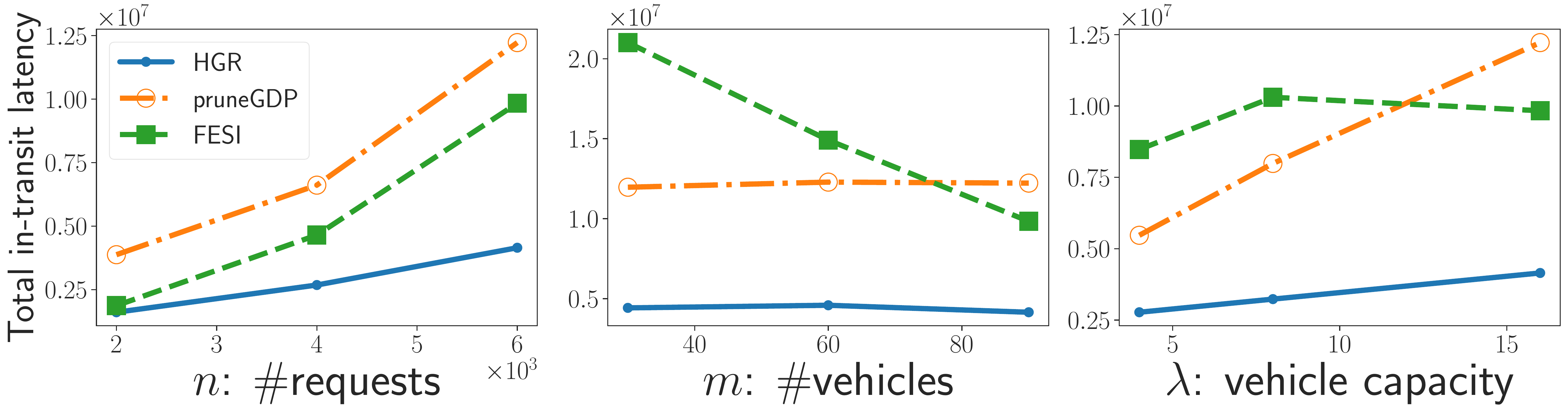}
   }
   \caption{Results on the GMM synthetic dataset (I): varying $n, m, \lambda$.}
   \label{fig:synthetic-GMM-1}
\end{figure}

\begin{figure}[!htbp]
    \centering
    \includegraphics[width=0.7\textwidth]{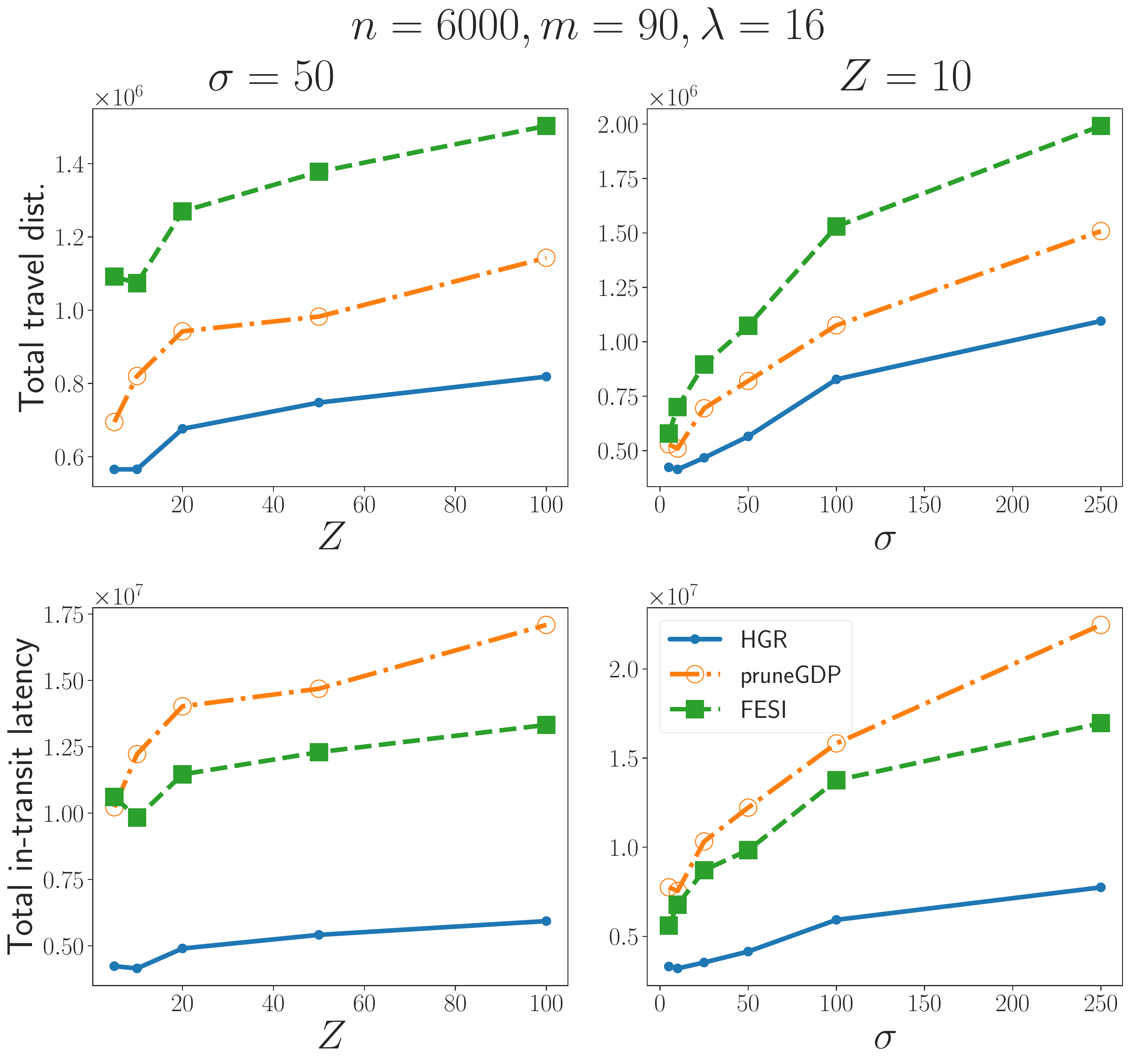}
    \caption{Results on the GMM synthetic dataset (II): varying $Z$ and $\sigma$.\label{fig:synthetic-GMM-2}}
\end{figure}
Figure~\ref{fig:synthetic-GMM-2} shows how the distribution parameters (specifically, the number of clusters and covariance of the GMM) affects algorithm performance
The first two columns of Figure~\ref{fig:synthetic-GMM-2} show the effect of varying the number of clusters ($Z$) of the GMM, and the remaining two show the effect of varying $\sigma$. Generally speaking, the more spread-out the data are, the larger advantage our algorithm has: when $\sigma$ is very small, the performance of all three algorithms are very close to each other on both objectives. This is expected as all the requests are highly concentrated around only $Z=10$ centers, which makes good choices of routes very limited. On the other hand, when $\sigma$ is larger (which has the similar effect as larger $Z$ with fixed $\sigma$), our algorithm exhibits clear superiority, with over 50\% less total in-transit latency than \FESI or \PGP, and 30\% less total travel distance. \HRA is also much less sensitive to the change of $Z$ or $\sigma$ compared with the two baselines.

\vspace{0.5\baselineskip}
\noindent\textbf{Realworld data.}
In terms of the effect of $n,m,\lambda$, results on the two realworld datasets as well as the synthetic datasets are quite similar, so we will take one of them as example.
Figure~\ref{fig:nyc-nonEuclidean} shows the result on the NYC dataset. Both the total travel distance and in-transit latency grow with $n$, as expected, and the gap between \HRA and the baselines also grows with $n$ (the 1st column of Figure~\ref{fig:nyc-nonEuclidean}). Notice that \FESI performs better than \PGP in terms of in-transit latency, because it explicitly optimizes makespan and results in shorter per-vehicle trips.

\begin{figure}[!htbp]
\captionsetup[subfigure]{labelformat=empty}
   \centering
   \subfloat%[Total travel dist]
   {
     \includegraphics[width=0.9\textwidth]{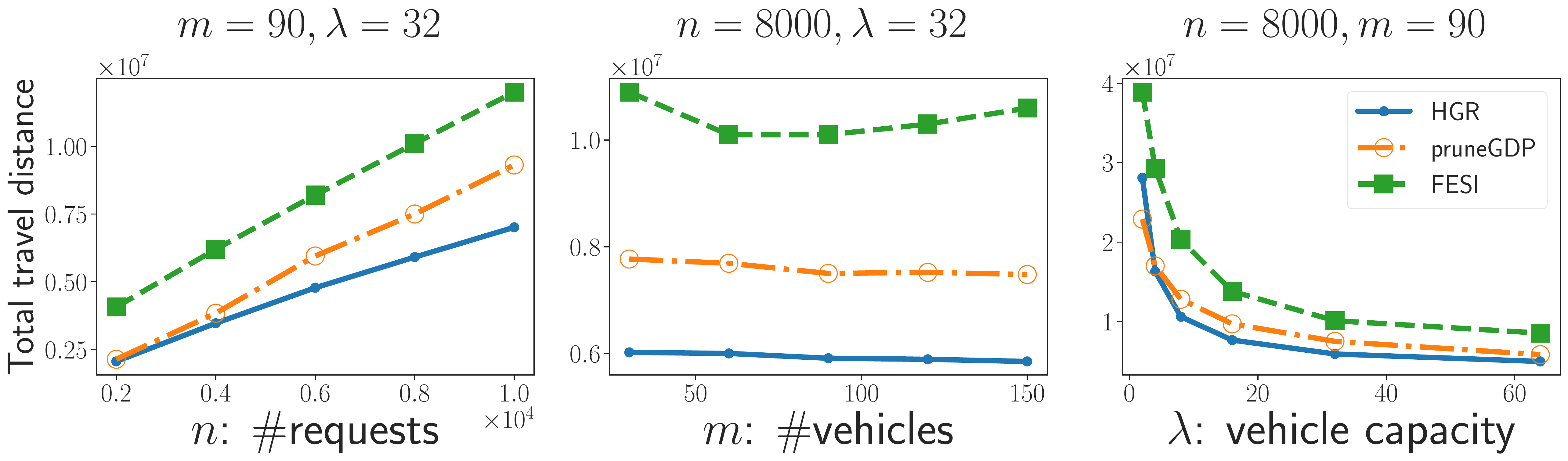}
   }
   \\
   \subfloat%[Makespan]
   {
     \includegraphics[width=0.9\textwidth]{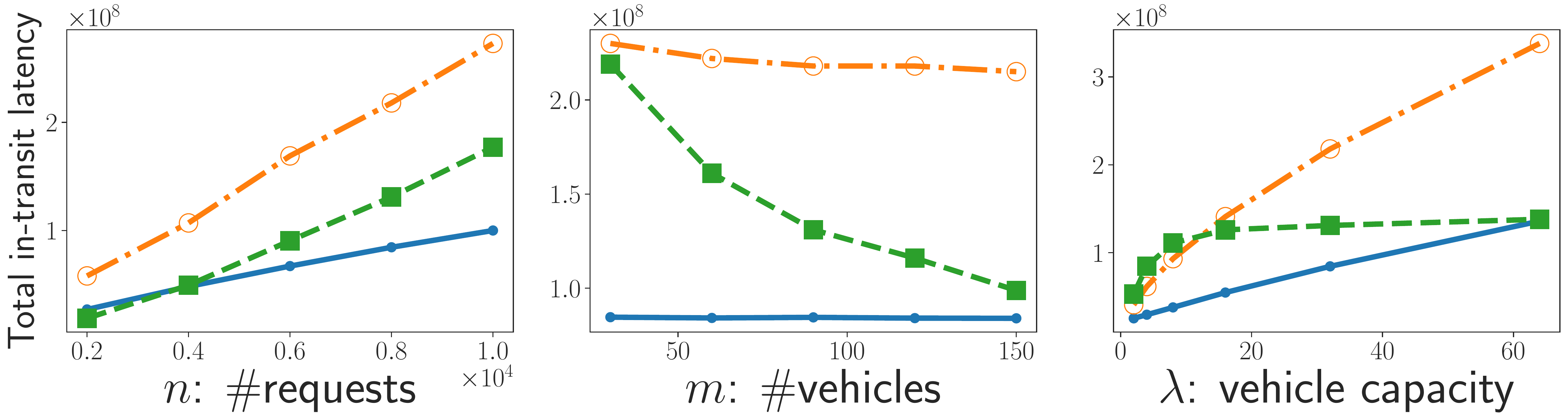}
   }
   \caption{Results on NYC dataset.}
   \label{fig:nyc-nonEuclidean}
\end{figure}

\begin{figure}[!htbp]
\captionsetup[subfigure]{labelformat=empty}
   \centering
   \subfloat%[Total travel dist]
   {
     \includegraphics[width=0.9\textwidth]{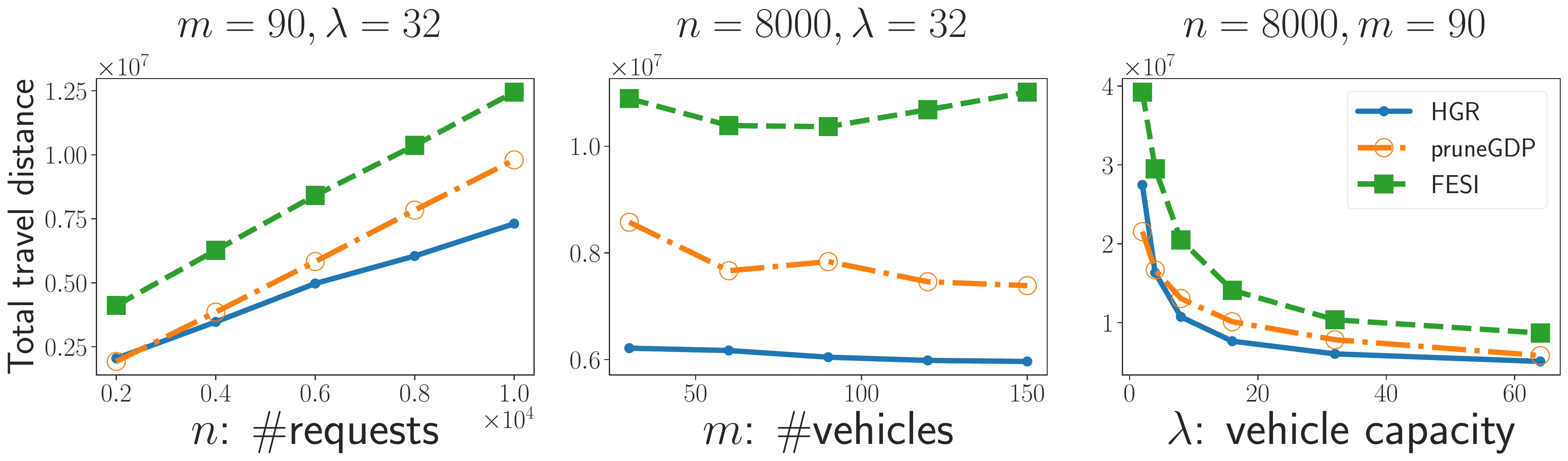}
   }
   \\
   \subfloat%[Makespan]
   {
     \includegraphics[width=0.9\textwidth]{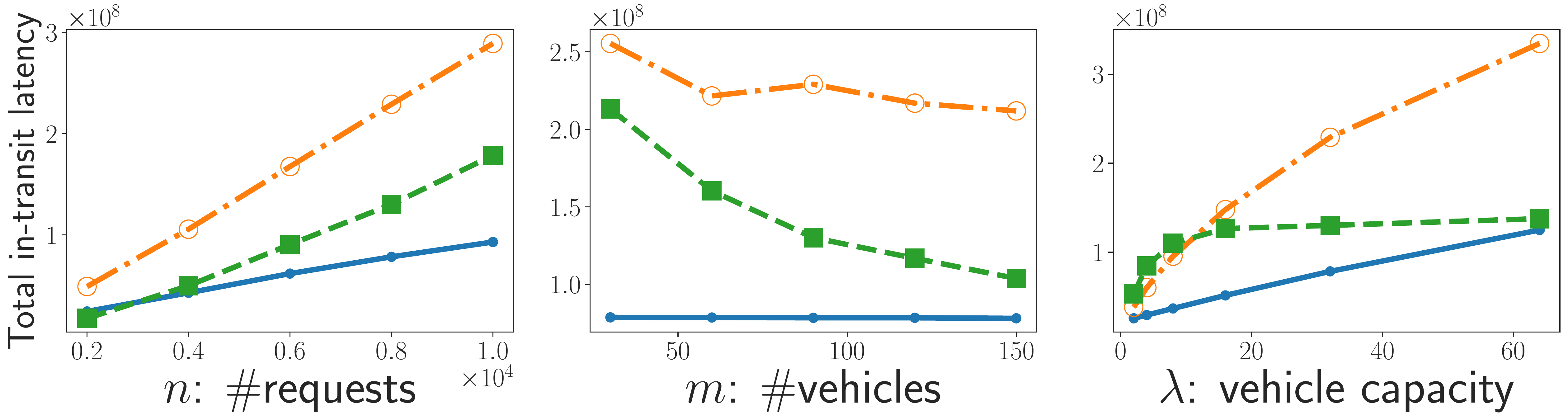}
   }
   \caption{Results on the SFO dataset.}
   \label{fig:sfo-nonEuclidean}
\end{figure}
The more interesting part is when we fix $n$ and $\lambda$, and vary the number of vehicles $m$ (the 2nd column of Figure~\ref{fig:nyc-nonEuclidean}). Notice that the latency of our algorithm (\HRA) is almost unaffected by $m$: This is because the first phase (i.e., Algorithm~\ref{alg:partition_notexact}) of \HRA is independent of vehicle locations. After the requests have been partitioned into groups, the in-transit latency is essentially determined no matter how we assign vehicles to these groups. A surprising fact is that the total travel distance of \HRA is also little affected by $m$. After inspecting the actual routes generated by \HRA, we find that although there are many vehicles available, the minimum spanning forest $\mathcal{F}$ found in Algorithm~\ref{alg:assignandroute} uses only a few vehicles.

% One way to mitigate this and utilize more cars is, instead of computing a minimum spanning forest, we compute a shallow-light forest\cite{}. 

The 3rd column of Figure~\ref{fig:nyc-nonEuclidean} shows the result where we fix $n$ and $m$ and vary $\lambda$. Larger $\lambda$ generally leads to less travel distance, since larger capacity allows more flexible choices of routes. Our algorithm still outperforms the two baselines in both objectives, but from the last plot one can expect \FESI to have better latency when $\lambda$ is larger: this is, again, because \FESI aims to optimize makespan, thus larger $\lambda$ does not necessarily lead to longer per-vehicle routes, while it is the opposite for \HRA and \PGP.

We remark that, for both the synthetic datasets and the real-world datasets, the effect of varying $n,m,\lambda$ are much the same.

\vspace{0.5\baselineskip}
\noindent{\bf Running Time Analysis.} 
In the Grouping phase, we construct at most $n^2$ minimum spanning trees in each iteration $\ell<\log \lambda$, each of them can be constructed in time $O(2^{2\ell} \log  (2^\ell))$~\cite{pettie2000optimal}; 
The most time-consuming part is step~\ref{step:cluster-matching} where we compute a minimum-cost perfect matching in each iteration, and the currently best-known algorithm takes time $O(n^3)$~\cite{gabow1990data} in a complete graph; 

Since there are $\log \lambda$ iterations, the running time of Grouping phase is $O(n^3 \log \lambda+ n^2 \lambda^2 \log \lambda)$. In the Routing phase, we find minimum spanning forest in time $O(n^2+m)$ and route the walks in time $O(mn)$. Obviously, $ m\le n$. In total, the running time is $O(n^3 \log \lambda+ n^2 \lambda^2 \log \lambda)$.

%We remark that our actual implementation of the algorithm has a worse complexity than the theoretical bound above: For the min-cost matching step, instead of using the best theoretical algorithm with  complexity $O(n^3)$, we use an efficient implementation (called \emph{Blossom V}~\cite{kolmogorov2009blossom}) of the classical Edmonds' Blossom Algorithm \cite{edmonds1965maximum,edmonds1965paths}. 
In our actual implementation of \HRA, we use The Blossom V~\cite{kolmogorov2009blossom} matching algorithm due to its widespread use in practice, in spite of having a slightly worse theoretical guarantee. 
%implementation is widely used in practice although it has a $O(n^4)$ worst-case complexity, partly because such worst-case running time only appears on very exotic instances which are rare in applications.
In the experiments, for inputs consisting of 10,000 requests and 150 vehicles with capacity 64, our algorithm takes less than 800 seconds to finish. In a more practical instance with 4,000 requests and 90 vehicles with capacity 8, \HRA takes about 80 seconds. 

Although reasonably fast, our basic implementation of \HRA turns out be significantly slower than \FESI and \PGP, both of which have only an $O(n^2)$ dependence on the number of requests $n$. 
For example, on the largest input mentioned above ($n$=10,000, $m$=150, $\lambda$=64), our algorithm (\HRA) is $20$ times slower than \FESI and \PGP. In Section~\ref{app:sec-scalability}, we show how to implement a much more scalable version of \HRA by replacing several components of the vanilla algorithm with their \emph{approximate versions} while not sacrificing a lot in the quality of the solution.

\section{Scalability with large instances}\label{app:sec-scalability}
\subsection{The \HRAapprox Algorithm}
As discussed at the end of Section~\ref{sec:experiments}, the dominating factor of the running time is from the matching step (Line~\ref{step:cluster-matching} of Algorithm~\ref{alg:partition_notexact}). Although in theory there exist near linear-time (i.e., $O(n^{2+\epsilon})$, because the input here can have $O(n^2)$ edges) algorithms that output a near-optimal perfect matching\cite{DuanPettie10}, which in principle can reduce our algorithm's running time to $O(n^2)$ (assuming $\lambda$ being a relatively small constant), with some negligible loss in approximation ratio. But such matching algorithms are sophisticated and not easy to implement in practice.

We therefore resort to simpler \emph{approximations}: instead of finding the min-cost perfect matching, we find a non-optimal matching using a ``bucketing'' method (details see below). Besides, the ``edge cost'' $w$ (Definition~\ref{def:costfunction}) used in our graph requires computing multiple MST over the two clusters, which is quite costly. Specifically, those MST computations come from evaluating $w_1(\cdot, \cdot)$ between groups. We therefore replace $w_1$ with a simpler cost function that approximates it. Specifically, we implement the two following approximate versions of \HRA:
\begin{itemize}
\item \HRAwone: We define a new cost $w'_1$ between any two groups $X$ and $X'$ as $$w'_1(X,X'):=\min_{r_i\in X, r_j\in X'}  d(s_i, s_j) + \min_{r_i\in X, r_j\in X'}  d(t_i, t_j).$$ The new algorithm \HRAwone still finds the minimum-cost \emph{perfect} matching (like \HRA), but uses $w'_1$ in place of $w_1$.

\item \HRAapprox: This algorithm builds upon \HRAwone. In addition to using $w'_1$ in place of $w_1$, \HRAapprox finds a perfect matching using a "bucketing" heuristic: Suppose the largest edge cost is $\Delta$. We first divide all edges into $O(\log \Delta)$ buckets, where the $i$-th bucket contains all edges with cost $[(1+\delta)^{i-1}, (1+\delta)^i)$, where $\delta$ is a small constant. Then for each bucket we compute a \emph{maximal} matching using only the edges of the bucket.

The maximal matching is computed using a simple greedy method: start with an empty matching, the algorithm greedily chooses the lowest-cost edge that is disjoint with the current matching and include to the solution.

It is straightforward to implement this heuristic in $O(n^2\log \Delta)$ time.
\end{itemize}

The \HRAwone algorithm still has an $O(n^3)$ dependence with $n$ since it needs to find a perfect matching, though it avoids the $O(n^2\lambda^2\log\lambda)$ additive factor. The \HRAapprox algorithm instead runs in $O(n^2\log \Delta)$ time.

\subsection{Experimental Evaluation}
We inherit most of the experiment settings from Section~\ref{sec:experiments}, and test \HRAapprox on much larger instances. The results are quite similar, so we only plot some representative results on the NYC dataset. We are still comparing \HRAapprox with \PGP and \FESI. To examine how the approximation affects solution quality, we also include comparison with the original \HRA in some smaller instances as in~\ref{sec:experiments}. (The original \HRA algorithm is too slow for instance as large as $n=10^5$ and does not end in reasonable time). 
\begin{center}
\begin{tabular}{cc} 
 \toprule
 $n$ (\#requests) & 2.5, 5, \textbf{10} $(\times10^4)$\\
% \hline
 $m$ (\#vehicles) &  1, 2.5, 5, \textbf{10}, $(\times10^3)$ \\
% \hline
 $\lambda$ (vehicle capacity) & 2, 4, 8, 16, \textbf{32}, 64 \\
 \bottomrule
\end{tabular}
\end{center}

\begin{figure}[!htbp]
\captionsetup[subfigure]{labelformat=empty}
   \centering
   \subfloat
   {
     \includegraphics[width=0.9\textwidth]{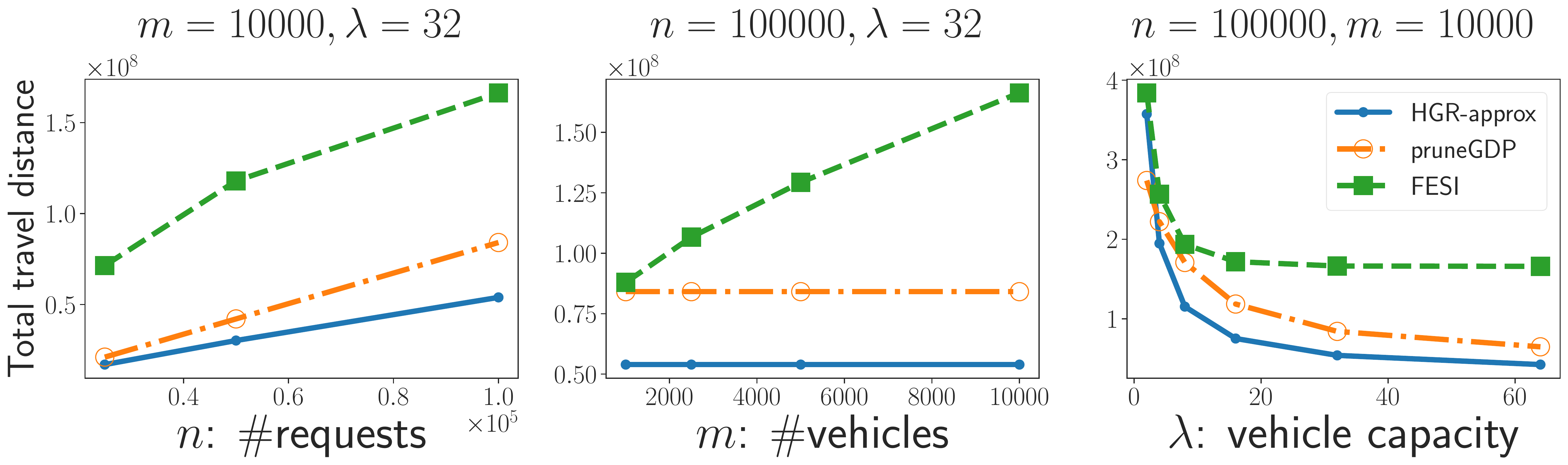}
   }
   \\
   \subfloat
   {
     \includegraphics[width=0.9\textwidth]{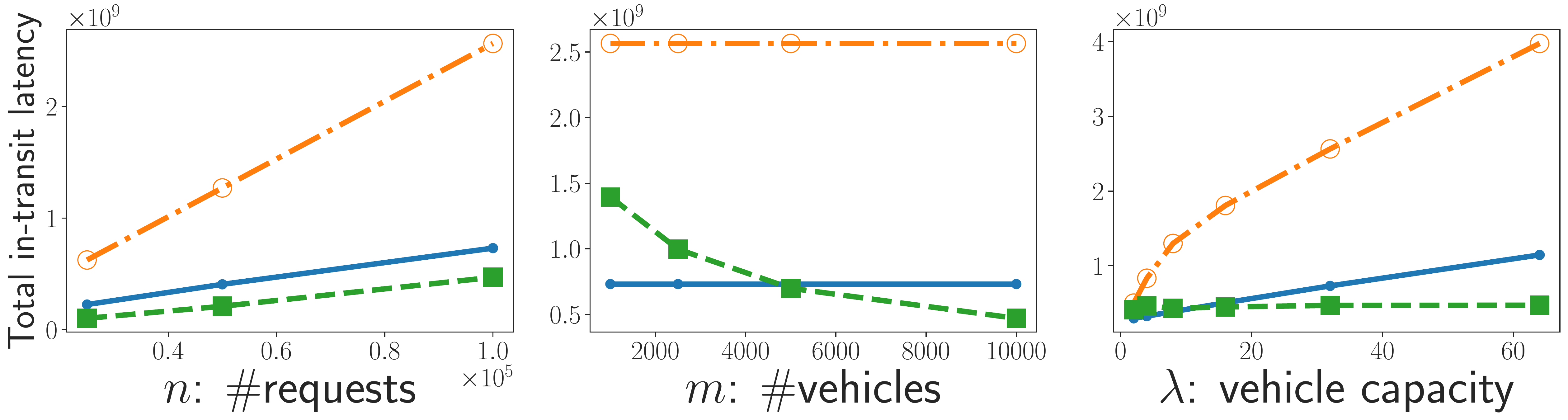}
   }
   \\
   \subfloat
   {
     \includegraphics[width=0.9\textwidth]{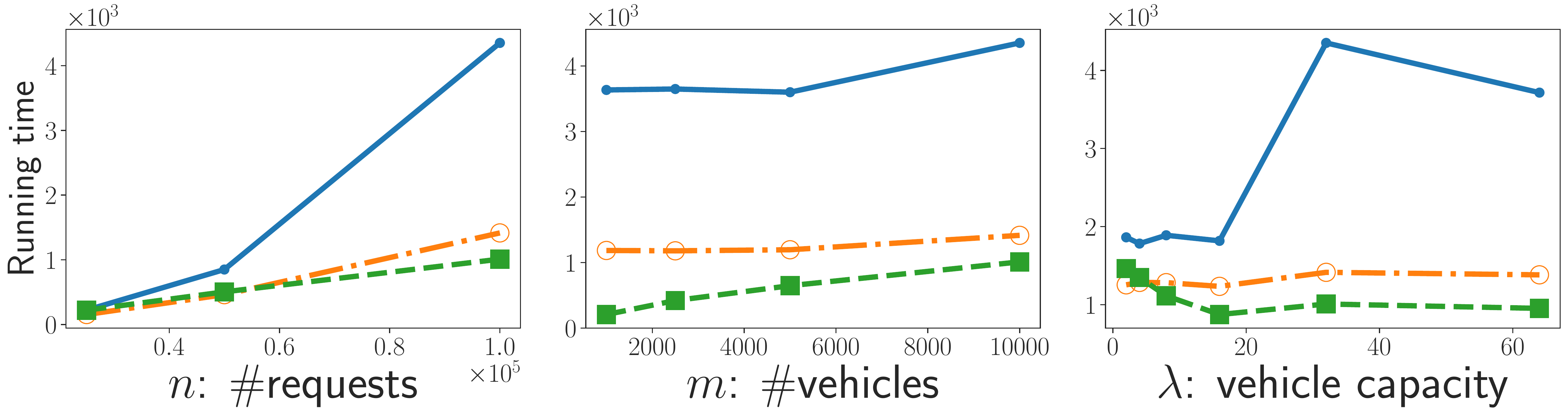}
   }
   \caption{Results on large-scale inputs from the NYC dataset.}
   \label{fig:nyc-nonEuclidean-large}
\end{figure}
Figure~\ref{fig:nyc-nonEuclidean-large} shows the results on large instances with up to 100k requests and 10k drivers. One can see that \HRAapprox still achieves best total travel distance (the first row), while having comparable in-transit latency (the second row) with the state-of-the-art benchmark (FESI). Note that \FESI explicitly optimize makespan (the largest travel distance of all vehicles), which often also leads to small in-transit distance since the solution is formed by many short trips, but the total distance can be very large. The running time (the third row) of \HRAapprox is, however, still larger than the two benchmarks, though the difference ($\sim4\times$) is much smaller that of the original \HRA, which does not even finish in a reasonable time on such-sized instances. This indicates that our algorithm has the potential to scale to large instances.

\begin{figure}[!htbp]
\captionsetup[subfigure]{labelformat=empty}
   \centering
   \subfloat
   {
     \includegraphics[width=0.9\textwidth]{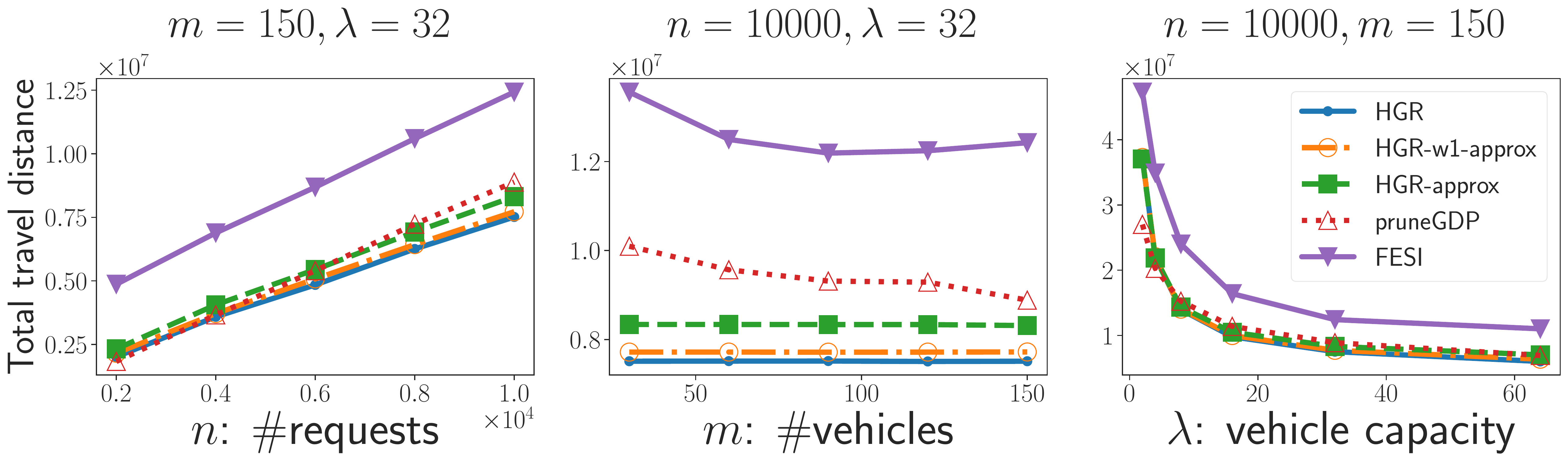}
   }
   \\
   \subfloat
   {
     \includegraphics[width=0.9\textwidth]{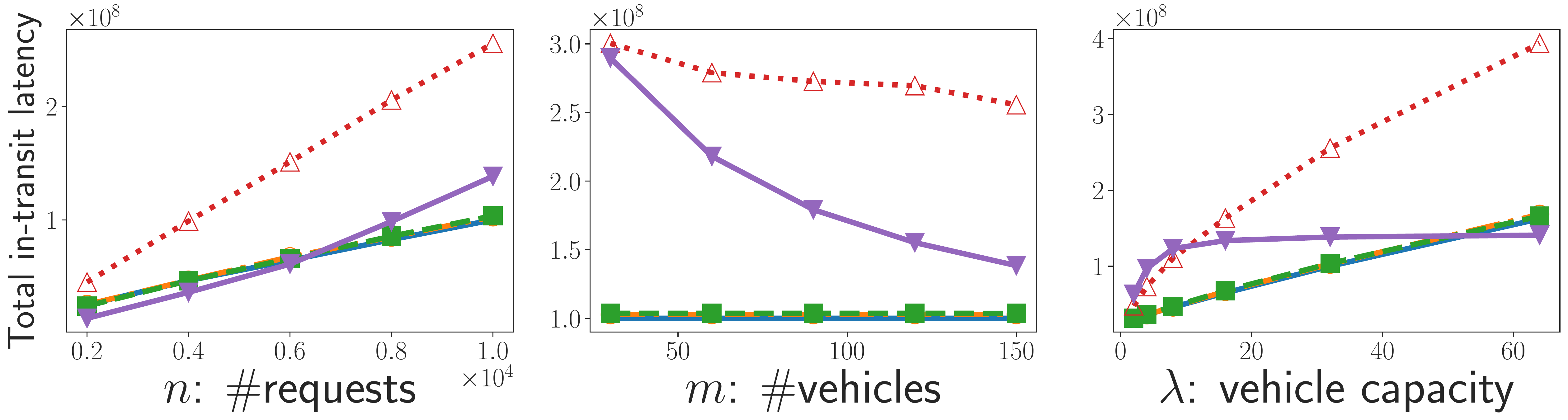}
   }
   \\
   \subfloat
   {
     \includegraphics[width=0.9\textwidth]{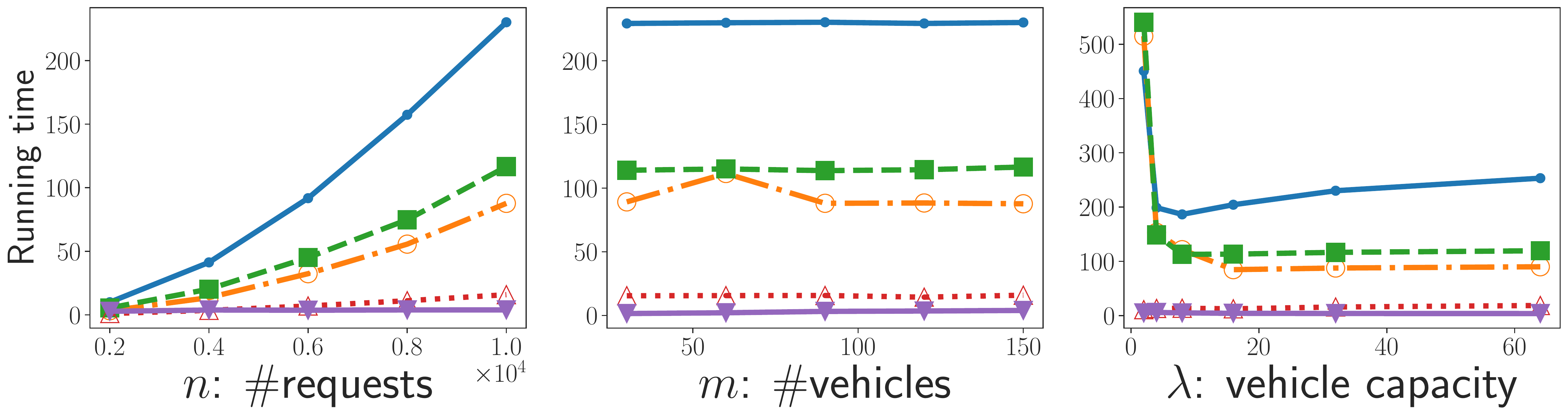}
   }
   \caption{Results on inputs from NYC dataset of the same scale as Section~\ref{sec:experiments}.}
   \label{fig:nyc-nonEuclidean-small-with-HGR-approx}
\end{figure}

Figure~\ref{fig:nyc-nonEuclidean-small-with-HGR-approx} illustrates the performance of \HRAapprox on smaller instances, with comparison to the original \HRA algorithm and \HRAwone. One can see that the approximations do lead to (slight) performance degradation: In terms of total travel distance (which is the objective our algorithm set to optimize), \HRAapprox is slightly worse than \HRAwone, which is slightly worse than \HRA. The effect on in-transit latency is even less obvious. The approximation also greatly reduces the running time of \HRA, though they are still higher than \PGP and \FESI.

In summary, we are able to accelerate \HRA significantly using some straightforward approximation or heuristics, sacrificing the solution quality only mildly. The main ingredient of algorithm --- hierarchically grouping requests --- is quite flexible and provides a good starting point to apply other routing methods. We believe our algorithm can be made even faster with cleverer optimizations, but this is beyond the scope of this paper.

 % end color{blue}
\section{Discussion}

In this paper we propose an algorithm for the multi-vehicle Dial-a-Ride problem with the objective to optimize the total travel distance of all vehicles. The ${O}(\sqrt{\lambda}\log n)$ approximation ratio of our algorithm matches that of the best known algorithm for the single-vehicle case. It is still an open problem whether this ratio can be improved even in the single-vehicle case. 
We provide three different implementations of the basic algorithm with increasing runtime efficiency. We 
experimentally demonstrate that all versions of our algorithm outperforms two recent state-of-the art heuristics for this problem on both synthetic and real world datasets. Further, we showcase scalability of the most efficient implementation to datasets of size up to 100000 requests.  

We feel that our work is a significant first step towards building theoretically sounds algorithms for multi-vehicle Dial-a-Ride that are also practical. There are several intriguing open questions that out work raises. Firstly, our algorithm does not directly handle deadline constraints that are often encountered in practice.
Note that we could heuristically incorporate deadlines using the ideas given in~\cite{ZengTC19-LMD} - first solve the problem without deadlines using \HRA and then use the insertion subroutine from~\cite{TongZZCYX18-GDP} as long as no deadline constraint is violated. However, proving similar approximation guarantees as in this work becomes much more challenging in this case. We leave this as a future research direction.  

The most efficient (approximate) version of our algorithm has an asymptotic runtime complexity of $O(n^2)$ and we also demonstrate scalability of this version experimentally. However, we believe that the runtime can be further improved by exploiting the inherent parallelism of many of the steps and utilizing algorithms developed in the recently popular Map-Reduce models of computation~\cite{LattanziMSV11-MPC}. We leave this as our second open problem.

\bibliography{main}

    \bibliographystyle{alpha}
   
\end{document}